\DeclareMathOperator*{\argmax}{arg\,max}
\DeclareMathOperator*{\argmin}{arg\,min}
\newtheorem{upperBoundAvgDMP}{Proposition}
\newtheorem{minRidgeEquivalentMAP}[upperBoundAvgDMP]{Proposition}
\title{A new foreperiod effect on single-trial phase coherence.\\
Part I: existence and relevance}
\author[1,2]{Joaquin Rapela\thanks{rapela@ucsd.edu}} 
\author[1,3]{Marissa Westerfield}
\author[3]{Jeanne Townsend} 
\author[1]{Scott Makeig}
\affil[1]{Swartz Center for Computational Neuroscience; University of
California San Diego}
\affil[2]{Instituto de Investigaci\'{o}n en Luz, Ambiente y Visi\'{o}n; Universidad
Nacional de Tucum\'{a}n--Consejo Nacional de Investigaciones Cient\'{i}ficas y
T\'{e}cnicas, Argentina}
\affil[3]{Research on Autism and Development Laboratory; University of
California San Diego}
\newglossaryentry{trial} {
    name={trial},
    description={a 500~ms-long interval following the presentation of a
standardard stimulus}
}
\newglossaryentry{epoch} {
    name={epoch},
    description={a 700~ms-long interval following the presentation of a
standardard stimulus}
}
\newglossaryentry{svSubBlock} {
    name={switch-to-vision subblock},
    description={all segments of a \gls{switchBlock} between a LOOK and a
HEAR cue.  These segments correspond to brief periods of time where subjects
had oriented their attention to the visual modality, while they were quickly
switching their attention between the visual and auditory modality in a
longer \gls{switchBlock} (Section~\ref{sec:experimentalDesign})}
}
\newglossaryentry{saSubBlock} {
    name={switch-to-audition subblock},
    description={all segments of a \gls{switchBlock} between a HEAR and a LOOK
cue.  These segments correspond to brief periods of time where subjects had
oriented their attention to the auditory modality, while they were quickly
switching their attention between the visual and auditory modality in a
longer \gls{switchBlock} (Section~\ref{sec:experimentalDesign})}
}
\newglossaryentry{standardModality} {
    name={standard modality},
    plural={standard modalities},
    description={the modality of the standard stimuli (visual or auditory) 
used to align a set of epochs (Section~\ref{sec:characterizedEpochs})}
}
\newglossaryentry{attendedModality} {
    name={attended modality},
    plural={attended modalities},
    description={the attended modality (visual or auditory) corresponding to a
set of epochs 
(Section~\ref{sec:characterizedEpochs})}
}
\newglossaryentry{meanPhase} {
    name={mean phase},
    description={mean direction (Section~\ref{sec:circularStats}) of a set of phases}
}
\newglossaryentry{DMP} {
    name={DMP},
    description={(deviation from the mean phase) single-trial measure of
\gls{ITPC} (Section~\ref{sec:dmp})}
}
\newglossaryentry{warningSignal} {
    name={warning signal},
    description={stimulus initiating a period of expectancy for a forthcoming
impendent stimulus. The LOOK and HEAR attention-shifting cues are the warning
signals in this study (Section~\ref{sec:experimentalDesign})}
}
\newglossaryentry{ITPC} {
    name={ITPC},
    description={(inter trial phase coherence) degree of phase alignment across multiple trials. Here we use the \gls{DMP} and the \gls{ITC} as measures of \gls{ITPC} for single trials and for groups of trials, respectively (Section~\ref{sec:introduction})}
}
\newglossaryentry{SFP} {
    name={SFP},
    description={(standard foreperiod) interval between the presentation of a
\gls{warningSignal} and a standard stimulus (Section~\ref{sec:introduction})}
}
\newglossaryentry{SFPD} {
    name={SFPD},
    description={(standard foreperiod duration) duration of a \gls{SFP} (Section~\ref{sec:introduction})}
}
\newglossaryentry{IC} {
    name={IC},
    description={(independent component) component from an ICA decomposition (Section~\ref{sec:icaPreprocessing})}
}
\newglossaryentry{ITC} {
    name={ITC},
    description={(inter trial coherence) a measure of \gls{ITPC} in a
group of trials (Section~\ref{sec:itcAndPeakITCFrequency})}
}
\begin{document}


\pagebreak

\setcounter{page}{1}

\maketitle

\abstract{

Expecting events in time leads to more efficient behavior. A remarkable early
finding in the study of temporal expectancy is the foreperiod effect on
reaction times; i.e., the influence or reaction time of the time period between
a warning signal and an imperative stimulus to which subjects are instructed to
respond as quickly as possible. Recently it has been shown that the phase of
oscillatory activity preceding stimulus presentation is related to behavior. 
Here we connect both of these findings by reporting a novel foreperiod effect
on the inter-trial phase coherence of the electroencephalogram (EEG) triggered
by stimuli to which subjects are instructed not to respond. 
Inter-trial phase coherence has been used to describe regularities in phases of
groups of trials time locked to an event of interest. We propose a single-trial
measure of inter-trial phase coherence and prove its soundness.
Equipped with this measure, and using a multivariate decoding method, we
demonstrate that the foreperiod duration in and audiovisual attention-shifting
task modulates single-trial phase coherence.
In principle, this modulation could be an artifact of the decoding method used
to detect it. We show that this is not the case, since the modulation can also
be observed using a simple averaging method.
We show that the strength of this modulation correlates with subject behavior
(both error rates and mean-reaction times).
We anticipate that the new foreperiod effect on inter-trial phase coherence,
and the decoding method used here to detect it, will be important tools to
understand cognition at the single-trial level. In Part II of this manuscript,
we support this claim, by showing that changes in attention modulate the
strength of the new foreperiod effect on a trial-by-trial basis.

}

\vspace{1.0\baselineskip}

\noindent Keywords: foreperiod, anticipatory phase, EEG, phase coherence,
single trial analysis, variational Bayesian linear regression, independent
component analysis.

\section{Introduction}
\label{sec:introduction}

The scientific study of temporal expectation begun at the birth of experimental
psychology when \citet{wundt1874} and \citet{woodrow14} discovered the effect
on reaction times of the duration of the interval between a \gls{warningSignal}
and a subsequent imperative stimulus to which subjects were instructed to
respond as quickly as possible. This delay is termed the foreperiod and this
effect the foreperiod effect on reaction time. Here we report a new foreperiod
effect on the inter-trial phase coherence (\gls{ITPC}, the degree of alignment
in phases from multiple trials) of electroencephalographic (EEG) signals evoked
by standard stimuli to which subjects are instructed not to respond, following
a \gls{warningSignal}.  We call this effect the standard foreperiod (\gls{SFP})
effect on \gls{ITPC}.

When foreperiod durations are fixed across
a block of trials, reaction times increase with increasing foreperiod
duration~\citep{wundt1874, woodrow14}. This effect is
commonly explained as arising from a reduced precision of temporal expectancy
with increasing time,
due to the
deterioration of accuracy in time estimation for longer time
intervals~\citep{grondin01}. When foreperiod durations vary from
trial to trial, however, reaction times are commonly found to decrease with increasing
foreperiod durations~\citep{naatanen70, woodrow14}. This effect is often
explained in terms of conditional probabilities, i.e., the probability that
the imperative stimulus will occur in a next small time window given that
it has not occurred yet. As time passes during a given foreperiod without
the impending stimulus having been presented, the conditional probability of its
occurrence increases in the reminder of the foreperiod. The brain
presumably learns this changing conditional probability early in the
experimental session and exploits it to heighten temporal expectation. A
review on foreperiod effects appears in~\citet{niemiAndNaatanen81}.

Electrophysiological correlates of the foreperiod effect have mainly been
investigated through the Contingent Negative
Variation~\citep[CNV,][]{walterEtAl64}. The CNV is a slow negative shift in
the base line of the EEG that develops between the \gls{warningSignal} and
the imperative stimulus.
\citet{walterEtAl64} showed that the CNV does not reflect sensory
activity, but the contingency of the
\gls{warningSignal} and
the imperative stimulus; i.e., the probability that imperative stimulus follows
the \gls{warningSignal}. If the imperative
stimulus is suddenly omitted the CNV amplitude is reduced, and the CNV
disappears after 20-50 trials without the imperative stimulus~\citep{lowEtAl66}.
Its amplitude is significantly elevated when a motor response is required to
the imperative stimulus, compared to when no motor response is
needed~\citep{walterEtAl64}.
The CNV is not discernible in raw EEG records and requires trial averaging (6-12
trials in normal adults).
The CNV is generated in the
cortex, requiring inputs from the basal ganglia, and involving cerebral
networks where the thalamus plays a critical
role~\citep{bruniaAndVanBoxtel01}. 
The investigation of the cognitive
functions that are reflected on the CNV has a long and rich history. Walter
and colleagues~[\citeyear{walterEtAl64}] first discussed the CNV as a sign of
expectancy.  \citet{tecce72} postulated attention as a correlate of the CNV.
\citet{gaillard77, gaillard78} proposed that the late E wave of CNV may
reflect preparation for the motor response.
More recent experiments have linked the CNV with
explicit~\citep{macarAndVidal03, pfeutyEtAl03, pfeutyEtAl05} and
implicit~\citep{praamstraEtAl06} time perception. 

That the phase of brain oscillation plays an essential role for understanding
human behavior is not a recent finding.  Early EEG investigations reported that
the instant when a subject performs a voluntary hand movement coincides with a
specific phase of the alpha oscillation~\citep{bates51}.
However, in recent years we have witnessed exceptional new findings on the
relation between phase and behavior.
%
External visual~\citep{regan66} and auditory \citep{galambosEtAl81} rhythmic
stimuli entrain brain oscillations (i.e., drag brain oscillations to follow the
rhythm of the stimuli), and this entrainment is modulated by attention in such
a way that the occurrence of attended stimuli coincides with the phase of brain
oscillations of maximal excitability
\citep{lakatosEtAl05,lakatosEtAl08,lakatosEtAl13,oconnellEtAl11,besleEtAl11,gomezRamirezEtAl11,zionGolumbicEtAl13,cravoEtAl13,mathewsonEtAl09,mathewsonEtAl11,mathewsonEtAl12,spaakEtAl14,grayEtAl15}.
Alignment of the phase of brain oscillations to optimize performance has been
observed in perception, motor control, and cognition. 
In perception, this
alignment has been related to vision
\citep{valeraEtAl81,buschEtAl09,mathewsonEtAl09,mathewsonEtAl12,cravoEtAl13,hanslmayrEtAl13,spaakEtAl14,cravoEtAl15,grayEtAl15,miltonAndPleydellPearce16},
audition
\citep{riceAndHagstrom89,lakatosEtAl08,stefanicsEtAl10,henryAndObleser12,ngEtAl12,neulingEtAl12,lakatosEtAl13,hickokEtAl15}
, and speech
\citep{ahissarEtAl01,luoAndPoeppel07,howardAndPoeppel10,coganAndPoeppel11,howardAndPoeppel12,giraudAndPoeppel12,morillonEtAl12,grossEtAl13,peelleEtAl13,luoEtAl13,xiangEtAl13,doellingEtAl14,vanRullenEtAl14,parkEtAl15,zoefelAndVanRullen15a,zoefelAndVanRullen15b,zoefelAndVanRullen16}.
In addition, the alignment of phases of oscillations in the visual brain
regions can be triggered crossmodally by auditory stimuli, and vice versa
\citep{thorneEtAl11,kayserEtAl08,lakatosEtAl09,romeiEtAl12}. 
In motor control, optimal alignment of brain oscillations has been found in eye
movements
\citep{bosmanEtAl09,hammEtAl10,salehEtAl10,drewesAndVanRullen11,mclellandEtAl16}, and
cortico-spinal excitability \citep{vanElswijkEtAl10,bergerEtAl14}.
It has also been ascribed to reaction speed
\citep{lansing57,callawayAndYeager60,stefanicsEtAl10,drewesAndVanRullen11,thorneEtAl11}.
In cognition, the
alignment of brain oscillations has been linked to attention
\citep{yamagishiEtAl08,lakatosEtAl08,lakatosEtAl09,lakatosEtAl13,sausengEtAl08,capotostoEtAl09,vanRullenEtAl11,gomezRamirezEtAl11,zumerEtAl14,liuEtAl14,mathewsonEtAl14,grayEtAl15,vanDiepenEtAl15},
working memory
\citep{bonnefondAndJensen12,myersEtAl14}, causality judgment
\citep{cravoEtAl15}, stimuli coincidence \citep{miltonAndPleydellPearce16}, temporal predictions \citep{samahaEtAl15,fujiokaEtAl12}, 
and
to the transmission of prior information to visual cortex
\citep{shermanEtAl16}.
Specially relevant for this manuscript are reports on effects of stimulus
expectancy on phase alignment
\citep{cravoEtAl11,cravoEtAl13,stefanicsEtAl10,zionGolumbicEtAl13,mathewsonEtAl12,miltonAndPleydellPearce16}. 
The alignment of phase in groups of trials at one brain region, as studied in
this manuscript, has been related to the latency of saccadic eye movements
\citep{drewesAndVanRullen11}, visual contrast sensitivity \citep{cravoEtAl13},
visual perception \citep{buschEtAl09,mathewsonEtAl12,spaakEtAl14}, visual
attention \citep{grayEtAl15}, visual stimuli timing
\citep{miltonAndPleydellPearce16}, top-down visual attention
\citep{yamagishiEtAl08}, speech perception \citep{zionGolumbicEtAl13},
cross-modal interactions \citep{thorneEtAl11,fiebelkornEtAl11}, as well as to
temporal expectation \citep{cravoEtAl11,stefanicsEtAl10}. Also, the alignment
of phase across brain regions \citep{hanslmayrEtAl07}, the coupling of phase
across frequencies and brain regions \citep{fiebelkornEtAl13, sausengEtAl08,
handelAndHaarmeier09}, and the coupling between phase and amplitude
\citep{voytekEtAl10, bonnefondAndJensen15, cravoEtAl11} is predictive of some
aspects of behavior. 
Reviews of the relation between prestimulus phase, perception, behavior, and
cognition appear in \citet{schroederAndLakatos09, mathewsonEtAl11,
vanRullenEtAl11, thutEtAl12, vanRullenEtAl14}.
In sum, almost every aspect of behavior, perception, and cognition critically
depends on the phase of brain oscillations.  

Considering that both the foreperiod duration and the phase of EEG oscillations
are related to anticipatory attention, we conjectured that the former could be
modulating the latter. In addition, because influences of attention on EEG
phase activity can be observed in single
trials \citep[e.g.,][]{buschAndVanRullen10}, we hypothesized that modulations
of phase activity by the foreperiod duration could also be seen in single
trials. 
In a previous investigation \citep{makeigEtAl02}, we showed that \gls{ITPC} can be a mechanism responsible for the appearance
of event-related potential features in trial-averaged measures of EEG activity.
Continuing with this line of investigation, here we study effects of
expectancy, as induced by the foreperiod duration, on the \gls{ITPC}. For this,
we developed a single-trial multivariate decoding method.  We reasoned that if
the \gls{SFP} modulates the \gls{ITPC} then we should be able to reliably
decode the \gls{SFP} duration (\gls{SFPD}) from the \gls{ITPC} evoked by a
standard stimulus.
In Section~\ref{sec:measuringSFPeITPC} we show that there exist a significant
correlation between the \gls{SFPD} and the \gls{ITPC} evoked by a standard in
a single trial.
In principle, this correlation could be an artifact of the decoding method. In
Section~\ref{sec:directEvidence} we show that this is not the case, since the
effect of the \gls{SFPD} on \gls{ITPC} can be observed by just comparing the
average \gls{ITPC} of trials closer and further away from the
\gls{warningSignal}.
It could be argued that the \gls{SFP} effect on \gls{ITPC} is a nuisance in EEG
recordings. We argue against this possibility in
Section~\ref{sec:correlationsWithBehavior}, by showing significant correlations
between the strength of the \gls{SFP} effect and behavioral measures (detection
errors and reaction speed). In Section~\ref{sec:timing} we study the timing of
the \gls{SFP} effect on \gls{ITPC}. We conclude with a discussion in
Section~\ref{sec:discussion}.

\section{Results}
\label{sec:results}

We characterized the foreperiod effect on \gls{ITPC} in the audio-visual
oddball detection task schematized in Figure~\ref{fig:experiment}. Subjects had
to detect visual or auditory deviants among visual or auditory standards.
Attention-shifting cues (LOOK and HEAR) were interspersed among the standards
and deviants. After a LOOK (HEAR) cue subjects had to detect visual (auditory)
deviants.  Thus, the LOOK (HEAR) cue directed attention to the visual
(auditory) \gls{attendedModality}. These cues initiated a period of expectancy
for the next deviant stimulus and and acted as a \gls{warningSignal}. Further
details on the experiment appear in Section~\ref{sec:experimentalInformation}. 

\begin{figure}
\begin{center}
\includegraphics{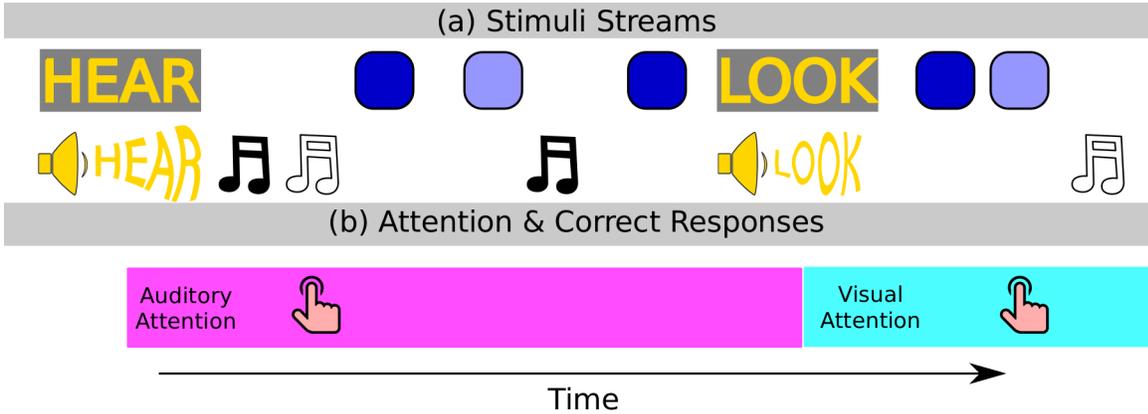}

\caption{Audio-visual attention-shift experiment. (a) Audio-visual LOOK and
HEAR attention-shifting cues, visual standards (dark-blue squares) and deviants
(light-blue squares), and auditory standards (low-pitch sounds, filled notes)
and deviants (high-pitch sounds, open notes) were presented sequentially in
pseudo-random order. (b) After seeing and hearing
a LOOK (HEAR) cue, subjects had to indicate with a button press when they saw
a visual (heard an auditory) deviant.}

\label{fig:experiment}
\end{center}
\end{figure}

All analysis were performed at the level of independent components, \glspl{IC},
obtained from an Independent component analysis~\citep[ICA,][]{makeigEtAl96} of
the recorded EEG data.  Figures~\ref{fig:exampleSingleTrialAnalysis}a
and~\ref{fig:exampleSingleTrialAnalysis}b illustrate the main finding of this
study. The thin curves are the cosine of the phase (at the peak \gls{ITC}
frequency, Section~\ref{sec:itcAndPeakITCFrequency}; 8.15~Hz in this figure) of
of the 20 epochs with the shortest
(Figure~\ref{fig:exampleSingleTrialAnalysis}a) and longest
(Figure~\ref{fig:exampleSingleTrialAnalysis}b) \gls{SFPD}.  These epochs were
aligned at time zero to the presentation of attended visual standards from
\gls{IC} 13 of subject av124a.  The thick red curve is the
cosine of the average phase across all epochs
(Section~\ref{sec:circularStats}).  Between 150 and 250~ms the phase of the ten
epochs with shortest \gls{SFPD} (thin lines in
Figure~\ref{fig:exampleSingleTrialAnalysis}a) are more similar to the average
phase (thick line in Figure~\ref{fig:exampleSingleTrialAnalysis}a or in
Figure~\ref{fig:exampleSingleTrialAnalysis}b) than the phase of the ten epochs
with longest \gls{SFPD} (thin lines in
Figure~\ref{fig:exampleSingleTrialAnalysis}b).  This suggests the existence of
a foreperiod effect on \gls{ITPC} where the \gls{SFPD} modulates the similarity
between the phase of a given epoch and the average phase across all epochs. The
rest of this paper studies the validity of this effect and investigates its
relation with subjects' behavior (detection errors and reaction times).

\subsection{Behavioral results}
\label{sec:behavioralResults}

We call deviant foreperiod duration to the delay between the presentation of a
\gls{warningSignal} and the next deviant
stimulus. We detected significant foreperiod effects on reaction times (i.e.;
significant correlations between deviant foreperiod durations and subjects'
reaction times to the corresponding deviant stimuli) in ten combinations of
subject and \gls{attendedModality} (26\% out of a total of 38 combinations of
19 subjects and two \glspl{attendedModality}) . Five of these combinations
corresponded to the visual \gls{attendedModality}. These correlations were all
positive for the visual \gls{attendedModality} (indicating that the longer the deviant foreperiod duration the longer
the subject reaction time), while for the
auditory \gls{attendedModality} there was a mixture of two positive and three
negative significant correlations.
Figure~\ref{fig:exampleForeperiodEffectsOnRTAndDetection}a plots deviant
foreperiod durations as a function of reaction times for an example subject
and \gls{attendedModality}. We found significant foreperiod effects on
detectability (i.e.; significant differences between the median deviant
foreperiod duration for hits and misses) in other 10 combinations of
subject and \gls{attendedModality} (26\%). All of these combinations corresponded to
the auditory \gls{attendedModality} and in all of them subjects detected more
easily later deviants (i.e.; the median deviant foreperiod was significantly
larger for hits than for misses).
Figure~\ref{fig:exampleForeperiodEffectsOnRTAndDetection}b plots deviant
foreperiod durations for hits and misses for an example subject and
\gls{attendedModality}.

Visual deviants were detected more reliably and faster than auditory
ones. 
Mean error rates were 0.22 and 0.09 for detecting auditory and visual stimuli,
respectively. A repeated-measures ANOVA, with error rate (in the detection of
deviants of the attended modality) as dependent variable and attended
modality as independent factor, showed a significant main effect of attended
modality (F(1, 17)=39.35, p\textless0.0001). A posthoc analysis revealed that
error rates were smaller when detecting visual than auditory deviants 
(z=6.45, p=5.59e-11).
Mean response times to auditory and visual stimuli were 407 and 377~ms,
respectively.  A repeated-measures ANOVA, with mean response time (to deviant
stimuli of the attended modality) as dependent variable and attended modality
as independent factor, showed a significant main effect of attended modality
(F(1, 18)=40.29, p\textless0.0001). A posthoc analysis revealed that mean
response time was shorter for visual than auditory deviants (z=6.52,
p=3.49e-11).

\subsection{Measuring a new foreperiod effect on the ITPC elicited by 
standards}
\label{sec:measuringSFPeITPC}

Since we were interested in studying expectancy effects unrelated to motor
preparation, we only characterized the foreperiod effect on \gls{ITPC} for
standards, to which subjects were instructed no to respond.
Figure~\ref{fig:exampleSingleTrialAnalysis}c shows phase values, at the peak
\gls{ITC} frequency (Section~\ref{sec:itcAndPeakITCFrequency}), in single
trials from \gls{IC} 13 of subject av124a, and for epochs aligned to the
presentation of unattended visual standards. Trials are sorted from bottom to
top by increasing \gls{SFPD}. The solid line represents the presentation time
of the \gls{warningSignal}, scaled so that the maximum \gls{SFPD} fits in the
negative time period (the maximum, median, and minimum \gls{SFPD} in this panel
were 2,424~ms, 1,160~ms, and 328~ms, respectively). The top and bottom insets
show magnified views of the phase of the trials with longest and shortest
\glspl{SFPD}, respectively. Panels (a) and (b) show the 20 bottommost and
topmost trials from panel (c). Apparently, standards with
shorter \gls{SFPD} elicit stronger phase coherence from 150 to 250~ms after the
presentation of the \gls{warningSignal} than standards with longer
\gls{SFPD}.  Below we use decoding models to assess the validity of this
observation.

\begin{center}
\includegraphics[width=5in]{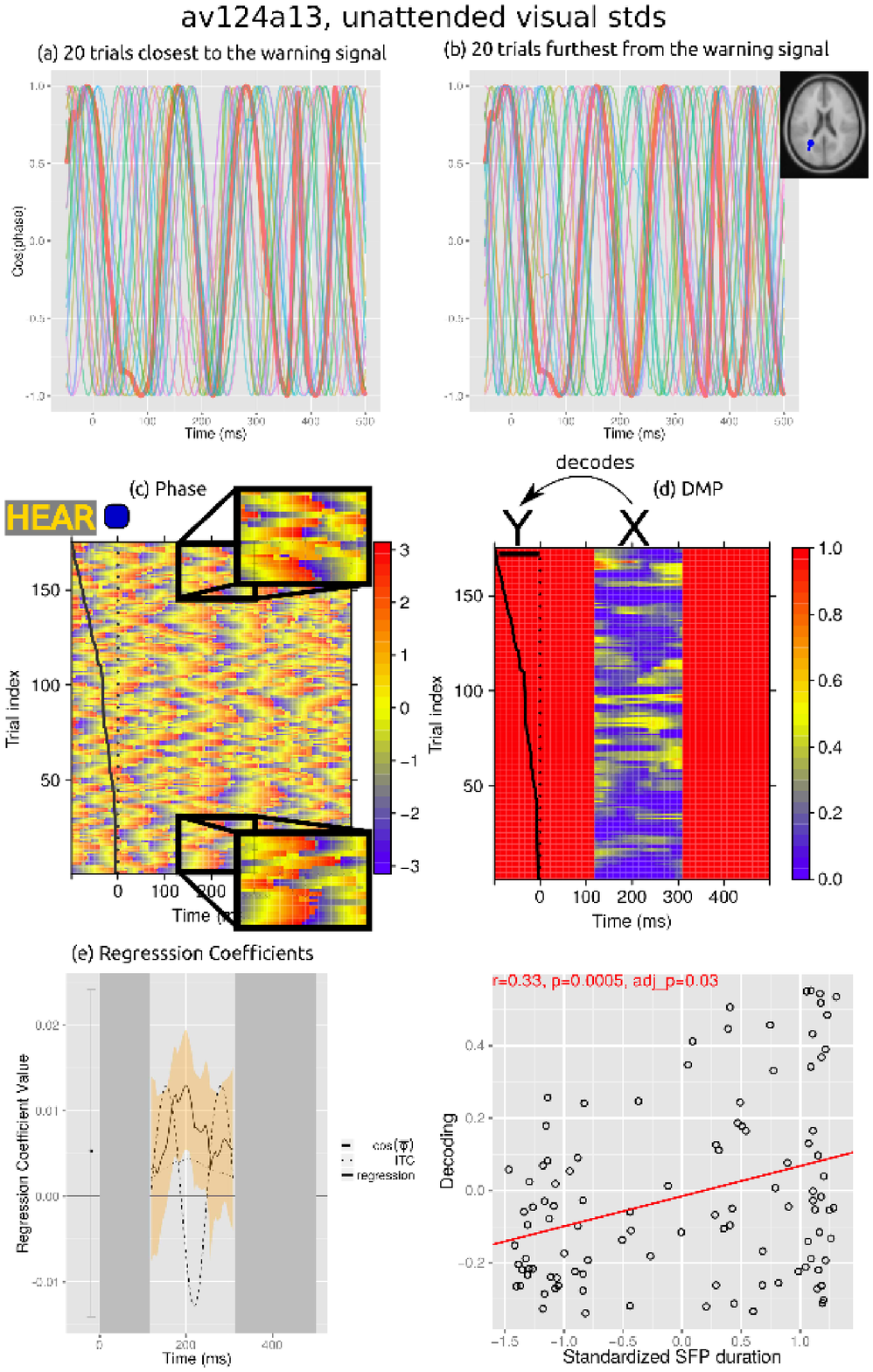}

\captionof{figure}{Data used for, and parameters obtained from, the analysis of
trials from \gls{IC} 13 of subject av124a, for epochs aligned to the
presentation of unattended visual standards.  
(a, b) Cosine of the phase of the twenty epochs with the shortest (a) and the
longest (b) \gls{SFPD}. Thin curves are the cosines of the phases of individual
epochs, and the thick curve is the cosine of the mean phase (i.e., mean
direction, Eq.~\ref{eq:meanDirection}) across all epochs. Between 150 and
250~ms, the phases of epochs with shortest \glspl{SFPD} in (a) are more similar
to the mean phase than the phases of epochs with longest \glspl{SFPD} in (b).
(c) Phase erpimage. Each horizontal line shows the phase of a single trial. For
visualization clarity, the plot has been smoothed across trials using a running
median with a window of size three.  Time zero (dotted vertical line)
corresponds to the presentation of standard visual stimuli. The solid black
line indicates the (scaled) times of presentation of the \glspl{warningSignal}
immediately preceding the standards at time zero. Trials are sorted from bottom
to top by increasing \gls{SFPD}. Insets magnify phases closest (bottom) and
furthest (top) from the \gls{warningSignal}.
(d) \gls{DMP} erpimage obtained from the phases in (b), smoothed with a
running median across trials of size 5.  Solid red regions correspond to time
points where the distribution of phases across trials was not significantly
different from the uniform distribution (p\textgreater0.01, Rayleigh test).
(e) The solid line plots the coefficients of the linear regression model used
to decode the \gls{SFPD} (y in panel (b)) from the corresponding \gls{DMP}
values (x in panel (b)) of single trials. The orange band gives 95\% bootstrap
confidence intervals of the estimated coefficients
(Section~\ref{sec:additionalStatInfo}). The dashed line at time $t$ plots the
mean phase at time $t$ across all trials. The dotted line is the \gls{ITC}
curve (Section~\ref{sec:itcAndPeakITCFrequency}).
(f) Scatter plot of model decodings versus standardized \glspl{SFPD}.}

\label{fig:exampleSingleTrialAnalysis}

\end{center}

To study if the \gls{SFPD} modulates the \gls{ITPC} evoked by standards, we
attempted to decode the former from the latter.  Reliable decodings would
indicate that the \gls{SFPD} is correlated with the \gls{ITPC} evoked by
standards.  We used the Deviation from the Mean Phase (\gls{DMP}) to quantify
\gls{ITPC} in single trials (Section~\ref{sec:dmp}).
Figure~\ref{fig:exampleSingleTrialAnalysis}d shows \gls{DMP} values
computed from the phase values in Figure~\ref{fig:exampleSingleTrialAnalysis}c.
Time points at which there was not enough statistical evidence to conclude that
the phase distribution across different trials was statistically different from
the uniform distribution (p\textgreater 0.01, Rayleigh test) are masked in red.
We did not use these points to decode the \gls{SFPD}, for a reason explained in
Section~\ref{sec:dmp}.
For each epoch, we decoded the \gls{SFPD} of the standard presented at time
zero from a 500 ms-long window of \gls{DMP} values following the presentation
of this standard.  To avoid possible EEG movement artifacts induced by
responses to deviants, we excluded from the analysis trials with deviant
stimuli in this 500-ms-long window.  We used a linear regression model for
decoding (Section~\ref{sec:linearRegressionModel}), and estimated its
parameters using a variational-Bayes method
(Section~\ref{sec:lmEstimationMethod}). To each estimated model corresponds a
\gls{standardModality} (i.e., the modality of the standards used to align the
epochs from which the model parameters were estimated; visual
\gls{standardModality} in
Figure~\ref{sec:robustCorrelationCoefficientsAndPValues}) and
an \gls{attendedModality} (i.e., the modality that was attended when
the epochs used to estimate the model parameters were recorded; auditory
\gls{attendedModality} in
Figure~\ref{sec:robustCorrelationCoefficientsAndPValues}).

If the delay between the \gls{warningSignal} and the following deviant
is longer than a threshold, that depends on the distribution of deviants,
the foreperiod effect on reaction times
disappears~\citep{botwinickAndBrinley62}. 
Similarly, if we estimate decoding models including a large proportion of
standards presented long after the \gls{warningSignal}, decodings become
non-significant (i.e., the \gls{SFP} effect on \gls{ITPC} disappears).
To fit decoding models we used data from standards that were presented
before a maximum \gls{SFP} duration after the \gls{warningSignal}. The
selection of this maximum is described in Section~\ref{sec:selectionMaxSFPD}.

The coefficients of the linear regression model estimated from the \glspl{SFPD}
and \gls{DMP} values in Figure~\ref{fig:exampleSingleTrialAnalysis}d are shown
in Figure~\ref{fig:exampleSingleTrialAnalysis}e.  The leave-one-cross-validated
decodings from the linear regression model are plotted as a function of
\gls{SFPD} in Figure~\ref{fig:exampleSingleTrialAnalysis}f. To assess the
accuracy of these decodings we calculated their robust correlation coefficient
with the \glspl{SFPD} (Section~\ref{sec:robustCorrelationCoefficientsAndPValues}).  We
obtained a correlation coefficient r=0.33, which was significantly different
from zero (p=0.0005, permutation test for skipped Pearson correlation
coefficient; adjusted p-value for multiple comparisons adj\_p=0.03,
Section~\ref{sec:multipleComparisons}).  This significance, and the fact that
the estimated regression model was significantly different from the
intercept-only model (p\textless0.01, likelihood-ratio permutation test,
Section~\ref{sec:additionalStatInfo}), suggests that the \gls{SFPD} is
significantly correlated with the \gls{ITPC} activity evoked by unattended
visual standards in \gls{IC} 13 of subject av124a.

To characterize the \gls{SFP} effect on \gls{ITPC} across subjects we grouped
all \glspl{IC} of all subjects using a clustering algorithm
(Section~\ref{sec:clusteringOfICs}).  Figure~\ref{fig:clusters} shows the
obtained clusters and Table~\ref{table:clustersInfo} provides additional
information about them. To quantify the strength of the standard foreperiod
effect on \gls{ITPC} at a given cluster, \gls{standardModality}, and
\gls{attendedModality}, we computed the proportion of models with decodings
significantly correlated with experimental \glspl{SFPD}, as in
Figure~\ref{fig:exampleSingleTrialAnalysis}f. Dots below the image of a cluster
in Figure~\ref{fig:clusters} indicate that the decodings of more than 40\% of
the models, estimated from phase activity in that cluster and from the
\gls{standardModality} and \gls{attendedModality} given by the color of the
dot, were significantly correlated with experimental \glspl{SFPD}.
For each cluster, \gls{standardModality}, and \gls{attendedModality}, the
number of models with significant correlations with the \glspl{SFPD}, as well
as the proportion of such models over the total number of models estimated for
the cluster, \gls{standardModality}, and \gls{attendedModality}, are given in
Table~\ref{table:nModelsSignCorPredictionsVsSFPDs}.
The larger number of clusters with more than 40\% significant correlations for
the visual than the auditory modality (red vs.\ blue points) and for attended
than unattended standards (darker vs.\ lighter colors) hints that the \gls{SFP}
effect on \gls{ITPC} is stronger for the visual than the auditory modality, and
for attended than unattended standards.  For the visual \gls{standardModality},
clusters where the decoding of more than 40\% of models were significantly
correlated with \glspl{SFPD} are localized in parieto-occipital regions, while
for the auditory \gls{standardModality} these clusters are found on
fronto-central brain regions, suggesting that the \gls{SFP} effect on
\gls{ITPC} is somatotopically organized.

From the successful decodings of \glspl{SFPD} from \gls{ITPC} activity triggered
by standards, we infer that the \gls{SFPD} is modulating the \gls{ITPC}.
However, other events, such at the \gls{warningSignal}, could be modulating
this \gls{ITPC}. We argue against this possibility in
Section~\ref{sec:controls}.

\begin{figure}
\begin{center}
\includegraphics{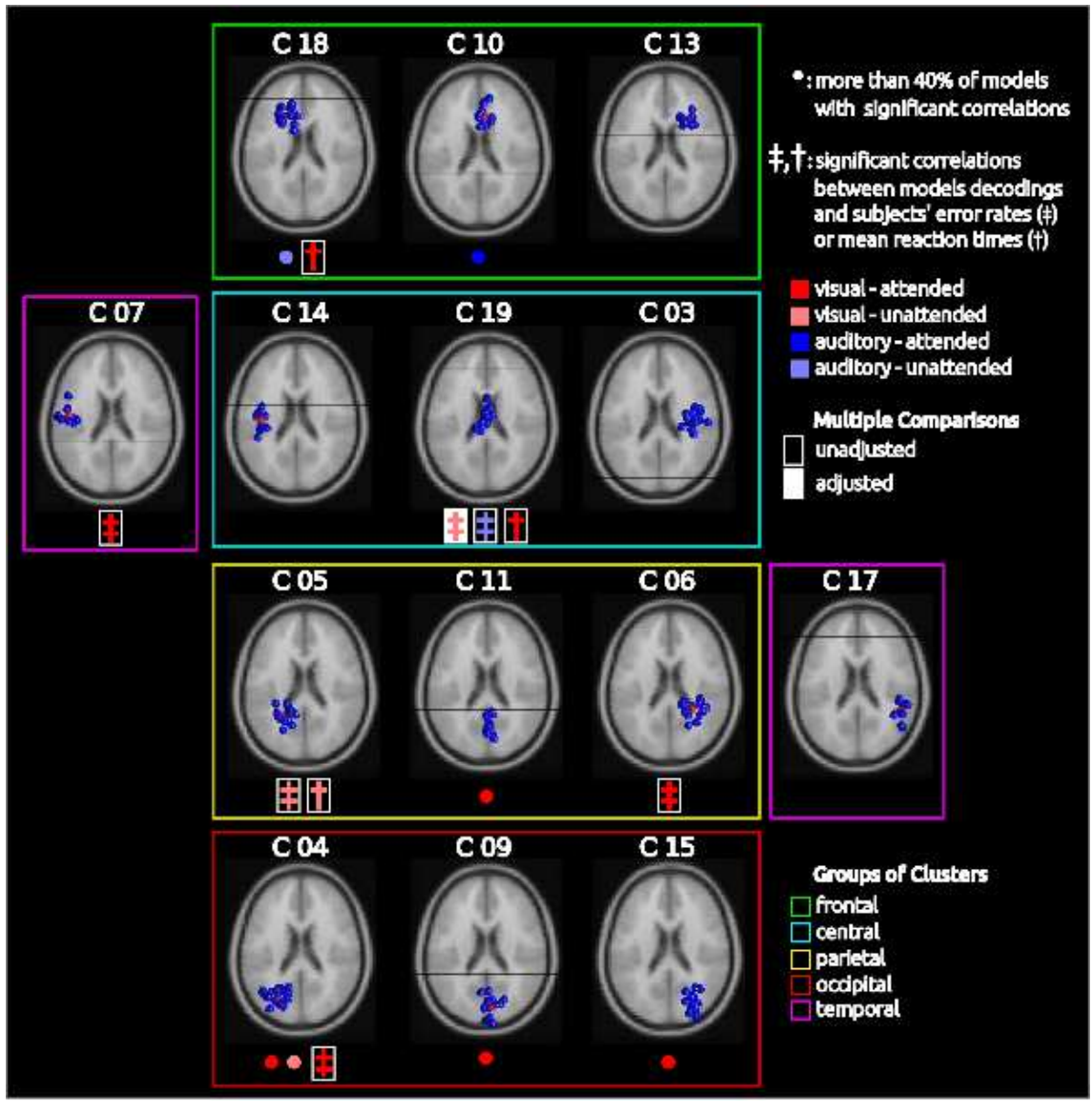}

\caption{Clusters of \glspl{IC}, large proportions of significant models,
correlations with behavior, and groups of clusters.
A blue ball inside a brain slice represents an \gls{IC} from one subject. Each
brain slice displays one cluster of \glspl{IC}.
A colored dot below the image of a cluster indicates that in more than 40\% of
the models estimated from data from that cluster, and from the
\gls{standardModality} and the \gls{attendedModality} given by the color of the
dot, the correlation coefficient between models' decoding and \glspl{SFPD} was
significantly different from zero (p\_adj\textless0.05,
Section~\ref{sec:multipleComparisons}).
A dagger (double dagger) signals a significant correlation between models'
decoding power and subjects' mean reaction times (error rates).  Filled
(unfilled) rectangles behind daggers indicate that the significance of the
corresponding correlation test was corrected (uncorrected) for multiple
comparisons.
Colored boxes mark groups of clusters used in Section~\ref{sec:timing} to study
the timing of the \gls{SFP} effect on \gls{ITPC}.
The clusters with a large proportion of significant models suggest
that the \gls{SFP} effect on \gls{ITPC} is stronger for visual than auditory
standards and for attended than unattended standards, and that the effect is
somatotopically organized (see text). The significant correlations between the
strength of the \gls{SFP} effect on \gls{ITPC} and subjects' behavioral
performance show that the effect is behaviorally relevant.
}

\label{fig:clusters}
\end{center}
\end{figure}

\subsection{Direct evidence for the SFP effect on ITPC}
\label{sec:directEvidence}

In the previous section we used linear regression models to decode \glspl{SFPD}
from \gls{ITPC} activity following the presentation of standards.  Since these
decodings were reliable, we inferred that the \gls{SFPD} is correlated with the
\gls{ITPC} activity triggered by standards. Here we use a simple method to gain
more direct evidence of the existence of this correlation.  

We compared the averaged phase decoherence of trials with the shortest and the
longest \glspl{SFPD}.  The existence of a significant difference in these
values would be direct evidence for the \gls{SFP} effect on \gls{ITPC}.  To
measure phase decoherence in a group of trials, we computed the mean \gls{DMP}
of the trials in the group.  Section~\ref{sec:proofAveragedDMPApproachesITC}
proves that, at times of high phase coherence, the mean \gls{DMP} is related to
the \gls{ITC}, a standard measure of averaged \gls{ITPC}. The red and blue
lines in the top row of Figure~\ref{fig:avgDMPsAndCoefs} plot the mean
\gls{DMP} of the 20\% trials furthest from and closest to the
\gls{warningSignal}, respectively.  Panels in the middle row plot the
difference between the red and blue lines in the corresponding panels of
the top row. Plots in different columns correspond to 
phase activity extracted from different subjects, \glspl{IC} in the left
parieto-occipital cluster 04, and attended visual stimuli.
For subject av124a and \gls{IC} 07 (left column), the mean \gls{DMP} was
significantly larger for trials with the longest \gls{SFPD} between 0 and
100~ms after the presentation of standard stimuli, but smaller between 200 and
320~ms (Figures~\ref{fig:avgDMPsAndCoefs}a and~\ref{fig:avgDMPsAndCoefs}d).
Differently, for subject av115a and \gls{IC} 4 (central column), the mean phase
decoherence was significantly larger for trials with the longest \gls{SFPD}
between 200 and 400~ms (Figures~\ref{fig:avgDMPsAndCoefs}b
and~\ref{fig:avgDMPsAndCoefs}e).
And for subject av113a and \gls{IC} 17 (right column), the mean phase
decoherence was significantly larger for trials with the longest \gls{SFPD}
between 0 and 250~ms (Figures~\ref{fig:avgDMPsAndCoefs}c
and~\ref{fig:avgDMPsAndCoefs}f).

The linear regression method is statistically more powerful than the the simple
method for testing the association between the \gls{SFPD} and the \gls{ITPC}
activity triggered by standards, since it uses all trials in a statistically
optimal way, and not just extreme trials in and ad-hoc manner.  Another
advantage of the linear regression method, compared with more general function
approximation methods, such as artificial neural networks, is that the
regression coefficients are readily interpretable.  A positive regression
coefficient indicates that trials with longer \gls{SFPD} correspond to larger
values of \gls{DMP}, or larger decoherence, at the time of the regression
coefficient, compared to trials with shorter \glspl{SFPD}. This is because to
decode a long \gls{SFPD} a positive regression coefficient needs to be
multiplied by a large \gls{DMP} value (assuming that the reminding regression
coefficients are kept fixed).  Similarly, a negative regression coefficient
implies that trials with longer \gls{SFPD} correspond to smaller phase
decoherence, at the time of the regression coefficient.
To validate this interpretation, and the soundness of the decoding methodology,
the bottom row in Figure~\ref{fig:avgDMPsAndCoefs} plots the regression
coefficients of the decoding models corresponding to the top rows.
Figure~\ref{fig:avgDMPsAndCoefs}g shows significantly positive regression
coefficients between 100 and 200~ms and significantly negative coefficients
between 240 and 300~ms. Therefore, according to the previous interpretation,
trials with longer \glspl{SFPD} should correspond to more decoherent activity
between 100-200~ms and to less decoherent activity between 240 and 300~ms. And
this is what we observed in Figures~\ref{fig:avgDMPsAndCoefs}a
and~\ref{fig:avgDMPsAndCoefs}d. Similar consistencies hold for the other
columns.  The normalized crosscorrelation
(Section~\ref{sec:additionalStatInfo}) between the difference in averaged
\gls{DMP} (middle row in Figure~\ref{fig:avgDMPsAndCoefs}) and the
corresponding regression coefficients (bottom row of
Figure~\ref{fig:avgDMPsAndCoefs}) was 0.80 for (d)-(g), 0.39 for (e)-(h), and
0.38 for (f)-(i).  Across all models significantly different from the
intercept-only model, the first, second (median), and third quartiles of the
normalized crosscorrelation distribution were 0.40, 0.71 and 0.87,
respectively.  These results indicate that on average regression coefficients
were similar to differences in average \gls{DMP}, validate the previous
interpretation of regression coefficients, and support the inference that
reliable decodings by the regression models indicate modulations of \gls{ITPC}
by the \gls{SFPD}.

\begin{figure}
\begin{center}
\includegraphics[width=5.5in]{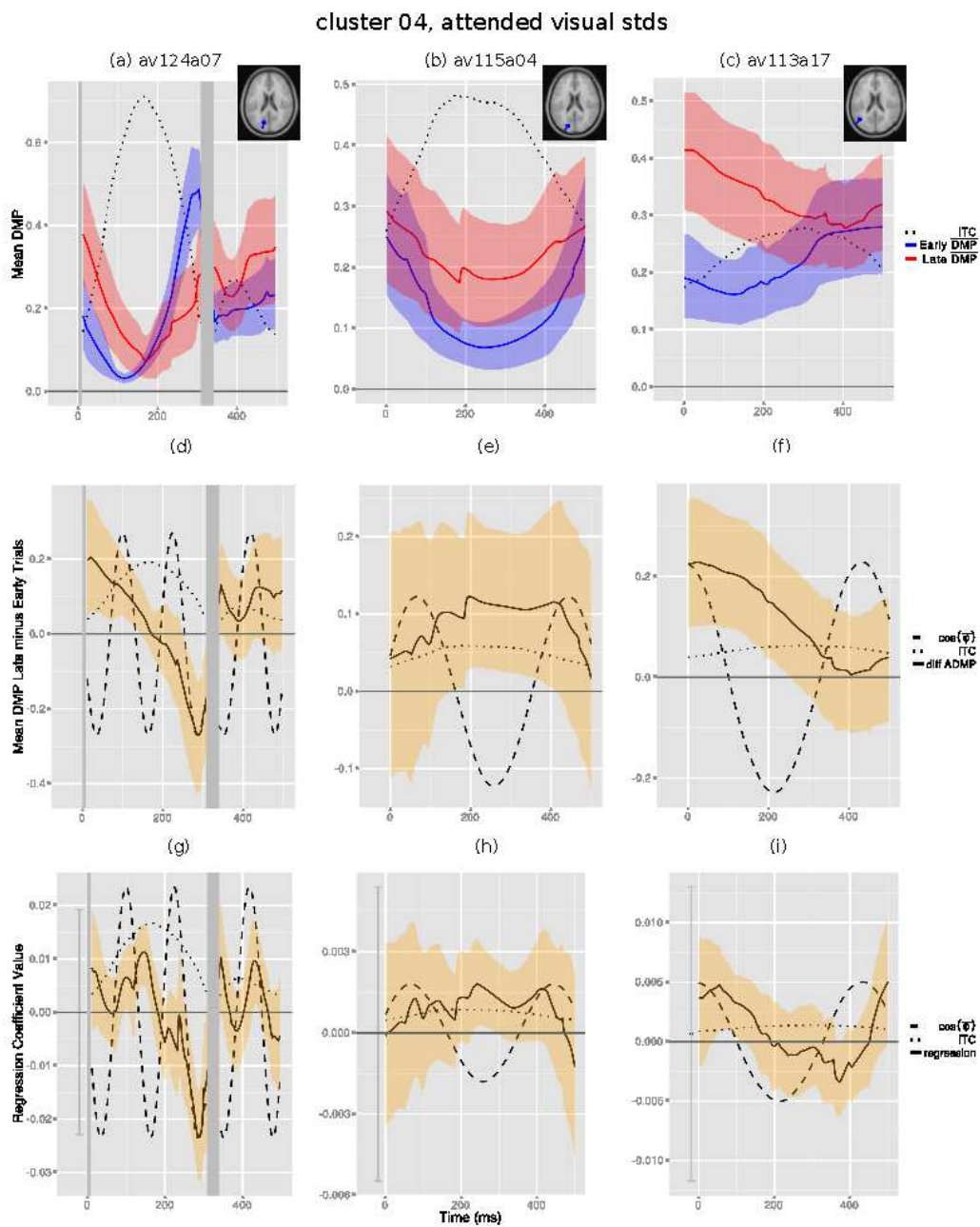}

\caption{Relation between differences in phase coherence in trials furthest
from and closest to the \gls{warningSignal} and regression models'
coefficients.
Top row: averaged \gls{DMP} evoked by the 20\% standards closest to (blue
curve) and furthest from (red curve) the \gls{warningSignal}. The dotted line
plots the \gls{ITC} computed from all trials.
Middle row: the solid line plots the difference in averaged \gls{DMP} evoked by
trials furthest from minus closest to the \gls{warningSignal} (i.e., the red
minus the blue curves in the top row).
Bottom row: standardized coefficients of regression models.
Colored bands in all panels represent 95\% confidence intervals. The dotted and
dashed lines in the panels of the middle and bottom rows plot the arbitrarily
scaled \gls{ITC} and the cosine of the mean phase from all trials,
respectively. The top and middle rows demonstrate that \gls{ITPC} depends on
the \gls{SFPD}, and the similarity between the middle and bottom rows show that
the coefficients of a regression model indicate whether standards further from
the \gls{warningSignal} evoke more coherence or decoherent activity than
standards closer to it.}

\label{fig:avgDMPsAndCoefs}
\end{center}
\end{figure}

\subsection{The SFP effect on ITPC is correlated with subjects' behavior}
\label{sec:correlationsWithBehavior}

In the previous sections we presented evidence suggesting that the \gls{SFPD}
influences the \gls{ITPC} evoked by standards. Here we show
that this is a relevant effect, since its strength is correlated with subjects'
error rates and mean reaction times.

The ordinate in Figure~\ref{fig:correlationsWithBehavior}a gives the
correlation coefficient between decodings from a model and experimental
\glspl{SFPD}, for all subjects with an \gls{IC} in the central-midline cluster
19 and unattended visual standards.  The abscissa provides subjects' error
rates. The significant negative correlation between models' decoding accuracies
and subjects' error rates (r=-0.91, p=0.0004, adj\_p=0.05) shows that the
higher is the association between \gls{SFPD} and \gls{ITPC} in a subject, the
better is his or her behavioral performance (error rate).
Figure~\ref{fig:correlationsWithBehavior}b is as
Figure~\ref{fig:correlationsWithBehavior}a, but for attended visual standards
and the left parieto-occipital cluster 04. We also found significant
correlations between the decoding accuracy of models and subjects mean
reaction times, as illustrated in Figures~\ref{fig:correlationsWithBehavior}c
and~\ref{fig:correlationsWithBehavior}d. Points marked in green in these figures
represent outliers detected in the calculation of robust correlation
coefficients (Section~\ref{sec:robustCorrelationCoefficientsAndPValues}). Note that the
decoding models were optimized to decode \glspl{SFPD}. These significant
correlations 
resulted without fitting the decoding models to behavioral data.

\begin{figure}
\begin{center}
\includegraphics[width=5.5in]{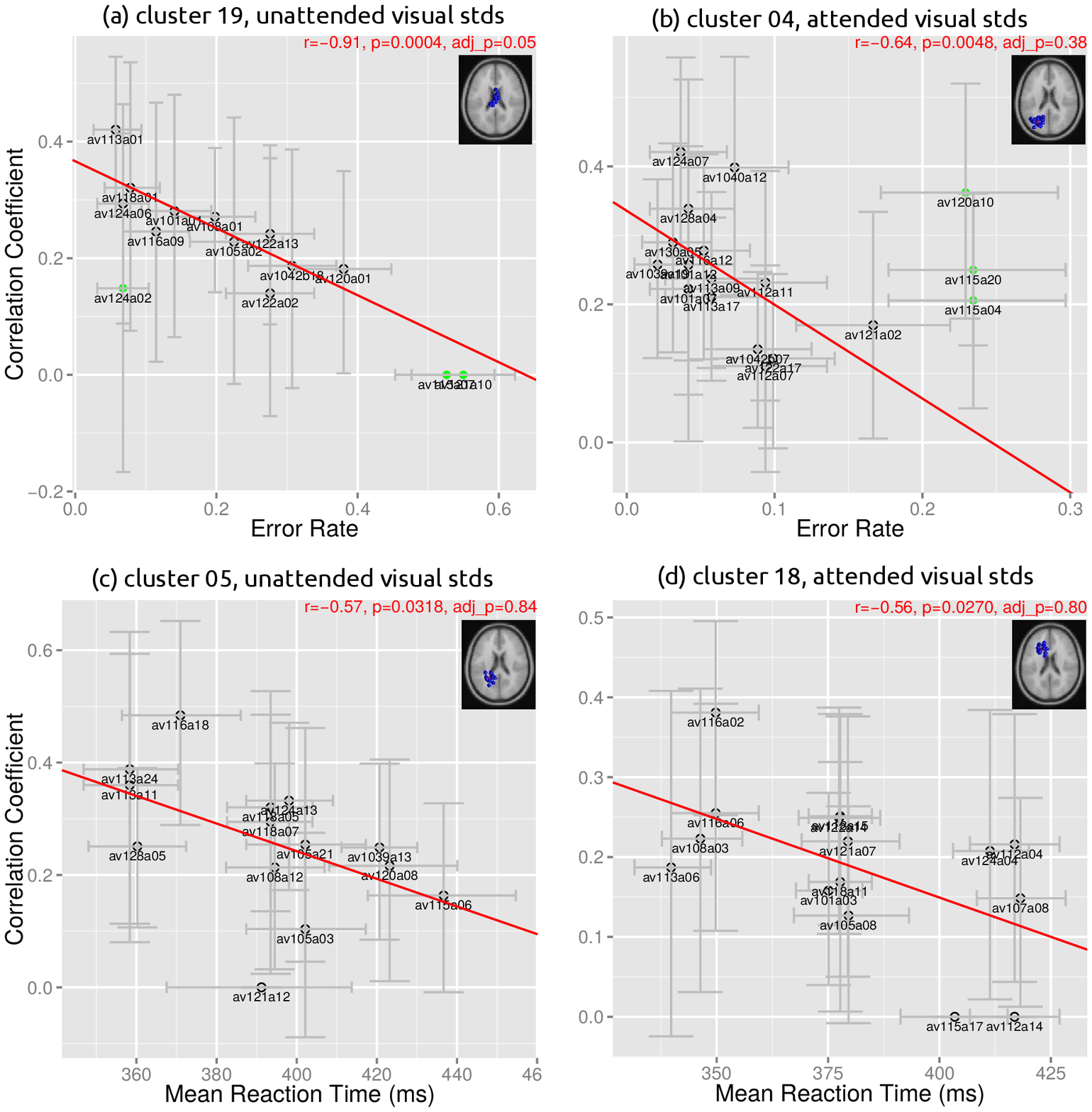}

\caption{The strength of \gls{SFP} effect on \gls{ITPC} (i.e., the accuracy of
the decoding models) is related to subjects' behavior (error rates in (a,b) and
mean reaction times in (c,d)), attesting the behavioral relevance of the
\gls{SFP} effect on \gls{ITPC}. Larger decoding accuracy (i.e., larger
correlation coefficients between models decodings and \glspl{SFPD}; ordinate in
all panels) corresponds to lower error rates (abscissa in (a,b)), as indicated
by negative correlation coefficients, $r$, in (a,b), and this correspondence is
stronger in models estimated from phase activity triggered by unattended
standards (e.g., $|r|$ is larger in (b) than in (a)).  Larger decoding accuracy
seems to correspond to faster (slower) mean reaction times (abscissa in (c,d))
in models estimated from phase activity triggered by attended (unattended)
standards (e.g., $r$ in in (c) and (d) is negative and positive, respectively),
but this correspondence is not as conclusive as for error rates (see text).}

\label{fig:correlationsWithBehavior}
\end{center}
\end{figure}

A colored double and simple dagger below the image of a cluster in
Figure~\ref{fig:clusters} indicates a significant correlation between the
decoding accuracy of models estimated from phase activity in the cluster and
subjects error rates and mean reaction times, respectively. The color of the
dagger corresponds to the \gls{standardModality} and the \gls{attendedModality}
of the data used to estimate the models' parameters.
Table~\ref{table:statsErrorRates} shows the correlation coefficients, and
corresponding p-values, between the decoding accuracy of models and subjects error
rates.  Table~\ref{table:statsMeanRTs} is as Table~\ref{table:statsErrorRates},
but for mean reaction times.
All significant correlations between model's decoding accuracies and subject's
error rates were negative (Table~\ref{table:statsErrorRates}), indicating that the larger the detection performance of a
subject, the larger the strength of the \gls{SFP} effect on \gls{ITPC}.  The
absolute value of these correlations was larger for attended than unattended
standards (p\textless 1e-04; permutation test).
For mean reaction times, the three significant correlations were also negative
(Table~\ref{table:statsMeanRTs}),
showing that subjects with strongest \gls{SFP} effect on \gls{ITPC} reacted the
slowest. However, differently from error rates, positive correlation for mean
reaction times almost reach significance (e.g., cluster 06 and unattended
auditory standards, or cluster 13 and unattended visual standards,
Table~\ref{table:statsMeanRTs}).
Correlations (both significant and non-significant) were stronger for error
rates than for mean reaction times and stronger for the visual than the
auditory \gls{standardModality}.  An ANOVA with the absolute value of the
correlation coefficient  as dependent variable showed significant main effects
of behavioral measure type (i.e., error rate or mean reaction time; F(1,
109)=7.26, p=0.0082) and for \gls{standardModality} (F(1, 109)=8.22, p=0.005).
A posthoc analysis indicated that the mean absolute value of the correlation
coefficient was larger for error rates than for mean reaction times (p=0.005;
Tukey test) and larger for the visual than the auditory \gls{standardModality}
(p=0.0082; Tukey test).  Further information on this ANOVA appears in
Section~\ref{sec:anovaFurtherInfo}.

The above correlations with error rates, although significantly different from
zero at the 0.05 confidence level for univariate hypothesis tests, were found
after computing a total of 56 correlations (14 clusters x 2
\glspl{standardModality} x 2 \glspl{attendedModality}).  Due to the
multiple-comparison problem~\citep{westfallAndYoung93} the probability that at
least one of these correlations was found significant while there was no
association between subjects' error rates and models' decoding accuracies is
very large ($p\simeq0.94$). Only the correlation for the central cluster 19 and
unattended visual standards (Figure~\ref{fig:correlationsWithBehavior}a)
remained significant after adjusting for multiple comparisons using the
resampling procedure described in Section~\ref{sec:multipleComparisons}. It is
thus highly probable that some of the five unadjusted significant correlations
with error rates that did not survive the multiple comparison test occurred by
chance.  However, note that, as expected, all of these correlations were
negative, which is a very rare event under the assumption that all of these
correlations occurred by chance (i.e., the probability of finding five negative
correlations assuming equal probability for positive and negative correlations
is $p=0.5^5=0.03$). Thus, not all of the five unadjusted significant
correlations that did not survive the multiple comparison adjustment may have
occurred by chance.
None of the correlations with mean reaction times survived the multiple
comparisons adjustment.  Nonetheless, since the brain regions were these
unadjusted significant correlations occurred have been previously connected to
foreperiod effect on reaction times, as we discuss below, it is possible that
not all these correlations occurred by chance.

\subsection{Timing of the SFP effect on ITPC}
\label{sec:timing}

In Section~\ref{sec:directEvidence} we advanced an interpretation of the coefficients of
the decoding model, and used a simple trial averaging procedure to verify that
this interpretation was sound. As shown in
Figures~\ref{fig:exampleSingleTrialAnalysis}e and
\ref{fig:avgDMPsAndCoefs}d-f, modulations of \gls{ITPC} by the \gls{SFPD},
as represented by the coefficients of decoding models, are not constant in
time, but fluctuate in an oscillatory manner. 
In the 500~ms window following the presentation of a standard, these
coefficients displayed one or more peaks, as in
Figures~\ref{fig:exampleSingleTrialAnalysis}e
and~\ref{fig:avgDMPsAndCoefs}g-h, respectively. Details on the identification
of these peaks are given in Section~\ref{sec:identificationOfCoefsPeaks}. 
The time of the largest peak corresponds to the latency after the presentation
of standards where modulations of \gls{ITPC} by the \gls{SFPD} are strongest.
In this section we use this time to represent the timing of the \gls{SFP}
effect on \gls{ITPC}, and study how this timing varies across brain regions,
\glspl{standardModality}, and \glspl{attendedModality}.

The median time of the largest peak of the decoding model coefficients was
292~ms, with a 95\% confidence interval of [284, 300]~ms. We grouped the
clusters of \glspl{IC} into five groups, as indicated in
Figure~\ref{fig:clusters}, and examined the mean time of the largest peak for
each group of clusters and \gls{attendedModality} (Figure~\ref{fig:timing}a),
and for each group of clusters and and \gls{standardModality}
(Figure~\ref{fig:timing}b).
We found that peaks occurred earlier when attention was oriented to the auditory
than to the visual modality (Figure~\ref{fig:timing}a).
This modulation by attention was strongest at the central group of clusters,
where the peak of the models' coefficients occurred earlier for auditory than
visual standards (Figure~\ref{fig:timing}b).  Therefore, the timing of the
\gls{SFP} effect on \gls{ITPC} was most strongly modulated, by the
\gls{attendedModality} and by the \gls{standardModality}, at central brain
regions.
In addition, for visual standards, the \gls{SFP} effect on \gls{ITPC} occurred
earlier at occipital than central brain regions (Figure~\ref{fig:timing}b).

A first ANOVA, using data from all models significantly different from the
intercept only model (p\textless 0.01; likelihood-ratio permutation test,
Section~\ref{sec:additionalStatInfo}), with time of the largest peak of the
coefficients as dependent variable, found a significant main effect of
\gls{attendedModality} (F(1,230)=4.259, p=0.0402). A
posthoc test showed that the peak occurred
earlier for the auditory than the visual \glspl{attendedModality} (z=1842,
p=0.0327; black asterisk to the left of Figure~\ref{fig:timing}a).
A second ANOVA restricted to models corresponding to the visual
\gls{standardModality} found a significant main effects of group of clusters
(F(4,135)=4.4073). A posthoc analysis revealed that the peak was earlier for the
occipital than the central group of clusters (z=4.163, p=3.14e-05; red
asterisks in Figure~\ref{fig:timing}b).
A third ANOVA restricted to models corresponding to the central group of
clusters found significant main effects of \gls{attendedModality}
(F(1,43)=4.5061, p=0.0396) and of \gls{standardModality} (F(1,43)=14.2173,
p=0.0005). A posthoc analysis showed that the peak was earlier
when attention was oriented to the auditory than visual modality (z=2.178,
p=0.029360; black asterisk next to \emph{Central} in Figure~\ref{fig:timing}a)
and earlier for auditory than visual standards (z=3.406, p=0.000659; black
asterisks in Figure~\ref{fig:timing}b).

\begin{figure}
\begin{center}
\includegraphics[width=5.5in]{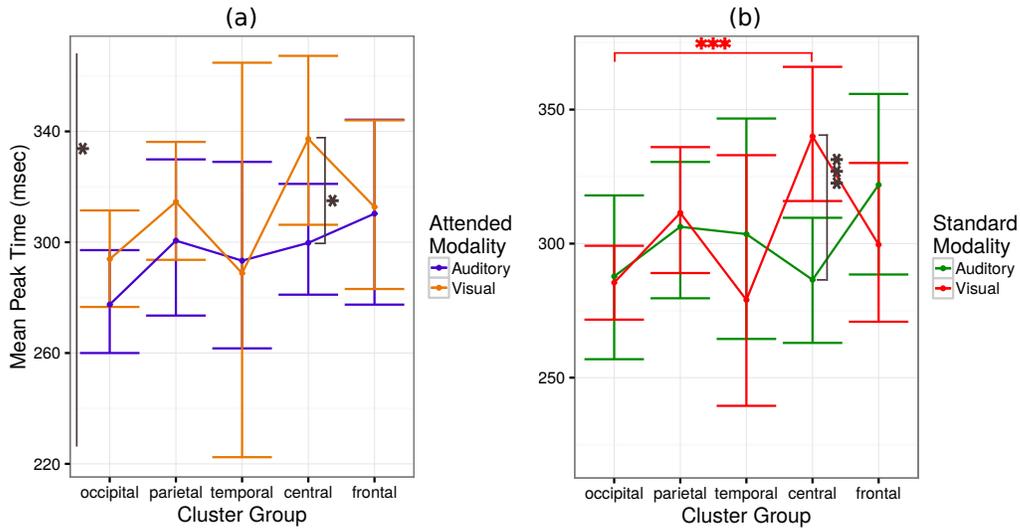}

\caption{Timing of the \gls{SFP} effect on \gls{ITPC}.
(a) Mean coefficients peak times for models corresponding to the auditory (blue
                                                                            points)
and visual (orange points) \gls{attendedModality}, as a function of group of
clusters. On average, when attending to the audition, the \gls{SFP} effect on
\gls{ITPC} occurred earlier than when attending to vision. 
(b) Mean coefficients peak times for models corresponding to the auditory
(green points) and visual (red points) \gls{standardModality}, as a function of
group of clusters. On average, for visual standards, the \gls{SFP} effect on
\gls{ITPC} occurred earlier in the occipital than in central brain regions. 
The effect was most strongly modulated over the central brain region by both
the \gls{attendedModality} (panel (a)) and by the \gls{standardModality} (panel
                                                                           (b)).
}

\label{fig:timing}
\end{center}
\end{figure}

\section{Discussion}
\label{sec:discussion}

Here we reported a new foreperiod effect on the \gls{ITPC} activity triggered
by standards. We demonstrated that, when visual or auditory standards are
preceded by a warning signal (Figure~\ref{fig:experiment}), \gls{ITPC} is
modulated by the delay between the warning signal and the standards
(Figures~\ref{fig:exampleSingleTrialAnalysis}, clusters with a colored dot in
Figure~\ref{fig:clusters}, and blue entries in
Table~\ref{table:nModelsSignCorPredictionsVsSFPDs}). We used a decoding method to detect the
new foreperiod effect. We demonstrated that this effect is not an artifact of
the decoding method, since it can be observed using simple trial averages
(Figure~\ref{fig:avgDMPsAndCoefs}). Importantly, the strength of the new
foreperiod effect (i.e.; the accuracy in decoding the \gls{SFPD} from the phase
activity elicited by a standard) is correlated with subjects' detection
performance and reaction speed (example and summary of correlations with
detection performance in Figures~\ref{fig:correlationsWithBehavior}a
and~\ref{fig:correlationsWithBehavior}b, clusters with a double dagger in
Figure~\ref{fig:clusters}, and blue entries in
Table~\ref{table:statsErrorRates}; with reaction speed in
Figures~\ref{fig:correlationsWithBehavior}c
and~\ref{fig:correlationsWithBehavior}d, clusters with a dagger in
Figure~\ref{fig:clusters}, and blue entries in
Table~\ref{table:statsMeanRTs}). The effect occurred earlier when attention
was oriented to the auditory modality (Figure~\ref{fig:timing}a), for visual
standards it happened earlier at occipital than in central brain regions
(Figure~\ref{fig:timing}b), and modulations of the timing of the effect by the
\gls{attendedModality} or by the \gls{standardModality} was stronger over a
central brain regions (Figures~\ref{fig:timing}a and~\ref{fig:timing}b).


Phase coherence has been used to describe the synchrony between different
recording electrodes~\citep[e.g.,][]{lachauxEtAl99, hanslmayrEtAl07} or, as in
this study, to characterize the alignment of phases across multiple trials at
a single site \citep[e.g.,][]{buschEtAl09,thorneEtAl11,cravoEtAl15}. Previous measures of
phase alignment at a single site were the result of averaging data across
multiple trials. However, phase coherence may vary from trial to trial, and
important information could be lost in trial averages. We have previously
proposed the ERP image plot, a useful visualization of trial-by-trial
consistencies in a set of event-related data
trials~\citep{makeigEtAl04,delormeEtAl15}. To build this plot, trials are first
sorted in order of a relevant data or external variable and then plotted as a
color-coded two-dimensional image, as in
Figure~\ref{fig:exampleSingleTrialAnalysis}c. ERP image plots have proven
useful in revealing trial-to-trial consistencies otherwise hidden in trial
averages~\citep{makeigEtAl99,jungEtAl01,delormeEtAl07,ontonAndMakeig09}.
However, methods are lacking to establish the statistical significance of
apparent features in these plots. Here we used a multivariate linear
regression model to assess the significance of the observation, from the ERP
image in Figure~\ref{fig:exampleSingleTrialAnalysis}c, that the \gls{ITPC}
evoked by standards was modulated by the \gls{SFPD}. We thus found a new
foreperiod effect on the single-trial \gls{ITPC} triggered by standards.
Importantly, this effect would have been lost if we had indiscriminately
averaged phases across all trials.

The fact that models accurately decoded the \gls{SFP} duration from the
\gls{ITPC} triggered by standard stimuli does not necessarily mean that these
two quantities are correlated with each other. In principle, a sufficiently
complex model can decode arbitrarily accurately any experimental variable from
any physiological measurements, if the complex model overfitted the
data~\citep{gemanEtAl92}. To avoid overfitting we used leave-one-out
crossvalidation to generate model decodings. To further validate that the
obtained results were not an artifact of our decoding method, we showed that
similar modulations of ITPC were obtained by a simple trial averaging procedure
(Section~\ref{sec:directEvidence}). In addition, we developed surrogate
controls to further support the hypothesis that (1) the \gls{SFP} duration is
the experimental variable causing the \gls{SFP} effect on \gls{ITPC} and (2)
the \gls{ITPC} triggered by standards is the aspect of brain activity modulated
by the \gls{SFP} effect on \gls{ITPC} (Section~\ref{sec:controls}). Further
evidence for the reliability of our results comes from the significant
correlations between the decoding power of the models and subjects behavioral
measures (Section~\ref{sec:correlationsWithBehavior}) and from the anatomical
specificity of the clusters with a large proportion of significant models
(Section~\ref{sec:measuringSFPeITPC}).

\subsection{What psychophysical processes generate the SFP effect on ITPC?}

The current investigation suggests that temporal expectancy plays an important
role in the generation of the \gls{SFP} effect on \gls{ITPC}, although further
investigations are needed to validate this suggestion, as discussed below.
First, the strongly periodic stimulation in our experiment should have induced
temporal expectations in subjects. This claim is supported by the
observation of reliable foreperiod effects on reaction times in 26\% of the
subjects and \glspl{attendedModality} (Section~\ref{sec:behavioralResults}) and
by previous arguments relating the foreperiod effect on reaction times to
expectancy~\citep[e.g.,][]{niemiAndNaatanen81}.
Second, temporal expectations contribute to faster reaction times and improved
perception~\citep[e.g.,][]{correa10}, while the strength of the \gls{SFP}
effect on \gls{ITPC} was larger in subjects achieving larger detection rates
and faster mean reaction times
(Figure~\ref{fig:correlationsWithBehavior} and Tables~\ref{table:statsMeanRTs}
and~\ref{table:statsErrorRates}).
Third, although it is possible to develop temporal expectancies for unattended
stimuli, these expectancies are stronger for attended stimuli. Hence, the
strength of the foreperiod effect on \gls{ITPC} should be larger for attended
than unattended stimuli (part II of this manuscript).
Fourth, as we discuss in the next section, most of the brain regions where we
observed significant correlations between the strength of the \gls{SFP} effect
on \gls{ITPC} and behavioral measures, have been implicated in temporal
expectation.






An unresolved question in the field of temporal expectation is whether it
affects motor (i.e., rate of motor response), premotor (i.e., response
preparation), perceptual (i.e., buildup of information about the stimulus), or
executive (i.e., decision mechanism) stages. The majority of the evidence
suggests a motor influence \citep[e.g., ][]{sanders98, bruniaAndBoelhouwer88,
coullAndNobre98}, premotor effects have also been reported~\citep[e.g.,
][]{bausenhartEtAl06, hackleyAndValleInclan99, mullerGethmannEtAl03}, and more
recent studies have shown influences on perceptual \cite[e.g.,
][]{correaEtAl05, mentoEtAl13, lange09, rolke08} and executive \cite[e.g.,
][]{naccacheEtAl02, correaEtAl10} functions. 
Results from the current study suggest that the aspect of temporal expectation
measured by the \gls{SFP} effect on \gls{ITPC} mostly affects high-order
visual processing.
That the strength of the \gls{SFP} effect on \gls{ITPC} is significantly
correlated with stimuli detectability
(Figure~\ref{fig:correlationsWithBehavior}a,
and~\ref{fig:correlationsWithBehavior}b and Table~\ref{table:statsErrorRates})
and that the \gls{SFP} effect on \gls{ITPC} is calculated from EEG activity
triggered by standard stimuli to which subjects did not respond, indicates that
the temporal expectation measured by the \gls{SFP} effect on \gls{ITPC} affects
non-motor activity.
Also, the long latencies after the presentation of standards at which
modulations of the \gls{SFP} were largest (the median time of the largest peak
of the regression coefficients was 292~ms, Section~\ref{sec:timing}) argues
against a relation between the temporal expectation measured by \gls{SFP}
effect on \gls{ITPC} and early sensory stages.
Lastly, that the \gls{SFP} effect on \gls{ITPC} was substantially stronger for
the visual than for the auditory \gls{standardModality} (as indicated in
Section~\ref{sec:measuringSFPeITPC}, Figure~\ref{fig:clusters}, and
Table~\ref{table:statsMeanRTs}, the number of clusters with a large proportion
of models significantly different from the intercept-only models was larger for
the visual \gls{standardModality} and, as shown in
Section~\ref{sec:correlationsWithBehavior}, Figure~\ref{fig:clusters}, and
Tables~\ref{table:statsMeanRTs} and~\ref{table:statsErrorRates}, the number of
clusters showing significant correlations between models' decodings and
behavioral measures, as well as the strength of these correlations, was larger
for the visual than the auditory \gls{standardModality}) suggests that the
effects of the temporal expectation measured by the new foreperiod effect may
be specific to the visual modality. Note, however, that the weakness of the
\gls{SFP} effect on \gls{ITPC} on the auditory modality could be due to the
fact that our 32-channel EEG recording system may not have covered auditory
brain regions with sufficient density.

\subsection{Relations to previous investigations}



Previous investigations have shown that brain regions associated with clusters
whose \gls{ITPC} activity is significantly correlated with detection
performance (colored cells in Table~\ref{table:statsErrorRates}) or with
reaction speed (colored cells in Table~\ref{table:statsMeanRTs}) are
related to temporal expectancy, as discussed next.

Electrophysiological correlates of temporal expectation generated by the
foreperiod effect have mainly been investigated through the CNV. Studies
demonstrating that the CNV amplitude is maximal at the vertex~\citep{walter67},
that the the supplementary motor area (SMA) and the anterior cingulate cortex
(ACC) are the sources of the CNV's O wave~\citep{cuiEtAl00, zappoliEtAl00,
gomezEtAl01}, and that the premotor cortex appears to give rise to the CNV's E
wave \citep{hultinEtAl96}, indicate the central brain region associated with
cluster 19 is related to attention orienting and response preparation. Also,
more recent studies have shown that posterior sites contribute to the
generation of the E wave of the CNV~\citep{gomezEtAl01, gomezEtAl03}. Of
particular interest are the asymmetric activations in the left middle-occipital
cortex observed by~\citet[][Figures 2 and 3]{gomezEtAl03}. These studies
suggest that activity in the left parieto-occipital region associated to
cluster 04 (Figure~\ref{fig:clusters3D}) may be related to perceptual
anticipation.

Findings from a CNV study focused on attention in time by
\citet{miniussiEtAl99} are in close agreement with results reported here. In
this study subjects were instructed to detect a target stimulus as quickly as
possible. Two central cues which predicted if the subsequent target would occur
600 or 1400 ms after cue onset. Their analysis showed that the influence of
the foreperiod duration on event-related potentials elicited by the cue changed over time. The
earliest significant influence appeared 280~ms post-cue and occurred in
left-parieto-occipital electrodes, and by 480~ms post-cue this difference has
moved to the vertex~\citep[Figure 7 in ][]{miniussiEtAl99}. Thus, the
significant correlation between the strength the \gls{SFP} effect on \gls{ITPC}
and error rates observed in the left-parieto-occipital cluster 04 and the
central-midline cluster 19 (Figure~\ref{fig:clusters3D}) agrees with
the significant influences observed in~\citet{miniussiEtAl99} in
left-parieto-occipital and vertex electrodes. In addition, the observation
that the \gls{SFP} effect on \gls{ITPC} occurred earlier in occipital than in
central clusters (Figure~\ref{fig:timing}b) agrees with the finding
by~\citet{miniussiEtAl99} that significant foreperiod influences appeared
earlier in left-parieto-occipital electrodes than in electrodes surrounding the
vertex.


Fixed temporal expectations of when an event is likely to occur are marked by
activity in left premotor and parietal areas. However, if the event has still
not appeared by the expected delay, the right prefrontal cortex updates current
temporal expectations~\citep{coull09}.
The left prefrontal cluster 18 and left parietal cluster 5
(Figure~\ref{fig:clusters3D}) may be related to correspondingly
left-lateralized fMRI activations in attention on time
tasks~\citep[e.g.,][]{coullAndNobre98}.
Updates of temporal expectations occurred in the present study, since it used
variable foreperiod durations. Previous lesion~\citep{vallesiEtAl07} and
fMRI~\citep{vallesiEtAl09,coullEtAl00} studies have shown involvement of
activity in the right prefrontal cortex related to updating temporal
expectations in variable foreperiod duration tasks. Thus, we expected to
observe a significant correlation between the strength of the \gls{SFP} effect
on \gls{ITPC} and reaction times in the right prefrontal cluster 13
(Figure~\ref{fig:clusters3D}).
An early fMRI study on attention on time compared brain regions
activated by spatial and temporal attention~\citep{coullAndNobre98}. This and
other studies~\citep[e.g.,][]{coullEtAl04} found a region next the left insula
only activated during temporal attention. Interestingly, many of the \glspl{IC}
in cluster 7 are close to the insula (Figure~\ref{fig:clusters3D}).


Cravo and colleagues specifically studied how temporal expectation modulates
the \gls{ITPC} of neural oscillations in the EEG of humans. In a first
study, \citet{cravoEtAl11} asked if temporal expectations influenced
low-frequency rhythmic activity in the theta range (4-8~Hz). They
found that temporal expectations modulate reaction times simultaneously with
the amplitude, \gls{ITC}, and phase-amplitude coupling in central-midline
electrodes. Interestingly, they reported increased values of \gls{ITC} at times
of higher expectancy. 
The authors interpreted these modulations as reflecting motor preparation. Our
finding of a significant correlation between the \gls{SFP} effect on \gls{ITPC}
and detectability in the central cluster 19
(Figure~\ref{fig:correlationsWithBehavior}a) is
consistent with the report in \citet{cravoEtAl11}. But, as we discussed above,
we believe the modulations of \gls{ITPC} observed in this manuscript are
related to non-motor neural processes. 
The study by Cravo and colleagues used three foreperiod durations between a
warning signal and a Go/No-go target. Differently from previous and from this
study (see below), the authors found no modulations of \gls{ITC} with the
foreperiod duration. A possible explanation for this difference is that Cravo
and colleagues averaged \glspl{ITC} across subjects before testing for
modulations by the foreperiod duration. These modulations may have been present
at the single subject level, but obscured in the average, which highlights the
relevance of avoiding averages across subjects.
In a second study \citet{cravoEtAl13} studied the effect of phase entrainment
by repetitive visual stimulation of rhythmic activity in the delta range
(1-4~Hz) over occipital cortex in an orientation discrimination task. They
showed
that 150~ms before the presentation of a target stimulus the phase of delta
oscillations clustered around the phase corresponding to the largest contrast
gain. Considering the effects of expectancy in central brain regions in the
Go/No-Go task reported by \citet{cravoEtAl11}, and those in occipital brain
regions in the orientation discrimination task reported in \citet{cravoEtAl13},
the authors argued that ``\emph{the sites where temporal expectation modulate
low-frequency oscillations \ldots are highly dependent on the task demands}.''
However, in \citet{cravoEtAl11} effects were only studied, or at least
reported, over central regions, while in \citet{cravoEtAl13} effects were only
reported over occipital regions. It is possible that extending their analysis
to multiple brain regions could show effects over occipital areas
in~\citet{cravoEtAl11} and over central areas in~\citet{cravoEtAl13} and
invalidate their argument. This highlights the relevance of reporting results
over all brain regions.

In an MEG investigation of the effect of attention on primary visual cortex
activity, \citet{yamagishiEtAl08} showed that \gls{ITC} over a left-occipital
brain region was significantly correlated with subjects' orientation
discrimination ability when attention was oriented toward, but not away from, a
grating located on the right visual field. The authors concluded that the
phase-resetting measured by \gls{ITC} reflects an attention- and vision-related
top-down modulation to primary visual cortex.
Using a different brain-imaging modality, the investigation by
\citet{yamagishiEtAl08} supports our previous finding that \gls{ITPC} is
related to behavioral performance, and the finding of part II of this
manuscript that \gls{ITPC} is modulated by attention. Also, it is an
interesting coincidence that \citet{yamagishiEtAl08} found significant
correlations between \gls{ITC} and behavior in four time frequency points
between 200 and 300~ms after an attention-orienting cue, and that we found the
median of the largest \gls{SFP} effect on \gls{ITPC} at 292~ms
(Section~\ref{sec:timing}). However, our study and that of that
\citet{yamagishiEtAl08} differ in an important regard; they used a
trial-averaged measure of \gls{ITPC} while we used a single-trial one. This
single-trial analysis allowed us to find a new foreperiod effect, which shows
that \gls{ITPC} is not constant across trials, but varies systematically as a
function of the \gls{SFPD}.
To determine if in our study \gls{ITC} was also related to behavioral
performance, we correlated the peak \gls{ITC} value following the presentation
of a standard (measured as indicated in
Section~\ref{sec:itcAndPeakITCFrequency}) with error rates and mean reaction
times for all subjects, \glspl{standardModality}, and~\glspl{attendedModality}.
Details of this investigation are reported in Section~\ref{sec:itcAndBehavior},
Figure~\ref{fig:clustersITC}, and Tables~\ref{table:peakITCStatsErrorRates}
and~\ref{table:peakITCStatsMeanRTs}. Similarly to \citet{yamagishiEtAl08}, we
found that in the left-parieto-occipital cluster 04 \gls{ITC} was
significantly correlated with detection performance for attended, but not
unattended, visual stimuli. However, in the right-parieto-occipital cluster 15
we found a significant correlation between the \gls{ITC} triggered by attended
auditory standards and detection performance, and in the central
parieto-occipital cluster 09 we found a significant correlation between the
\gls{ITC} triggered by unattended auditory standards and detection performance.
These results appear to contradict the conclusions from \citet{yamagishiEtAl08}
that phase reset over occipital regions reflects vision-related top-town
modulations and that these modulations are related to attention. However, this
contradiction may be due to differences between our experiments (our temporal-
versus their spatial-attention experiments, or our detection versus their
discrimination experiments) or differences in the selection of the
time-frequency bin used to compute \gls{ITC} (we used the subject-,
\gls{standardModality}-, and \gls{attendedModality}-specific peak \gls{ITC}
frequency while \citet{yamagishiEtAl08} used the same time-frequency point for
all subjects).
The previous significant correlations between peak \gls{ITC} values and
subjects' error rates were found after computing 56 correlations. None of these
correlations survived the multiple-comparison adjustment
(Section~\ref{sec:multipleComparisons}). Therefore, some of the these
significant correlations could have appeared by chance. However, we suspect
that not all of them are spurious since 
eight of the nine unadjusted
significant correlations were negative. Assuming that these correlations
occurred by chance, the probabilities of obtaining positive and negative
correlations should be equal. Under this assumption, the probability of
obtaining eight out of nine significant correlations is very small ($p<0.02$).
In our study, the significant correlations between \gls{ITC} and behavior and
those between the \gls{SFP} effect on \gls{ITPC} and behavior appear to reflect
different neural processes since, with the exception of the correlation with
error rates in the left parieto-occipital cluster 04, correlations with
\gls{ITC} occurred in different clusters, \glspl{standardModality}, and
\gls{attendedModality}, than those with the \gls{SFP} effect on \gls{ITPC}
(compare daggers in Figures~\ref{fig:clusters} and~\ref{fig:clustersITC}). For
example, most significant correlations between \gls{ITC} and error rates error
rates corresponded to auditory standards (Figure~\ref{fig:clustersITC} and
Table~\ref{table:peakITCStatsErrorRates}), while those between the \gls{SFP}
effect on \gls{ITPC} and error rates corresponded to visual standards
(Figure~\ref{fig:clusters} and Table~\ref{table:statsErrorRates}).

An investigation by \citet{buschEtAl09} also showed that stimulus detectability
is related to prestimulus phase. Below we highlight a few differences between
this study and ours. The purpose of this comparison is not to belittle the
research by \citet{buschEtAl09}, which we believe was excellent, but to
emphasize a few novel characteristics of our analysis.
First, the \gls{ITC} and the phase coherence index used in the main text of
\citet{buschEtAl09} are phase measures for groups of trials, while the
\gls{DMP} used in this article is a phase measure for single trials. Thus, we
were able to decode the \gls{SFP} duration from \gls{DMP} values in a
trial-by-trial basis, while these single-trial decodings are not possible with
phase measures for groups of trials.
In the supplementary Figure 2, \citet{buschEtAl09} reported the results of
using a classifier to predict in a single trial if a stimulus will be detected
based on the phase at a single time-frequency point. The fact that, after
averaging the predictions of the classifiers of different subjects, they found
a correlation between the predictions of the classifiers and the detection hit
ratio is remarkable. However the analysis method is limited by using only one
time-frequency point to predict the detectability of a stimulus. The models
used in this manuscript used a 500~ms window of phase values to decode the
\gls{SFP} duration. That is, the second difference between the study by
\citet{buschEtAl09} and ours is that they used univariate models, while we used
multivariate ones.
Note that we have optimized models to decode \gls{SFP} durations, and then
found significant correlations between the decoding power of these models and
behavior without fitting model parameters to behavioral data. That a model
optimized to predict behavior from a stimulus feature (e.g., \gls{ITPC})
achieves reliable predictions does not necessarily mean that the stimulus
feature is relevant to the behavior. A sufficiently complex model can 
predict arbitrarily closely any behavior from an irrelevant stimulus feature,
if the model overfit the data~\citep{gemanEtAl92}.
Although there exist methods to
minimize the risks of overfitting, such as using separate pieces of data to
estimate the parameters of a model and to evaluate its predictive power (the
method used in \citet{buschEtAl09}), this risk never disappears when optimizing
a model to predict behavior. Differently, the procedure used in this article to
show that \gls{ITPC} is related to behavior does not suffer from the
overfitting problem, since no model parameter was fitted to behavioral data.
The third difference is that the study by \citet{buschEtAl09} averaged
data across subject while we analyzed different subjects separately. This
single-subject analysis allowed to observe that those subjects whose brain
activity was more strongly modulated by the \gls{SFP} duration were those who
performed better
Fourth, \citet{buschEtAl09} reported the analysis of a single electrode (Fz),
while our study described the analysis of the activity of \glspl{IC} over the
whole brain. As we discussed above in relation to the research by Cravo and
collaborators, we believe it is important to report effects across the whole
brain to show how hypothesis supported by activity in a given brain region
generalize across the brain.
Fifth, \citet{buschEtAl09} analyzed activity from EEG channels, while the
present study characterized the activity of \glspl{IC}. Applied to EEG, ICA
finds maximally independent sources generating recorded potentials. Scalp
representations of these sources resemble fields generated by current dipoles
inside the brain, and biological arguments suggest that these
maximally-independent sources reflect the synchronized activity of neurons in
compact cortical regions \citep{delormeEtAl12}. Thus, our analysis most
probably characterized the activity of cortical sources while that of
\citet{buschEtAl09} described the activity of mixtures of these sources.
Sixth, the study by \citet{buschEtAl09} characterized phase correlates of
visual processing, while our study investigated phase correlates of both
visual and auditory processing. The multi-modal nature of our experiment
allowed us to see that the \gls{SFP} effect on \gls{ITPC} was substantially stronger in the
visual than in the auditory modality.
An important advantage of the study by \citet{buschEtAl09} over ours is that,
for each subject separately, they calibrated the visual stimuli so that the
detection rate of the subject was close to 50\%. No such individualized
calibration was performed in our experiment, and the averaged detection rate
across subjects was 86\%. Having a similar number of trials where subjects
perceived and failed to perceive the stimulus facilitates the estimation of
the phase correlates of stimulus perception. However, the fact that the
methods used in this manuscript could find reliable phase correlates of
stimulus detection with sub-optimal stimulation shows that these methods
could be applied to a larger number of EEG experiments where stimulation was
not optimized for each subject.

\subsection{What neural processes generate the SFP effect on ITPC?}
\label{sec:neuralProcesses}

What brain processes can make the phases of standards closer to the
\gls{warningSignal} more aligned at some times but more misaligned
at other times than the phases of of standards further away from the
\gls{warningSignal}?
Also, how can it happen that phase alignment 
fluctuates in an oscillatory manner, as indicated in
Figures~\ref{fig:avgDMPsAndCoefs}a-c and~\ref{fig:avgDMPsC04Visualswitchvision}?
Large values of \gls{ITPC} are typically associated with phase resets,
like those that occur upon the presentation of visual stimuli~\citep[e.g.,][]{makeigEtAl02}. But a single phase-reset event is
not the answer to the previous question, since after it all trials should have a similar
phase, so that there would be no difference in \gls{ITPC} between trials
closer to and further away from the \gls{warningSignal}. A single phase-reset
event can neither explain the fluctuations in \gls{ITPC}.
These fluctuations are reminiscent of known variations in perception and
reaction times that have been related to the phase of low-frequency
neural
rhythms~\citep{bartleyAndBishop32,bishop32,jarcho49,lindsley52,lansing57,callawayAndYeager60,dustmanAndBeck65,trimbleAndPotts75,valeraEtAl81,mathewsonEtAl09,buschEtAl09,mathewsonEtAl12,buschAndVanRullen10,bonnefondAndJensen15,liuEtAl14,drewesAndVanRullen11,thorneEtAl11,fiebelkornEtAl11,chakravarthiAndVanRullen12,cravoEtAl15,miltonAndPleydellPearce16},
reviewed in~\citet{vanRullenAndKoch03}.
However, if after a reset the phase evoked by standards oscillated identically
across trials there would not be a difference in \gls{ITPC} between trials
closer and further-away from the \gls{warningSignal}. 

A brain mechanism generating the above effects should be able to replicate the
fluctuations in \gls{ITPC} shown in Figures~\ref{fig:avgDMPsAndCoefs}a-c
and~\ref{fig:avgDMPsC04Visualswitchvision}. We make three observation:
First, these fluctuations appear to oscillate at a very low
frequency (i.e., less than 1~Hz), independently of the frequency at which the
phase of trials was measured (the latter frequency is that of the cosine of the
mean phase given by the dashed curve in Figures~\ref{fig:avgDMPsAndCoefs}a-c
and~\ref{fig:avgDMPsC04Visualswitchvision}). Coincidentally, fluctuations at 1~Hz in
visual detectability have been reported by \citet{fiebelkornEtAl11}.
Also,
fluctuations in somatosensory detectability (between 0.01 and 0.1~Hz) have been
described by \citet{montoEtAl08}.
Second, in some figures the phase of the oscillations in \gls{ITPC} at time
zero differs between early and late trials (e.g., the phase of the blue curve
corresponding to early trials in Figure~\ref{fig:avgDMPsAndCoefs}a is more
advanced than that of the red curve corresponding to late trials).
Third, as shown in Figure~\ref{fig:correlationsWithBehavior}, these
fluctuations are related to subjects' behaviors.

We propose a simple model for the generation of low-frequency oscillations that
can account for the previous observations. The component of the
recorded potential corresponding to trial $i$, at frequency $f$, and time
$t$ is given by Eq.~\ref{eq:trial}:

\begin{eqnarray}
trial(i,f,t)&=&cos(2\pi ft+\theta+n(SFPD[i],t)),\;\text{with}\label{eq:trial}\\
n(SFPD,t)&=&\mathcal{M}(\mu=0,\kappa=\kappa(SFPD,t))\label{eq:noise1}\\
\kappa(SFPD,t)&=&\frac{\text{max}\kappa-{min}\kappa}{2}\;cos(2\pi f_n
t+\theta(SFPD))+\frac{{min}\kappa+\text{max}\kappa}{2},\;\text{and}\label{eq:noise1Precision}\\
\theta(SFPD)&=&\frac{\pi}{(\text{maxSFPD}-\text{minSFPD})^s}(\text{maxSFPD}-SFPD)^s\label{eq:noise1PrecisionPhase}
\end{eqnarray}

To account for the first previous observation, we assume that the phase of the
cosine is contaminated by an additive noise $n$ (Eq.~\ref{eq:noise1}) following
a von Mises distribution with a precision parameter, $\kappa$, varying
sinusoidally on time (Eq.~\ref{eq:noise1Precision}). When the precision of this
noise is small and large we should observe decoherent and coherent activity,
respectively, and, because the noise precision varies sinusoidally in time, we
should observe alternations between coherence and decoherence, as shown in
Figures~\ref{fig:avgDMPsAndCoefs}a-c
and~\ref{fig:avgDMPsC04Visualswitchvision}. 
To account for the second previous observation, we make the phase at time zero
of the precision of the noise vary smoothly as a function of the \gls{SFPD}
(Eq.~\ref{eq:noise1PrecisionPhase}).
To account for the third observation, we speculate that subjects achieving
better detection performance were those showing larger modulations in the
precision of the noise. For these subjects the \gls{ITPC} should be more
different between trials closer to and further away from the
\gls{warningSignal}, and models should more reliably decode the \gls{SFPD} from
\gls{ITPC} activity, in agreement with
Figure~\ref{fig:correlationsWithBehavior}. 
The parameters of the previous model are the frequency, $f$, the noiseless
phase, $\theta$, and the vector of \glspl{SFPD}, $SFPD$, in Eq.~\ref{eq:trial};
the minimum and maximum values, ${min}\kappa$ and $\text{max}\kappa$, and the
frequency, $f_n$, of the noise precision in Eq.~\ref{eq:noise1Precision}; and
the steepness of the change in the noise precision phase as a function of the
\gls{SFPD}, $s$, in Eq.~\ref{eq:noise1PrecisionPhase}.

To validate the previous speculations we simulated data following the model in
Eq.~\ref{eq:trial} (Figure~\ref{fig:simulatedOscillations}) and fitted a
decoding model to this data, as described in Section~\ref{sec:simulations}.
With this simulated data we obtained oscillations in averaged \gls{DMP}
(Figures~\ref{fig:avgDMPsAndCoefsSimulated}a,b), as in
Figure~\ref{fig:avgDMPsAndCoefs}a-c. We also attained oscillations in the
difference between the averaged \gls{DMP} between trials closer to and further
away from the \gls{warningSignal}
(Figures~\ref{fig:avgDMPsAndCoefsSimulated}c,d), as in
Figure~\ref{fig:avgDMPsAndCoefs}d-f.
In addition, we verified that a model fitted to data with larger fluctuations
of the precision of the noise generated more accurate decodings than those of
a model fitted to data with smaller modulations of this precision, supporting
the previous argument on the relation between fluctuation of \gls{ITPC} and
subjects' behaviors.  As explained in Section~\ref{sec:simulations}, the
correlation coefficient for a model fitted to data with smaller fluctuations of
the precision of the noise was r = 0.23 (95\% CI=[0.12, 0.34]), which was
significantly smaller than that for a model fitted to data with larger
fluctuations of the precision of the noise r = 0.51 (95\% CI=[0.43, 0.59]).

These results show that a simple sinusoidal oscillation with a noise process
added to its phase captures salient features of the \gls{SFP} effect on
\gls{ITPC}, suggesting that this model is a good first description of how the
\gls{SFPD} modulates the phase of neural oscillations. Key properties of this
noise process are its precision oscillating in time and its
\gls{SFPD}-dependent phase at time zero.

\subsection{Single-trial predictive methods for EEG}

Single-trial predictive models are seldomly used with EEG recordings to
characterize the neural basis of psychophysical process~\citep[but
see][]{sajdaEtAl09,pernetEtAl11a}.
These models are mostly applied to EEG recordings in brain computer interface
applications~\citep[e.g.,][]{nicolasAlonsoAndGomezGil12,wolpawEtAl02}.
However, these applications are usually focused on the predictive power of
these models, and not so much on the brain mechanisms underlying the
psychophysical process they attempt to predict.
This omission is striking, since single-trial predictive models have been very
successful for understanding the neural basis of psychophysical processes using
fMRI
recordings~\citep[e.g.,][]{haxbyEtAl01,haynesAndRees05,kamitaniAndTong05,kayEtAl08,naselarisEtAl09,serencesAndBoynton07,reddyEtAl10,polynEtAl05,mitchellEtAl08,abramsEtAl11,hamptonAndODoherty07}.
For reviews see~\citet{naselarisEtAl11, tongAndPratte12}.
Note that applying single-trial predictive methods to EEG recordings offers the
possibility of understanding the neurophysiological basis of behaviors that
occur at fast time scales, which cannot be studied with typical fMRI
recordings. For example, it would have been impossible to find that the
\gls{SFPD} modulates the \gls{ITPC} triggered by standards since these
modulations occur in a 500~ms-long time window that cannot be resolved with the
two seconds sampling rate of typical fMRI experiments.

LIMO EEG~\citep{pernetEtAl11b} is an existing single-trial predictive method
for EEG recordings. It attempts to predict the activity of electrodes using
experimental factors as inputs. The method described in this manuscript differs
from the previous one in that the EEG recordings are used as inputs to predict
an experimental factor (i.e., the \gls{SFP} duration). That is, LIMO EEG is an
encoding method, aiming at describing how experimental factors are encoded in
EEG activity, while the method introduced here is a decoding one, seeking to
predict experimental factors from EEG activity.
An advantage of decoding over encoding methods is that they can be readily used
to determine that a certain pattern of brain activity (e.g., \gls{ITPC}) is
related to a particular behavioral characteristic (e.g., reaction times).  This
can be done by using the decoding method to decode an experimental factor that
is related to the particular behavior (e.g., the \gls{SFPD} is related to
reaction times). Then the decoding of the method should be related to the
behavior, as we showed in this article.  On the contrary, is not
straightforward to relate patterns of brain activity and behavior with encoding
methods.
Another advantage of the method used here compared to LIMO EEG is that the
former method is a multivariate test, while the latter uses univariate tests.
Both methods can be used to test if a scalar trial attribute (e.g., the
\gls{SFPD}) is related to a time-varying trial attribute (e.g, the \gls{ITPC}
at time t). However, the method introduced here uses information across all
time points to test if a time-dependent trial attribute at a given time is
related to the scalar trial attribute, while LIMO EEG only uses information at
one time point to test if the time-dependent trial attribute at that point is
related to the scalar trial attribute. If the time-varying trial attribute is
correlated across time (as happens with most EEG time-varying attributes) a
multivariate method should be more powerful than a univariate one.

The construction of predictive methods for EEG data requires to address two
signal processing problems: the curse of dimensionality and multicollinearity
of the EEG regressors. The curse of dimensionality refers to the exponential
increase of the volume of a space as its dimensionality
grows~\citep{bellman61}. In predictive models the dimensionality of the input
space is given by the number of predictors. To estimate accurate models one
should use training data densely covering the volume of the input space.
However, if the input space is very high dimensional, collecting enough
training data to densely cover the ``exponentially large'' volume of the
input space becomes unfeasible. For example, in EEG predictive methods one
would like to use all EEG samples in an interval preceding an event of
interest (i.e., presentation of a standard) as regressors to predict
an stimulus property (i.e., the \gls{SFP} duration). However, if this interval is
long and/or the EEG sampling rate is large, the number of regressors will be
very large and the predictive method will become impractical, due to the curse
of dimensionality. Multicollinearity in the inputs is another important
problem in predictive models because, among other things, it yields
parameters estimates with large variance~\citep{belsleyEtAl04}, which in turn
limits the inferences that can be made from these estimates.

The predictive method used in this article was a simple one. To address the
previous signal processing problems we used a regularized linear regression
model (Section~\ref{sec:linearRegressionModel}) and estimated its parameters
using a variational method (Section~\ref{sec:lmEstimationMethod}).
We have developed more elaborated predictive methods for the characterization
of responses of visual cells from their responses to (correlated) natural
stimuli~\citep{rapelaEtAl06, rapelaEtAl10}. These are nonlinear methods
specifically designed to address the curse of dimensionality and
multicollinearity problems. Using nonlinear predictive models, carefully
validated with behavioral data, to characterize EEG
recordings could generate superior predictions and be able to model a
broader range of psychophysical processes than those achievable with linear
models. We will apply these more advanced predictive
model to EEG recordings in future investigations.

In summary, the decoding method used in this article is just a first step in
using EEG activity to decode behavior and there exist ample possibilities for
further improvements.

\subsection{Future work}

In future investigations we will perform new experiments to test how
psychophysical factors modulate the \gls{SFP} effect on \gls{ITPC}. For this we
will build on thoughtful investigations on the psychophysical factors
associated with the CNV. Is the \gls{SFP} effect on \gls{ITPC} related to
expectancy~\citep{walterEtAl64}? Would it disappear if responses to deviants
are not required? Would it be attenuated if a proportion of deviants do not
require response? Is the \gls{SFP} effect on \gls{ITPC} linked to motor
preparation~\citep{gaillard77, gaillard78}? Does the emergence of this effect
requires motor responses? What is the relation between the \gls{SFP} effect on
\gls{ITPC} and intention to respond~\citep{lowEtAl66}? Would the strength of
this effect be directly proportional to the anticipated force needed in the
motor response? Is the \gls{SFP} effect on \gls{ITPC} related to
motivation~\citep{mcCallumAndWalter68}? Would the strength of this effect be
augmented when subjects are instructed to concentrate hard and respond quickly
to deviants? How does this effect relate to arousal~\citep{tecce72}? Would the
relation between arousal and this effect be U-shaped, as is the relation
between arousal and CNV amplitude~\citep[][Figure 6b]{tecce72}? Is the
\gls{SFP} effect on \gls{ITPC} linked to to explicit~\citep{macarAndVidal03,
pfeutyEtAl03, pfeutyEtAl05} or implicit~\citep{praamstraEtAl06} time
perception? Would this effect reflect the perceived duration of a target
interval when subjects compare it to the duration of a memorized interval? We
will address these questions in future investigations.

\section{Summary}
\label{sec:summary}

We claimed that phase coherence is a key aspect of neural activity currently
under used for the characterization of EEG recordings.
We proposed a single-trial measure of phase coherence (Figure~\ref{fig:dmp}) an
a method to relate it to experimental variables
(Figure~\ref{fig:exampleSingleTrialAnalysis}).
Using this measure and method we demonstrated a novel foreperiod effect on
single-trial phase coherence. 
We showed that the new foreperiod effect is not an artifact of the proposed
method since it can be observed in simple trial averages
(Figure~\ref{fig:avgDMPsAndCoefs}).
We demonstrated the relevance of the new foreperiod effect by reporting strong
correlations between the strength of the effect and subjects' error rates and
mean reaction times (Figure~\ref{fig:correlationsWithBehavior}).
We argued that temporal expectancy plays an important role in the generation of
the new foreperiod effect.
We hope the present manuscript has provided convincing evidence for the
relevance of phase coherence for the characterization of EEG recordings and
that it has shown that single-trial decoding methods can provide useful
information about brain processes.
We anticipate that the new foreperiod effect on phase coherence will lead to as
many important insights about anticipatory behavior as the foreperiod effect on
reaction times has done until now.

\section{Methods}
\label{sec:methods}

%
%
\subsection{Experiment information}
\label{sec:experimentalInformation}

We analyzed the experimental data first characterized in
~\citet{ceponieneEtAl08}. Below we summarize features of this data relevant
to the present study;  further details are given in the previous article.

\subsubsection{Subjects}

We only characterized the younger-adult subpopulation
in~\citet{ceponieneEtAl08}, comprising 19 subjects (11 females) with a mean age
of $25.67\pm5.94$ years.

\subsubsection{Stimuli}
\label{sec:stimuli}

Stimuli were sequentially presented in a visual and an auditory stream
(Figure~\ref{fig:experiment}a).  Auditory stimuli were 100~ms duration, 550~Hz
(deviants) and 500~Hz (standards) sine-wave tones. These tones were played via
two loudspeakers located at the sides of a 21-inch computer monitor. Visual
stimuli were light-blue (deviants) and dark-blue (standards) filled squares
subtending 3.3 degrees of visual angle, presented on a computer monitor for
100~ms on a light-gray background. Interspersed among the deviant and standard
stimuli were attention-shifting cues. These cues were presented bimodally for
200~ms, by displaying the words HEAR (LOOK) on orange letters on the computer
monitor and simultaneously playing the words HEAR (LOOK) on the loudspeakers.
The inter-stimulus interval (ISI) between any two
consecutive stimuli varied pseudo-randomly between 100 and 700~ms as
random samples from short, medium, and
large uniform distributions, with supports [100-300~ms], [301-500~ms], and
[501-700~ms], respectively. 
In each block, the ISI of 132, 60, and 72
pseudo-randomly chosen stimuli were drawn from the short, medium, and large
ISI distributions, respectively.  
Stimuli were presented in
blocks of 264 for a duration of 158 seconds. Each block consisted of 24 cue
stimuli (12 visual) and 240 non-cue stimuli (120 visual). The 120 non-cue
stimuli of each modality comprised 24 deviants and 96 standards.  
A video of the experimental
stimuli appears at \url{http://sccn.ucsd.edu/~rapela/avshift/experiment.MP4}

\subsubsection{Experimental design}
\label{sec:experimentalDesign}

The experiment comprised FOCUS-VISION, FOCUS-AUDITION, and SWITCH blocks. In
FOCUS-VISION (AUDITION) blocks subjects had to detect visual (auditory)
deviants and ignore attention-shifting cues. In SWITCH blocks these cues became
relevant, and after a LOOK (HEAR) cue subjects had to detect visual (auditory)
deviants. The type of block was told to the subject at the beginning of each
block. Subjects pressed a button upon detection of deviants.  Each
subject completed four FOCUS-VISION blocks, four FOCUS-AUDITION blocks, and 12
SWITCH blocks.  Here we only characterize SWITCH blocks.  Correct responses in
SWITCH blocks are shown in Figure~\ref{fig:experiment}b.  After a LOOK (HEAR)
cue subjects oriented their attention to the visual (auditory) modality, as
indicated by the magenta (cyan) segments in Figure~\ref{fig:experiment}b.  A
\gls{warningSignal} is a stimulus initiating a period of expectancy for a
forthcoming deviant.  LOOK (HEAR) cues were the \glspl{warningSignal}
in the SWITCH blocks characterized in this study.

\subsection{EEG acquisition}

Continuous EEG was recorded from 33 scalp sites of the International 10-20
system, with xxx electrodes and a yyy amplifier, and digitized at 250 Hz. The
right mastroid served as reference.

\subsection{Characterized epochs}
\label{sec:characterizedEpochs}

For each \gls{IC} of each subject we built four sets epochs, aligned at time
zero to the presentation of attended visual standards, attended auditory
standards, unattended visual standards, and unattended auditory standards, with
a duration of 500~ms (Figure~\ref{fig:exampleSingleTrialAnalysis}c)
We call \gls{standardModality} to the modality of the stimuli used to align a
set of epochs (visual or auditory), and we call \gls{attendedModality} to the
attended modality corresponding to a set of epochs (e.g., the visual (auditory)
modality for epochs aligned to the presentation of attended (unattended) visual
standards).
The mean number of epochs aligned to the presentation of attended visual
standards, attended auditory standards, unattended visual standards, and
unattended auditory standards was 150, 136, 99, and 94, respectively. Note that
the mean number of epochs aligned to the presentation of attended stimuli was
considerably larger than that aligned to the presentation of unattended
stimuli, for both the visual and auditory standards. Thus, for testing the
influence of attention on the \gls{SFPD} effect on \gls{ITPC}
(Section~\ref{sec:correlationsWithBehavior}) we equalized the number of epochs
used to fit attended and unattended models. For each attended model, we
selected a random subset of the attended epochs of size equal the number of
epochs used to fit the corresponding unattended model.
Figure~\ref{fig:histDLatenciesAndSFPDs}a displays histograms of deviant
latencies, with respect to the presentation time of the previous standard, for
all combinations of \glspl{standardModality} and \glspl{attendedModality}.  To
avoid possible movement artifacts due to the response to deviants, we excluded
from the analysis epochs where deviants of the attended modality were presented
in the 500~ms after the presentation of the standard at time zero.  The number
of deviants excluded from the analysis is given by the sum of counts to the
left of the red vertical lines in Figure~\ref{fig:histDLatenciesAndSFPDs}a.
Figure~\ref{fig:histDLatenciesAndSFPDs}b shows histograms of \glspl{SFPD} for
all combinations of \glspl{standardModality} and \glspl{attendedModality}. The
vertical green line correspond to the mean \gls{SFPD}, which was shorter for
epochs where attention was directed to the visual (1.9~sec, left panels) than
to the auditory modality (2.4~sec, right panels). A repeated-measures ANOVA
with \gls{SFPD} as dependent variable showed a significant main effect of
\gls{attendedModality} (F(1, 16264)=234.31, p\textless 0.0001) and a posthoc
analysis indicated that the mean \gls{SFPD} was shorter in epochs corresponding
to visual than to auditory attention (z=15.31, p\textless2e-16).  Further
information on this ANOVA appears in Section~\ref{sec:anovaFurtherInfo}.

\subsection{Deviation from the mean phase}
\label{sec:dmp}

Let $\theta_0$ be the phase of a given trial, and $\theta_1, \ldots, \theta_N$
be the phases of the trials in a reference set, all phases measured at the same
time-frequency point. Then, the deviation from the mean phase is a measure of
the difference between the phase of the given trial (i.e., $\theta_0$) and the
mean phase of all trials in the reference set, (i.e., mean direction,
$\bar{\theta}(\theta_1, \ldots, \theta_N)$, Eq.~\ref{eq:meanDirection}), as
illustrated in Figure~\ref{fig:dmp}. The circular variance ($CV$,
Eq.~\ref{eq:cv}) is used to measure this difference:

\begin{eqnarray}
DMP(\theta_0|\{\theta_1, \ldots, \theta_N\})=CV(\theta_0, \bar{\theta}(\theta_1, \ldots, \theta_N))
\label{eq:dmp}
\end{eqnarray}

\begin{figure}
\begin{center}
\includegraphics[width=4.5in]{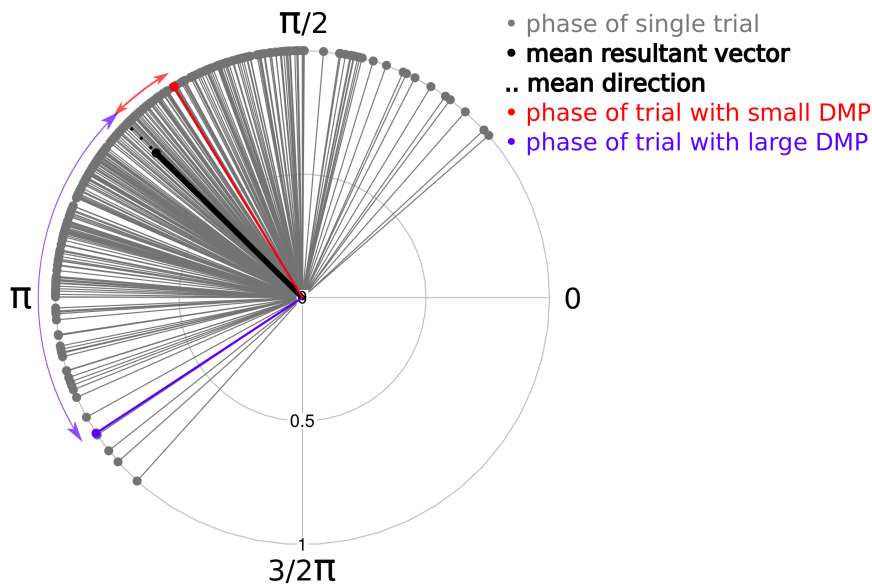}

\caption{Computation of the \gls{DMP}.  Given a set of phases, represented by
gray unit vectors, one first calculates the mean of these unit vectors (the
mean resultant vector, black solid vector) and extracts the phase of this mean
vector (mean direction, Eq.~\ref{eq:meanDirection}, black dotted
unit vector). Then the deviation from the mean phase for a given phase (e.g.,
the phase corresponding to the red unit vector) is a measure of distance (i.e.,
circular variance, Eq.~\ref{eq:cv}) between this phase and the mean direction
(e.g., red arc). The red and blue unit vectors correspond to phases with small
and large \glspl{DMP}, respectively.}

\label{fig:dmp}
\end{center}
\end{figure}

The \gls{DMP} is zero (one) if the phase of the given trial is equal (opposite)
to the mean phase of the trials in the reference set. The \gls{DMP} has
previously been used to investigate interelectrode phase coherence with EEG
recordings~\citep[][Figure 2c]{hanslmayrEtAl07}.  It is an appealing measure of
single-trial phase coherence since, as demonstrated in
Section~\ref{sec:proofAveragedDMPApproachesITC}, when there is large phase
concentration (i.e., \gls{ITC}$\simeq$ 1) the average value of the \gls{DMP}
approaches the ITC, a widely used measure of averaged \gls{ITPC}
(Section~\ref{sec:itcAndPeakITCFrequency}).  When there is not large phase
concentration, the mean phase cannot be estimated reliably, and therefore the
\gls{DMP} becomes unsound.  Here we only computed \gls{DMP} values at times
where the distribution of phases across trials was significantly different from
the uniform distribution (p\textless 0.01, Raleigh test).

\section{Acknowledgments}

We thank Dr.\ Steve Hillyard for suggesting Figure~\ref{fig:avgDMPsAndCoefs}
and to investigate the influence of the maximum \gls{SFPD} in the decodings of
the models (Figure~\ref{fig:selectionMaxSFPD}).
We thank Dr.\ Cyril Pernet for suggesting the use of robust correlation
coefficients.
We also thank Mr.\ Robert Buffington for computer assistance.
Most of the research in this paper was performed with free software.
Computations were done with R~\citep{r12}, the article was written in
\LaTeX~\citep{lamport94}, figures were prepared with
Inkscape~\citep{harringtonEtAl04}, all in a personal computer and in clusters
of personal computers running Linux~\citep{torvalds08}. The only paid software
used in this investigation was MATLAB~\citep{matlab13}, employed solely for
initial preprocessing with ICA~\citep{makeigEtAl96} and for plotting
\glspl{IC}' equivalent dipoles (e.g., inset in
Figure~\ref{fig:exampleSingleTrialAnalysis}b) and clusters of these diples
(e.g., Figure~\ref{fig:clusters}).

\printglossaries

\bibliographystyle{plainnatNoNote}
\bibliography{stats,attention,temporalAttention,rhythms,eeg,cnv,timePerception,bci,fmri,receptiveFields,math,brainImaging,signalProcessing,machineLearning,pac,speech,others}

\pagebreak

\setcounter{page}{1}

\renewcommand\thefigure{\thesection.\arabic{figure}}    
\setcounter{figure}{0}

\renewcommand\thetable{\thesection.\arabic{table}}    
\setcounter{table}{0}

\clearpage
\appendix
\numberwithin{equation}{section}

\section{Appendix}
\label{sec:appendixA}

\subsection{Controls on the SFP effect on ITPC}
\label{sec:controls}

The previous sections demonstrated the existence of the \gls{SFP} effect on
\gls{ITPC}, and argued for its relevance by showing that its strength
correlates with subjects' perceptual abilities and reaction speeds.  From the
successful decoding of \glspl{SFPD} from the \gls{ITPC} following the
presentation of standards we inferred that the \gls{SFPD} (i.e.,
the modulation variable) modulates the \gls{ITPC} evoked by standards
(i.e., the modulated variable). Below we present evidence supporting the
inference that the modulated variable is the \gls{ITPC} evoked by standard
stimuli (Section~\ref{sec:randomizedSTDsOnsetsControl}), and that the
modulating variable is the \gls{SFPD}
(Section~\ref{sec:shuffledSFPDsControl}).

\subsubsection{The ITPC triggered by standards is the modulated variable}
\label{sec:randomizedSTDsOnsetsControl}

Besides the presentation of standards, any other experimental event
(e.g., the presentation of a \gls{warningSignal}) could have generated
\gls{ITPC} modulations that allowed models to decode, from a 500~ms-long
segment of phase coherence starting at time $t_0$ after the presentation of a
\gls{warningSignal}, that the \gls{warningSignal} was presented $t_0$~ms before
the start of the segment.  We tested this possibility by building surrogate
datasets, identical to the original ones, with the exception that epochs were
aligned to random times, instead of being aligned to the presentation time of
standards, and then fitting decoding models to these surrogate datasets. 

For each epoch in an original dataset we built a corresponding epoch in the
corresponding surrogate dataset. To construct the surrogate epoch, the onset time of the
standard on the original dataset was shifted by a random number between a
minimum and a maximum value.  The minimum value was the negative of the
\gls{SFPD}, in order to guarantee that the surrogate standard onset
occurred after the previous \gls{warningSignal}.  The maximum value was the
minimum among the latency of the next deviant and the latency of the next
\gls{warningSignal}, minus 500~ms. In this way a surrogate standard occurred
before the next deviant and the next \gls{warningSignal}, and neither deviants
or \glspl{warningSignal} appeared in the 500~ms-long window used to decode the
\gls{SFPD}.  Since the onset time of standards determines the \gls{DMP}
values used as independent variables in the linear regression model
(Section~\ref{sec:linearRegressionModel}), the surrogate datasets only
randomized the independent variables of this model.
In the surrogate datasets, the distribution of phases at every moment in time
was not significantly different from the uniform distribution (p\textgreater
0.01, Raleigh test). In the main analysis, to decode \glspl{SFPD}, we
only used \gls{DMP} values at time points at which the distribution of phases
was significantly different from the uniform distribution. If we had applied
this same criterion with the surrogate epochs we would have no time points to
decode \glspl{SFPD}. Thus, for epochs in surrogate datasets we
used the same time points as in the original epochs to decode \gls{SFP}
durations.

For each original dataset, we repeated the above randomization 30 times. For
each surrogate dataset, we fitted a decoding model (in exactly the same way as
with the original dataset), and computed the correlation coefficient between
the model decodings and the \glspl{SFPD}. We used the median of these 30
correlation coefficients to measure the decoding power of models fitted to
surrogate datasets.
For models derived from \glspl{IC} in the left parieto-occipital cluster~04 and
the attended visual standards, Figure~\ref{fig:controls}a plots the
decoding power of models fitted to the original dataset versus that of models
fitted to surrogate datasets. The decoding power of models fitted to the
original dataset was significantly larger than that of models fitted to
surrogate datasets. The median pairwise difference between the decoding power
of models fitted original datasets minus that of models fitted to surrogate
datasets was 0.21, which was significantly larger than zero (95\%
confidence interval [0.17, 0.25]).

\begin{figure}
\begin{center}
\includegraphics{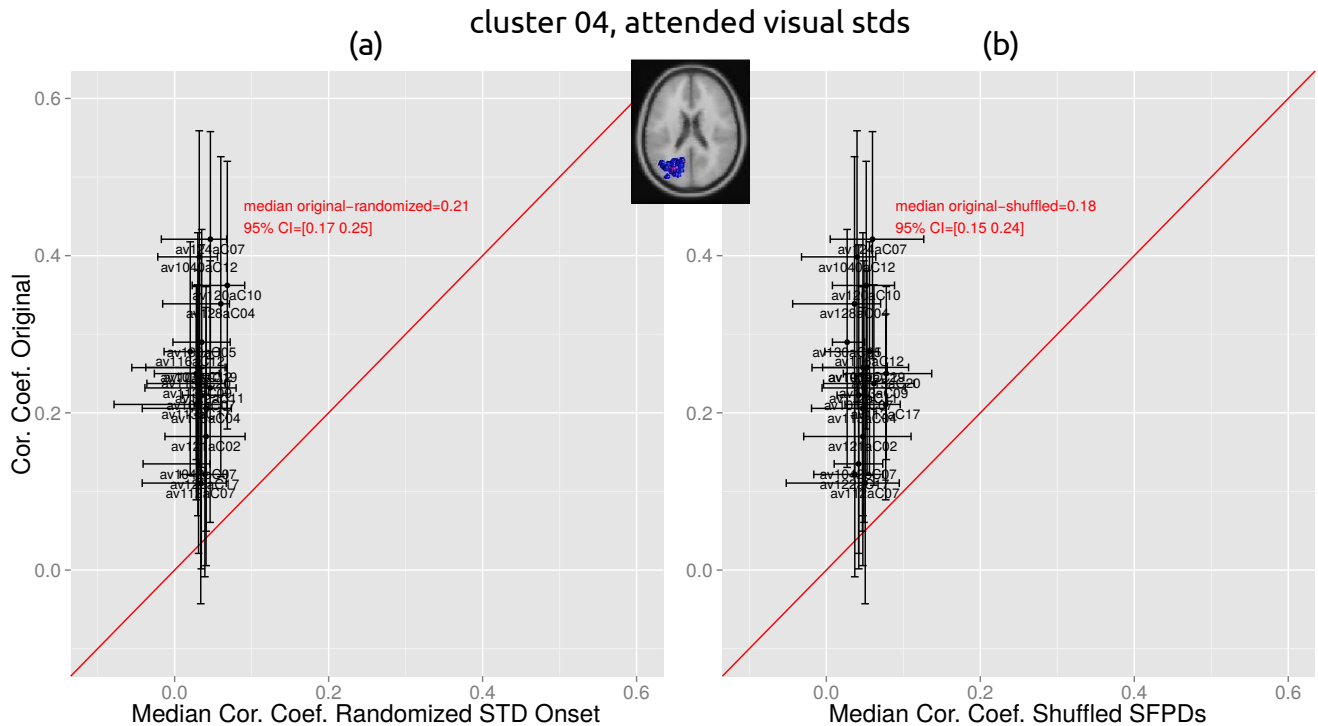}

\caption{Controls on the \gls{SFP} effect on \gls{ITPC}.  Panel (a) plots
correlation coefficients for models fitted to original datasets, with epochs
aligned to the presentation time of standards, versus correlation coefficients
for models fitted to surrogate datasets, with epochs aligned to random times.
Panel (b) is as panel (a) but for surrogate datasets with \glspl{SFPD} shuffled
among trials. Both panels correspond to the left parieto-occipital cluster~04
and attended visual standards. Correlations for models fitted to original
datasets are significantly larger than those from models fitted to surrogate
datasets, supporting our assumption that in the \gls{SFP} effect on \gls{ITPC}
the \gls{ITPC} triggered by standards is the modulated variable (panel (a)) and
that the \gls{SFPD} is the modulating variable (panel (b)).}

\label{fig:controls}
\end{center}
\end{figure}

Note that the estimation of the decoding power for models fitted to
surrogate datasets was computationally very expensive. For each cluster,
\gls{standardModality}, and \gls{attendedModality},
we fitted approximately
28,500 models, corresponding to 19 subjects $\times$ around 50 maximal
\glspl{SFPD} to select the optimal one
(Section~\ref{sec:selectionMaxSFPD}) $\times$ 30 repetitions of the
randomization procedure. For this reason, we limited this control
to one representative cluster, \gls{standardModality}, and
\gls{attendedModality}.

This control supports the claim that the \gls{ITPC} evoked by standards
is the experimental variable modulated by the \gls{SFPD}. In
particular, it constitutes evidence against the possibility that the observed
modulations on \gls{ITPC} were evoked by the \gls{warningSignal}, since in this
case modulations of \gls{ITPC} by the \gls{warningSignal}
should not be affected by the randomization of the onsets of standards.

\subsubsection{The SFP duration is the modulating variable}
\label{sec:shuffledSFPDsControl}

It is possible that from the \gls{DMP} values triggered by standards
one could reliably decode any variable, and not only the \gls{SFPD}.
This would be the case, for example, if the decoding model were overfitted to
the data.  Note, however, that the estimation and the model evaluation methods
used here were designed to avoid overfitting (i.e., we used a regularized error
function for parameter estimation, Section~\ref{sec:lmEstimationMethod}, and
cross-validation to evaluate the goodness of fit of models,
Section~\ref{sec:additionalStatInfo}).  To control for this possibility, we
proceed as in Section~\ref{sec:randomizedSTDsOnsetsControl}, but with
surrogate datasets with shuffled \glspl{SFPD}. That is, instead of
assigning to each trial its corresponding \gls{SFPD}, we assigned the
\gls{SFPD} of a randomly chosen (without replacement) trial. Since the
\gls{SFP} is the dependent variable in the linear regression model
(Section\ref{sec:linearRegressionModel}), the surrogate dataset only shuffled
the dependent values of this model.

Figure~\ref{fig:controls}b is as Figure~\ref{fig:controls}a, but for control
models estimate from surrogate datasets with shuffled \glspl{SFPD}. The
decoding power of models estimated from original datasets was significantly
larger than that of models estimated from the shuffled surrogate datasets. The
median pairwise difference between the correlation coefficient for a model
estimated from the original dataset minus that for a model estimated from the
corresponding surrogate dataset was 0.18, which was significantly different from
zero (95\% confidence interval: [0.15, 0.24]). Thus, Figure~\ref{fig:controls}b
supports the inference that the \gls{SFPD} is the experimental variable
modulating the \gls{ITPC} elicited by standards.

\subsection{Correlations between ITC and behavior}
\label{sec:itcAndBehavior}

For each \gls{IC} in a cluster and for a given \gls{standardModality} and
\gls{attendedModality} we extracted the \gls{ITC} value at the peak time and
frequency (Section~\ref{sec:itcAndPeakITCFrequency}). Then we computed the
correlation coefficient between peak \gls{ITC} values of \glspl{IC} and
error rates or mean reaction times of the subjects corresponding to the
\glspl{IC}.  Tables~\ref{table:peakITCStatsErrorRates}
and~\ref{table:peakITCStatsMeanRTs} show the obtained correlation coefficients
and corresponding p-values for error rates and mean reaction times,
respectively, across all clusters, \glspl{standardModality}, and
\glspl{attendedModality}. Highlighted in blue are entries with uncorrected
p-values smaller than 0.05. A double dagger (dagger) below the image of a
cluster in Figure~\ref{fig:clustersITC} indicates a significant correlation
between peak \gls{ITC} values and error rates (mean reaction times) in the
corresponding cluster and in the \gls{standardModality} and
\gls{attendedModality} given by the color of the double dagger (dagger).

\begin{table}[ht]

\caption{Correlations between peak ITC values and subjects' error rates. Each
cell shows the correlation coefficient, r,  and p-values unadjusted, p, and
adjusted, adj\_p, for multiple comparisons. Blue cells highlight correlations
with p\textless0.05.}

\begin{center}
\begin{tabular}{|c|c|c c c|c c c|}\hline
\multicolumn{1}{|c}{Cluster} & \multicolumn{1}{|c|}{Standard}  & \multicolumn{3}{c|}{Visual Attention}  & \multicolumn{3}{c|}{Auditory Attention}\\
\multicolumn{1}{|c}{} & \multicolumn{1}{|c|}{Modality}  & \multicolumn{1}{c}{r} & \multicolumn{1}{c}{p} & \multicolumn{1}{c|}{adj\_p} & \multicolumn{1}{c}{r} & \multicolumn{1}{c}{p} & \multicolumn{1}{c|}{adj\_p}\\ \hline \hline
03 & Visual & -0.36 & 0.1624 & 1.00 & 0.35 & 0.2162 & 1.00 \\
 & Auditory & -0.51 & 0.0656 & 0.99 & 0.15 & 0.6132 & 1.00 \\
\hline
04 & Visual & \cellcolor{NavyBlue}{-0.53} & \cellcolor{NavyBlue}{0.0382} & \cellcolor{NavyBlue}{0.96} & -0.14 & 0.5882 & 1.00 \\
 & Auditory & 0.52 & 0.0678 & 0.99 & -0.32 & 0.1938 & 1.00 \\
\hline
05 & Visual & -0.36 & 0.1944 & 1.00 & 0.13 & 0.6906 & 1.00 \\
 & Auditory & -0.24 & 0.4678 & 1.00 & -0.38 & 0.2162 & 1.00 \\
\hline
06 & Visual & 0.03 & 0.9206 & 1.00 & -0.01 & 0.9718 & 1.00 \\
 & Auditory & -0.23 & 0.4138 & 1.00 & -0.35 & 0.1852 & 1.00 \\
\hline
07 & Visual & -0.49 & 0.1682 & 1.00 & 0.20 & 0.6042 & 1.00 \\
 & Auditory & -0.39 & 0.2816 & 1.00 & -0.33 & 0.3748 & 1.00 \\
\hline
09 & Visual & 0.03 & 0.9092 & 1.00 & -0.34 & 0.2234 & 1.00 \\
 & Auditory & \cellcolor{NavyBlue}{-0.62} & \cellcolor{NavyBlue}{0.0190} & \cellcolor{NavyBlue}{0.80} & \cellcolor{NavyBlue}{-0.58} & \cellcolor{NavyBlue}{0.0278} & \cellcolor{NavyBlue}{0.90} \\
\hline
10 & Visual & -0.38 & 0.2706 & 1.00 & 0.31 & 0.4092 & 1.00 \\
 & Auditory & -0.13 & 0.7218 & 1.00 & -0.10 & 0.7738 & 1.00 \\
\hline
11 & Visual & 0.12 & 0.7058 & 1.00 & 0.08 & 0.7788 & 1.00 \\
 & Auditory & \cellcolor{NavyBlue}{-0.69} & \cellcolor{NavyBlue}{0.0130} & \cellcolor{NavyBlue}{0.67} & -0.29 & 0.3292 & 1.00 \\
\hline
13 & Visual & -0.24 & 0.4756 & 1.00 & \cellcolor{NavyBlue}{0.77} & \cellcolor{NavyBlue}{0.0056} & \cellcolor{NavyBlue}{0.40} \\
 & Auditory & -0.34 & 0.3394 & 1.00 & \cellcolor{NavyBlue}{-0.72} & \cellcolor{NavyBlue}{0.0162} & \cellcolor{NavyBlue}{0.75} \\
\hline
14 & Visual & -0.50 & 0.0854 & 1.00 & -0.39 & 0.2208 & 1.00 \\
 & Auditory & -0.27 & 0.3572 & 1.00 & -0.31 & 0.3626 & 1.00 \\
\hline
15 & Visual & 0.25 & 0.3156 & 1.00 & -0.02 & 0.9458 & 1.00 \\
 & Auditory & -0.35 & 0.2036 & 1.00 & \cellcolor{NavyBlue}{-0.54} & \cellcolor{NavyBlue}{0.0196} & \cellcolor{NavyBlue}{0.81} \\
\hline
17 & Visual & -0.25 & 0.5424 & 1.00 & -0.06 & 0.8630 & 1.00 \\
 & Auditory & 0.48 & 0.2202 & 1.00 & -0.02 & 0.9368 & 1.00 \\
\hline
18 & Visual & -0.34 & 0.2390 & 1.00 & -0.03 & 0.9120 & 1.00 \\
 & Auditory & -0.21 & 0.4862 & 1.00 & \cellcolor{NavyBlue}{-0.67} & \cellcolor{NavyBlue}{0.0242} & \cellcolor{NavyBlue}{0.87} \\
\hline
19 & Visual & 0.15 & 0.5896 & 1.00 & 0.18 & 0.5638 & 1.00 \\
 & Auditory & -0.04 & 0.9056 & 1.00 & \cellcolor{NavyBlue}{-0.85} & \cellcolor{NavyBlue}{0.0010} & \cellcolor{NavyBlue}{0.11} \\
\hline
\end{tabular}

\end{center}
\label{table:peakITCStatsErrorRates}
\end{table}

\begin{table}[ht]

\caption{Correlations between peak ITC values and subjects' mean
reaction times. Same format as Table~\ref{table:peakITCStatsErrorRates}.}

\begin{center}
\begin{tabular}{|c|c|c c c|c c c|}\hline
\multicolumn{1}{|c}{Cluster} & \multicolumn{1}{|c|}{Standard}  & \multicolumn{3}{c|}{Visual Attention}  & \multicolumn{3}{c|}{Auditory Attention}\\
\multicolumn{1}{|c}{} & \multicolumn{1}{|c|}{Modality}  & \multicolumn{1}{c}{r} & \multicolumn{1}{c}{p} & \multicolumn{1}{c|}{adj\_p} & \multicolumn{1}{c}{r} & \multicolumn{1}{c}{p} & \multicolumn{1}{c|}{adj\_p}\\ \hline \hline
03 & Visual & -0.22 & 0.4036 & 1.00 & 0.25 & 0.3388 & 1.00 \\
 & Auditory & -0.46 & 0.0602 & 0.97 & -0.09 & 0.7218 & 1.00 \\
\hline
04 & Visual & -0.08 & 0.7416 & 1.00 & 0.00 & 0.9882 & 1.00 \\
 & Auditory & 0.02 & 0.9322 & 1.00 & -0.28 & 0.2512 & 1.00 \\
\hline
05 & Visual & 0.15 & 0.5894 & 1.00 & 0.41 & 0.1536 & 1.00 \\
 & Auditory & -0.06 & 0.8322 & 1.00 & -0.19 & 0.5250 & 1.00 \\
\hline
06 & Visual & -0.28 & 0.2808 & 1.00 & 0.00 & 0.9882 & 1.00 \\
 & Auditory & -0.23 & 0.4088 & 1.00 & -0.33 & 0.2104 & 1.00 \\
\hline
07 & Visual & -0.04 & 0.9086 & 1.00 & 0.08 & 0.8406 & 1.00 \\
 & Auditory & 0.05 & 0.8878 & 1.00 & 0.03 & 0.9334 & 1.00 \\
\hline
09 & Visual & -0.16 & 0.5782 & 1.00 & -0.05 & 0.8532 & 1.00 \\
 & Auditory & 0.12 & 0.6622 & 1.00 & -0.31 & 0.2766 & 1.00 \\
\hline
10 & Visual & 0.19 & 0.5968 & 1.00 & 0.40 & 0.2500 & 1.00 \\
 & Auditory & 0.39 & 0.2844 & 1.00 & -0.04 & 0.9142 & 1.00 \\
\hline
11 & Visual & -0.21 & 0.4930 & 1.00 & -0.17 & 0.5838 & 1.00 \\
 & Auditory & -0.43 & 0.1400 & 1.00 & 0.26 & 0.3822 & 1.00 \\
\hline
13 & Visual & 0.25 & 0.4262 & 1.00 & 0.54 & 0.0746 & 0.99 \\
 & Auditory & 0.09 & 0.7674 & 1.00 & -0.26 & 0.4130 & 1.00 \\
\hline
14 & Visual & -0.24 & 0.4220 & 1.00 & -0.17 & 0.5694 & 1.00 \\
 & Auditory & -0.27 & 0.3618 & 1.00 & -0.22 & 0.4852 & 1.00 \\
\hline
15 & Visual & -0.23 & 0.3250 & 1.00 & -0.03 & 0.9086 & 1.00 \\
 & Auditory & 0.03 & 0.8986 & 1.00 & -0.23 & 0.3204 & 1.00 \\
\hline
17 & Visual & -0.18 & 0.6160 & 1.00 & 0.06 & 0.8692 & 1.00 \\
 & Auditory & -0.15 & 0.6830 & 1.00 & \cellcolor{NavyBlue}{-0.74} & \cellcolor{NavyBlue}{0.0250} & \cellcolor{NavyBlue}{0.78} \\
\hline
18 & Visual & -0.31 & 0.2542 & 1.00 & \cellcolor{NavyBlue}{-0.58} & \cellcolor{NavyBlue}{0.0240} & \cellcolor{NavyBlue}{0.76} \\
 & Auditory & 0.15 & 0.5918 & 1.00 & -0.21 & 0.4384 & 1.00 \\
\hline
19 & Visual & -0.39 & 0.1592 & 1.00 & -0.02 & 0.9502 & 1.00 \\
 & Auditory & -0.13 & 0.6522 & 1.00 & -0.17 & 0.5450 & 1.00 \\
\hline
\end{tabular}

\end{center}
\label{table:peakITCStatsMeanRTs}
\end{table}

\begin{figure}
\begin{center}
\includegraphics{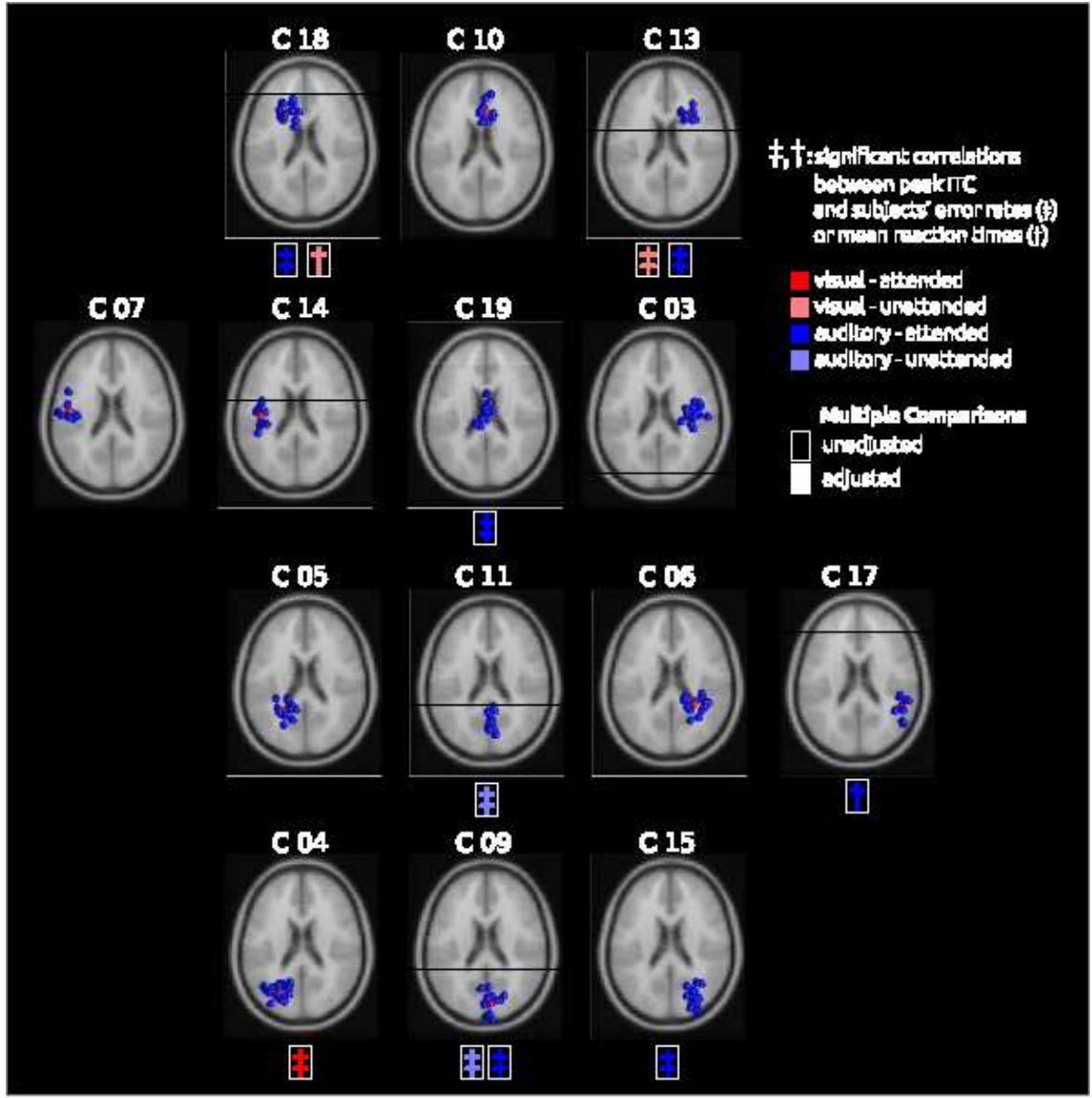}

\caption{Clusters of \glspl{IC}, and correlations between peak \gls{ITC} values
and subjects' behaviors.  
A dagger (double dagger) shows a significant correlation, unadjusted for
multiple comparisons, between peak \gls{ITC} values and subjects' mean reaction
times (error rates).  
With the exception of the left parieto-occipital cluster~04 and attended visual
standards, significant correlations between peak \gls{ITC} values and subjects'
behaviors
occur in different clusters, \glspl{standardModality} and
\glspl{attendedModality}, than correlations between models' decodings and
subjects' behaviors (Figure~\ref{fig:clusters}).
}

\label{fig:clustersITC}
\end{center}
\end{figure}

\subsection{Simulations of the SFP effect on ITPC}
\label{sec:simulations}

We simulated the model in Eq.~\ref{eq:trial} twice, with the precision of the
phase noise $n$ displaying larger ($\text{max}\kappa$=5.00,
$\text{min}\kappa$=0.01) and smaller ($\text{max}\kappa$=3.50,
$\text{min}\kappa$=1.51) oscillations and with parameters $f=7\,Hz$,
$\theta=\pi$, $f_n=3\,Hz$, and $s=2$.  For these simulations we used the
\glspl{SFPD}, $SFPD$ in Eq.~\ref{eq:trial}, from one experimental session
(subject av124a and attended visual standards). 
The simulated oscillations are shown in Figure~\ref{fig:simulatedOscillations}
as an erpimage~\citep{makeigEtAl04} sorted by \glspl{SFPD} (black curve to the
left of time zero).  Figure~\ref{fig:simulatedOscillations}a
and~\ref{fig:simulatedOscillations}b correspond to larger and smaller
fluctuations in the precision of the phase noise, respectively.
Figure~\ref{fig:avgDMPsAndCoefsSimulated} is as
Figure~\ref{fig:avgDMPsAndCoefs} but for the simulated data. The fluctuations
in averaged \gls{DMP} in trials closest to and furthest from the
\gls{warningSignal}, those in their difference, and fluctuations in the
estimated model coefficients are similar for the simulated and experimental
data. The correlation coefficient for the model fitted to data with smaller
fluctuations of the precision of the noise was $r=0.23$ (95\% CI=[0.12, 0.34]),
which was significantly smaller than that for the model fitted to data with
larger fluctuations of the precision of the noise $r=0.51$ (95\% CI=[0.43,
0.59]).

\begin{figure}
\begin{center}
\includegraphics{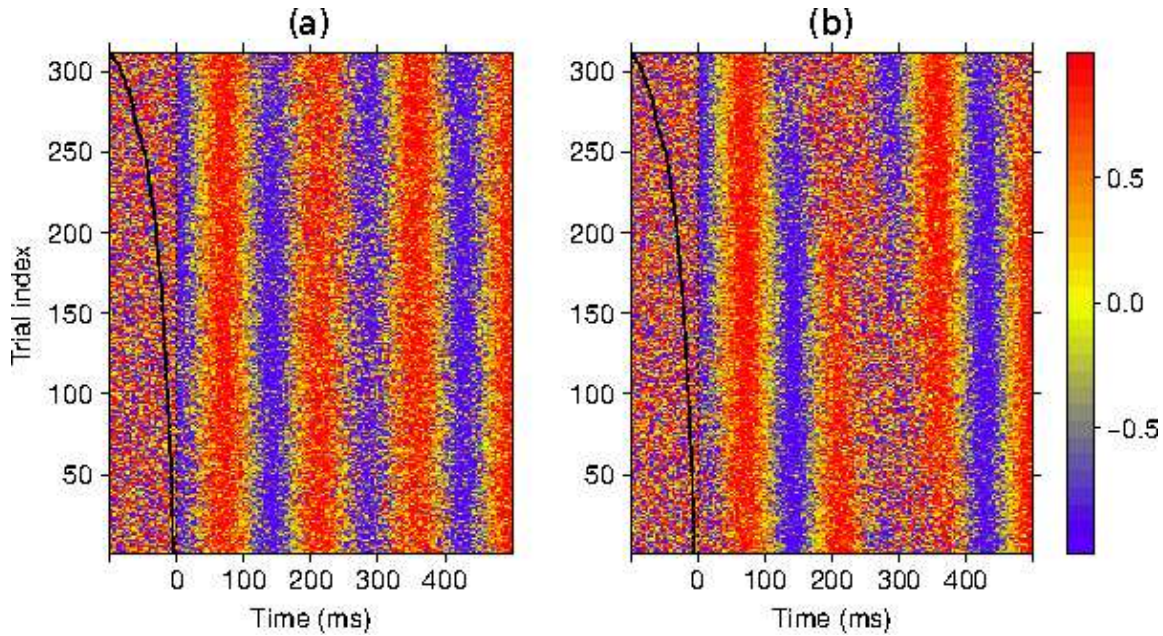}

\caption{Oscillations simulated according to Eq.~\ref{eq:trial} with phase
noise $n$ having small ($min\kappa=1.51$, $max\kappa=3.5$, panel (a))
and large ($min\kappa=0.01$, $max\kappa=5.0$, panel (b)) fluctuations in
precision (Eq.~\ref{eq:noise1Precision}).}

\label{fig:simulatedOscillations}
\end{center}
\end{figure}

\begin{figure}
\begin{center}
\includegraphics[height=8.0in]{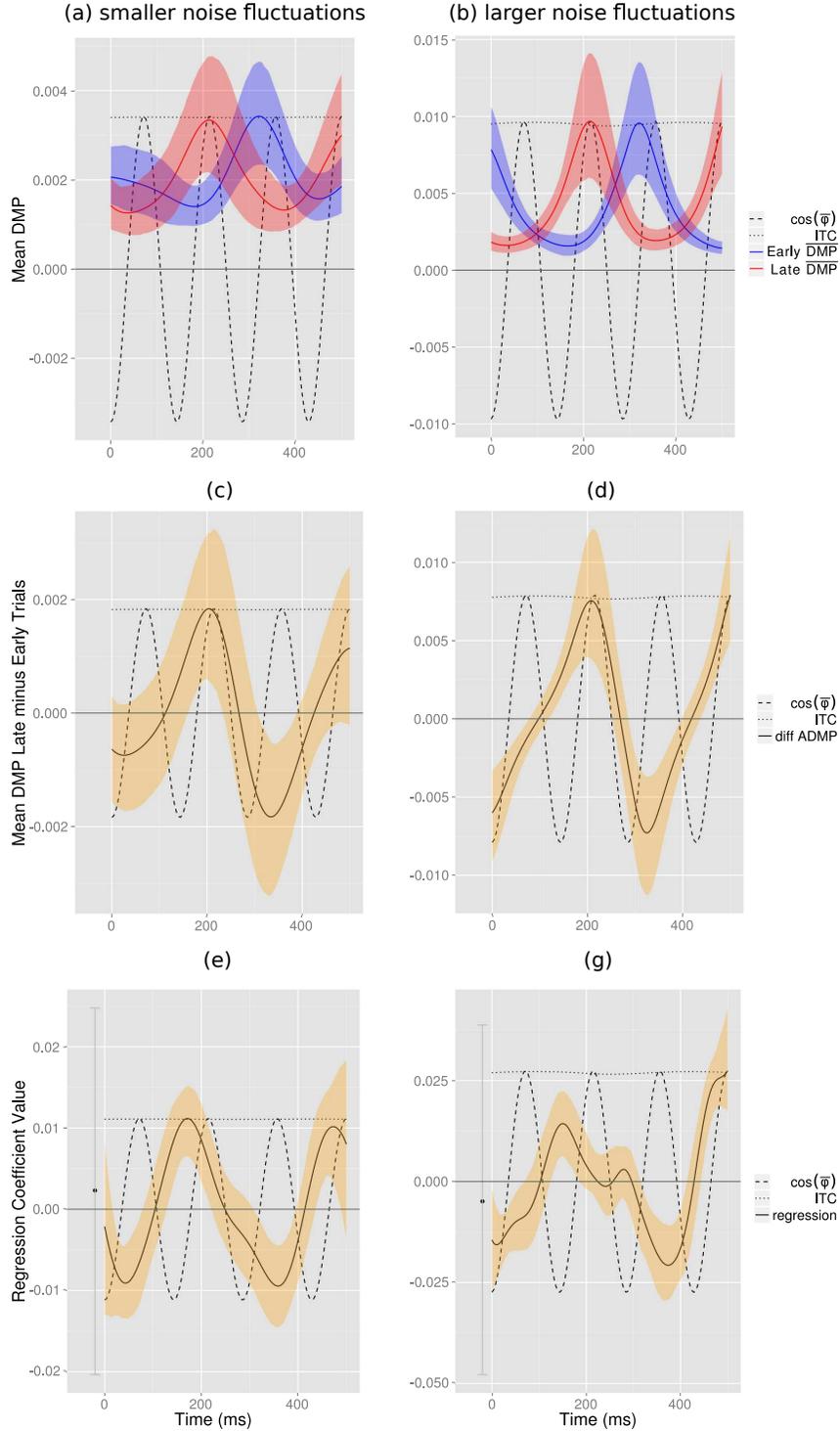}

\caption{Average \gls{DMP} in trials furthest from and closest to the
\gls{warningSignal} and regression coefficients for data simulated from
Eq.~\ref{eq:trial}. Same format as Figure~\ref{fig:avgDMPsAndCoefs}.
That the panels in this figure capture salient features of those in
Figure~\ref{fig:avgDMPsAndCoefs}, generated from neural recordings, suggests
that the simple oscillatory model in Eq.~\ref{eq:trial} is a good first
description of how the \gls{SFPD} modulates the phase of neural oscillations.
}

\label{fig:avgDMPsAndCoefsSimulated}
\end{center}
\end{figure}

\subsection{Selection of the optimal maximum SFPD}
\label{sec:selectionMaxSFPD}

To fit decoding models we used data from standards that were presented before
an optimal maximum \gls{SFPD} after the \gls{warningSignal}.  For models fitted
to data from \gls{IC} 05 of subject av130a, and unattended visual standards,
Figure~\ref{fig:selectionMaxSFPD} plots the decoding power of models as a
function on their maximum \gls{SFPD}.  We see that the decoding power of models
varied smoothly as a function of this maximum \gls{SFPD}. For each estimated
model we selected the maximum \gls{SFPD} that optimized the decoding power of
the model as the optimal maximum \gls{SFPD}.
We chose maximum \glspl{SFPD} between one second and the largest \gls{SFPD} of
any trial in the data, since maximum \glspl{SFPD} shorter than one second may
be contaminated by modulations from the nearby \gls{warningSignal}. For
\gls{IC} 05 of subject av130a and unattended visual standards, we selected an
optimal maximum \gls{SFP} duration of 2.8~seconds (time highlighted in red in
Figure~\ref{fig:selectionMaxSFPD}).

The optimal maximum \gls{SFPD} gives the range of \glspl{SFPD} at which
modulations of \gls{ITPC} by the \gls{SFPD} are stronger.
The mean of the optimal maximum \gls{SFPD} for models corresponding to the
visual and auditory \gls{attendedModality} were 2.4 and 2.7~seconds,
respectively. A repeated-measures ANOVA with optimal maximum \gls{SFPD} as
dependent variable found a significant main effect of \gls{attendedModality}
(F(1, 579)=8.45, p=0.0038). A posthoc analysis indicated that the mean of the
optimal maximum \gls{SFPD} was significantly shorter for models estimated from
epochs where attention was directed to the visual than to the auditory modality
(z=2.91, p=0.0018). Further information on this ANOVA appear in
Section~\ref{sec:anova}.
Therefore, modulations of \gls{ITPC} by the \gls{SFPD} were stronger earlier
when attention is directed to the visual than to the auditory modality.

\begin{figure}
\begin{center}
\includegraphics{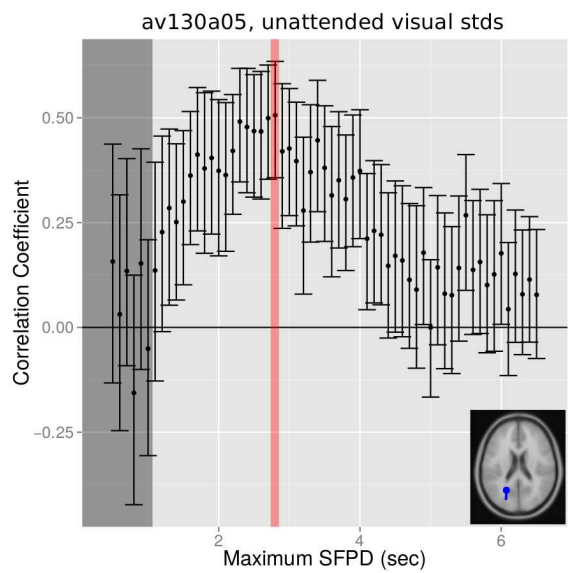}

\caption{Selection of the optimal maximum \gls{SFPD} for the decoding model for
\gls{IC} 05 of subject av130a and unattended visual standards. Decoding models
were estimated with maximum \glspl{SFPD} between 500~ms and the largest
\gls{SFPD} in a set of trials, in steps of 100~ms (abscissa), and the
correlation coefficient were computed between models' decodings and
experimental \glspl{SFPD} (ordinate). The \gls{SFPD} at which this correlation
coefficient was largest was selected as the optimal maximum \gls{SFPD} (red
bar).  To avoid possible modulations of the \gls{ITPC} by the
\gls{warningSignal}, we excluded from this selection \glspl{SFPD} shorter than
one second (gray bar).}

\label{fig:selectionMaxSFPD}
\end{center}
\end{figure}

In Section~\ref{sec:characterizedEpochs} we showed that the mean \gls{SFPD} was
significantly shorter in epochs where attention was directed to the visual than
to the auditory modality. Coincidentally, in this section we demonstrated that
the time window with the strongest modulations of \gls{ITPC} by the \gls{SFPD}
occurred earlier when attention was directed to the visual than to the auditory
modality. This coincidence suggests that subjects had learned the distribution
of \glspl{SFPD} and that their modulations of \gls{ITPC} reflected this
learning.

\subsection{Supplementary methods}
\label{sec:supplementaryMethods}

\subsubsection{ICA data preprocessing}
\label{sec:icaPreprocessing}

For each subject, we performed an ICA decomposition on his/her EEG recorded
brain potentials \citep{makeigEtAl96}. Briefly, if
$\mathbf{x}\in\mathbb{R}^N$ is an $N$-dimensional random vector representing
the EEG activity in $N$ channels, we estimated a mixing matrix
$A\in\mathbb{R}^{N\times N}$ so that

\begin{eqnarray}
\mathbf{x}=A\mathbf{s}
\label{eq:icaDecomposition}
\end{eqnarray}

\noindent and the components of the random vector
$\mathbf{s}\in\mathbb{R}^N$ were maximally independent. These components are called
the independent components (\glspl{IC}) through the manuscript.
We used the 
AMICA algorithm \citep{palmerEtAl07} with one model to compute this
decomposition.
Prior to the ICA decomposition, the EEG data was high- and low-pass
filtered with cutoff frequencies of of 1 and 50~Hz respectively.
After this decomposition, non-brain \glspl{IC}
(e.g., components corresponding to eye blinks, lateral eye movements, muscle
activity, bad channels, and muscle activity) were removed from
the ICA decomposition.
We kept, an average, of 26 components per subject, out of the $N=32$ components
obtained in the ICA decomposition.

\subsubsection{Clustering of ICs}
\label{sec:clusteringOfICs}

All \glspl{IC} of all subjects were grouped into clusters according to the
proximity of their equivalent dipole locations (see below) using the k-means
algorithm. This algorithm groups points of a D-dimensional metric space into
clusters, in such a way that the sum of the distances between points and their
corresponding cluster centroid is minimized. We represented each \gls{IC} by
the three-dimensional point given by the Talairach coordinates of its
equivalent dipole location.  A free parameter in k-means is the number of
clusters. We set this parameter to 17 to obtain clusters of reasonable
coarseness.  Clusters 8, 12, and 16 were not analyzed because they contained
too few \glspl{IC}, 4, 4 and 5, respectively (Table~\ref{table:clustersInfo}).

\vspace{0.3\baselineskip}\noindent\textbf{Equivalent Dipole location} 

\noindent The scalp map of the ith \gls{IC} is the ith column of the mixing
matrix $A$ in (\ref{eq:icaDecomposition}).  For each \gls{IC}, the location of
the electric current dipole whose projection in the scalp best matched the
\gls{IC} scalp map was estimated using the DIPFIT2 plugin of the EEGLAB
software. The location of this dipole is the equivalent dipole location of the
\gls{IC}.

\subsubsection{Circular statistics concepts}
\label{sec:circularStats}

This section introduces concepts from circular
statistics \citep{mardia72} used below
to define \gls{ITC} and \gls{DMP}. 
Given a set of circular variables (e.g., phases), $\theta_1, \ldots,
\theta_N$, we associate to each circular variable a two-dimensional unit
vector. Using notation from complex numbers, the unit vector associated with
variable $\theta_i$ is:

\begin{eqnarray}
vec(\theta_i)=e^{j\theta_i}
\label{eq:unitVector}
\end{eqnarray}

\noindent The \emph{resultant vector}, $\mathbf{R}$, is the sum of the associated unit
vectors: 

\begin{eqnarray}
\mathbf{R}(\theta_1, \ldots, \theta_N)=\sum_{i=1}^{N}vec(\theta_i)
\label{eq:resultantVector}
\end{eqnarray}

\noindent The \emph{mean resultant length}, $\bar{R}$, is the length of the resultant
vector divided by the number of variables:

\begin{eqnarray}
\bar{R}(\theta_1, \ldots, \theta_N)=\frac{1}{N}|\mathbf{R}(\theta_1, \ldots,
\theta_N)|
\label{eq:meanResultantLength}
\end{eqnarray}

\noindent The \emph{circular variance}, $CV$, is one minus the mean resultant length:

\begin{eqnarray}
CV(\theta_1, \ldots, \theta_N)=1-\bar{R}(\theta_1, \ldots, \theta_N)
\label{eq:cv}
\end{eqnarray}

\noindent The \emph{mean direction}, $\bar{\theta}$, is the angle of the resultant
vector:

\begin{eqnarray}
\bar{\theta}(\theta_1, \ldots, \theta_N)=\arg(\mathbf{R}(\theta_1, \ldots,
\theta_N))
\label{eq:meanDirection}
\end{eqnarray}

\noindent Note that the mean direction (and therefore the \gls{DMP}) is not
defined when the resultant vector is zero, since the angle of the zero vector
is undefined.

\subsubsection{ITC and Peak ITC frequency}
\label{sec:itcAndPeakITCFrequency}

The Inter-Trial Coherence (ITC) is a measure of \gls{ITPC} resulting from
averaging phase information among multiple epochs~\citep{tallonBaudryEtAl96,
delormeAndMakeig04}.
To compute ITC we extracted epochs from one second before to three seconds after
the presentation of standards. Then, a continuous wavelet transform was
performed on these epochs, using the Morlet wavelet with three significant
cycles, eight octaves, and 12 frequencies per octave. With a 250~Hz EEG
sampling rate, the previous parameters furnished a Morlet transform with a time
resolution of 70.71~ms, and a frequency resolution of 4.5 Hz, both at 10 Hz.
The function \texttt{cwt} in the \texttt{Rwave} package of the language
\texttt{R}~\citep{r12} was used to compute the continuous wavelet transform.
This transform provided the phases of every trial, for frequencies between 0.5
and 128 Hz, and for times between the start and end of the epoch. The \gls{ITC}
of a set of phases $\theta_1,\ldots,\theta_n$ (gray vectors in
Figure~\ref{fig:dmp}) is the mean resultant length ($\bar{R}$,
Eq.~\ref{eq:meanResultantLength}) of these phases (length of black vector in
Figure~\ref{fig:dmp}):

\begin{center}
\begin{eqnarray}
ITC(\theta_1,\ldots,\theta_N)=\bar{R}(\theta_1,\ldots,\theta_N)
\label{eq:itc}
\end{eqnarray}
\end{center}

For each \gls{IC} of every subject, \gls{standardModality}, and \gls{attendedModality}, we used the corresponding set of epochs to
calculate \gls{ITC} values between the start and end times of the epochs, and
between 0.5 Hz to 125 Hz. We selected the peak of these values between 100 and
500~ms after the presentation of the standard at time zero, and
between 1 and 14 Hz. The frequency corresponding to this peak (i.e., the peak
\gls{ITC} frequency) was then used to measure the single-trial phase values, as
described in Section~\ref{sec:measuringSingleTrialPhases}. The median peak
\gls{ITC} frequency and its
95\% confidence interval were 4.93~Hz and [4.59, 5.21]~Hz, respectively, and the
median time of the \gls{ITC} peak and its 95\% confidence interval were
215~ms and [212, 226]~ms, respectively.
\gls{ITC} values
between -200 and 500~ms around the presentation of attended visual standards, from \gls{IC} 5 of subject av130a are shown
in Figure~\ref{fig:itcPeakSelection}. The black cross in this
figure marks the peak \gls{ITC} value, occurring at peak frequency 8.15 Hz.

\begin{figure}
\begin{center}
\includegraphics[width=4.5in]{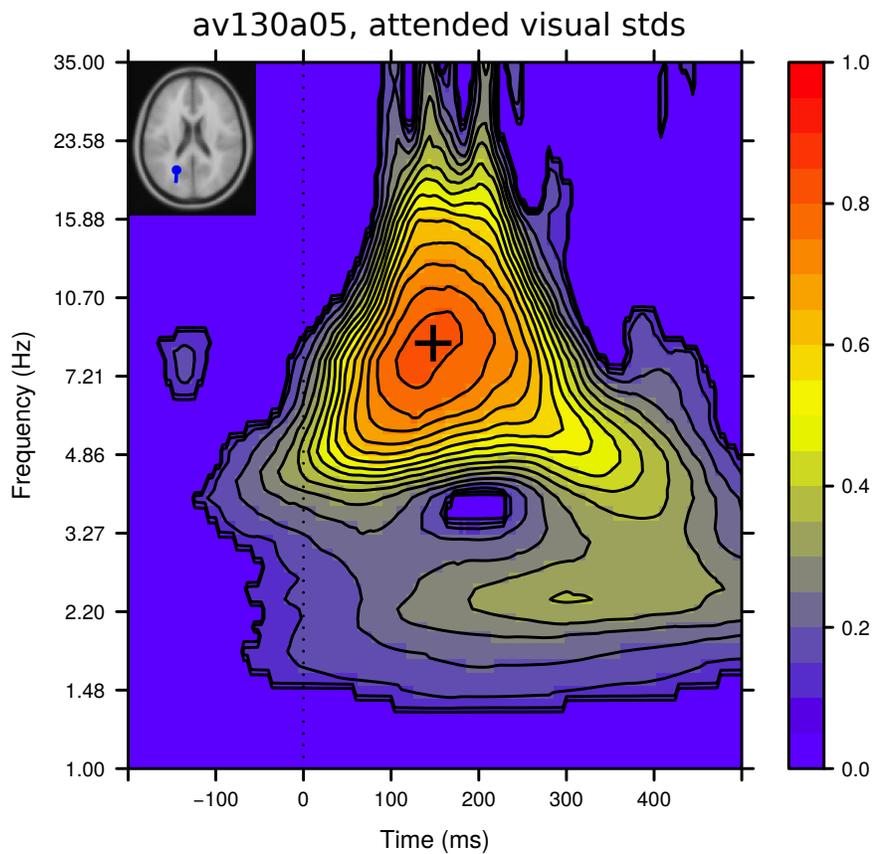}

\caption{Example set of \gls{ITC} values and selected peak from \gls{IC} 05 of
subject av130a and attended visual standards. The peak \gls{ITC} value (black
cross) was selected between 100 and 500~ms and between 1 and 14~Hz. The
selected peak ITC frequency, time, and amplitude were  8.52~Hz, 148~ms, and
0.83, respectively. Non-significant \gls{ITC} values (p\textgreater 0.01,
Rayleigh uniformity test) are masked in light blue.}

\label{fig:itcPeakSelection}
\end{center}
\end{figure}

\subsubsection{Measuring phases in single trials}
\label{sec:measuringSingleTrialPhases}

After selecting the peak \gls{ITC} frequency
(Section~\ref{sec:itcAndPeakITCFrequency}) for an \gls{IC} of a subject, a
\gls{standardModality}, and an \gls{attendedModality}, we computed a Gabor
transform of all epochs.  These epochs started one second before and ended
three
seconds after the presentation of standards. For this calculation, we used the
function \texttt{cgt} in the \texttt{Rwave} package of the language
\texttt{R}~\citep{r12}. We adjusted the scale of the Gabor's Gaussian window to
obtain three significant sinusoids at the frequency of the Gabor transform.
The phases of single trials were extracted from the complex coefficients of
this Gabor transform.  Figure~\ref{fig:exampleSingleTrialAnalysis}c illustrates
calculated phases for \gls{IC} 13, of subject av124a, and unattended visual
standards. Although we computed phases for the whole epoch duration, between -1
and 3~seconds, this figure only shows phases between -100 and 500~ms. To avoid
border effects, epochs were clipped to their final size, between 0 and
500~ms, after the computation of this Gabor transform.

\subsubsection{Relation between the averaged DMP and the ITC}
\label{sec:proofAveragedDMPApproachesITC}

The \gls{DMP} is a measure of phase decoherence, while the \gls{ITC} is a
measure of phase coherence. Here we derive an upper bound for the averaged
\gls{DMP}, $\overline{DMP}$, by the \gls{ITC}
(Proposition~\ref{prop:upperBoundAvgDMP}). From this bound we infer that
to minimal values of the $\overline{DMP}$ (i.e., $\overline{DMP}\simeq0$)
correspond large values of \gls{ITC} (i.e., $\gls{ITC}\simeq1$), and that the
maximal value of the $\overline{DMP}$ is 1/2.

\begin{upperBoundAvgDMP}
\label{prop:upperBoundAvgDMP}
$\overline{DMP}\le\frac{1}{2}-\frac{ITC}{2}$
\end{upperBoundAvgDMP}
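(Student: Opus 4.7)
The plan is to reduce the inequality to a pointwise trigonometric inequality by exploiting the rotational invariance of all quantities involved and applying the half-angle identity. Let $\theta_1,\ldots,\theta_N$ denote the phases, let $\bar\theta$ be their mean direction, and interpret $\overline{DMP}$ as $\frac{1}{N}\sum_{i=1}^N DMP(\theta_i\mid\{\theta_1,\ldots,\theta_N\}) = \frac{1}{N}\sum_{i=1}^N CV(\theta_i,\bar\theta)$. Since $\bar\theta$, the circular variance between two angles, and the ITC are all invariant under a rigid rotation of all phases, I would begin by rotating so that $\bar\theta=0$. With this normalization, the definition of the mean direction (Eq.~\ref{eq:meanDirection}) forces $\sum_i \sin\theta_i = 0$, so the ITC reduces to
\begin{equation*}
ITC \;=\; \bar R(\theta_1,\ldots,\theta_N) \;=\; \frac{1}{N}\Bigl|\sum_{i=1}^N e^{j\theta_i}\Bigr| \;=\; \frac{1}{N}\sum_{i=1}^N \cos\theta_i.
\end{equation*}

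Next I would compute $DMP(\theta_i\mid\cdot)$ explicitly. Because only two unit vectors (at angles $\theta_i$ and $0$) are being combined,
\begin{equation*}
CV(\theta_i,0) \;=\; 1 - \bar R(\theta_i,0) \;=\; 1 - \tfrac{1}{2}\bigl|e^{j\theta_i}+1\bigr| \;=\; 1 - \bigl|\cos(\theta_i/2)\bigr|,
\end{equation*}
so averaging gives $\overline{DMP} = 1 - \frac{1}{N}\sum_i |\cos(\theta_i/2)|$. Substituting this and the expression for the ITC into the claimed bound, the inequality I must prove becomes
\begin{equation*}
\frac{1}{N}\sum_{i=1}^N \bigl|\cos(\theta_i/2)\bigr| \;\ge\; \frac{1}{2} + \frac{1}{2N}\sum_{i=1}^N \cos\theta_i.
\end{equation*}

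The key observation is that the half-angle identity $1+\cos\theta_i = 2\cos^2(\theta_i/2)$ rewrites the right-hand side as $\frac{1}{N}\sum_i \cos^2(\theta_i/2)$, so it suffices to verify the pointwise inequality $|\cos(\theta_i/2)| \ge \cos^2(\theta_i/2)$. This holds for every real $\theta_i$ because $|\cos(\theta_i/2)|\le 1$, hence $\cos^2(\theta_i/2)=|\cos(\theta_i/2)|^2\le |\cos(\theta_i/2)|$. Summing over $i$ and dividing by $N$ delivers the bound. Equality holds iff $|\cos(\theta_i/2)|\in\{0,1\}$ for every $i$, which corresponds to full phase concentration ($ITC=1$, $\overline{DMP}=0$), recovering the advertised tight case. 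The largest admissible value of the bound is $\tfrac12$, attained when $ITC=0$.

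The main obstacle is essentially a packaging one: one must recognize that the half-angle identity is precisely what matches $\cos\theta_i$ in the ITC with $|\cos(\theta_i/2)|$ appearing in each $DMP$ term, after which the inequality collapses to the trivial $x\ge x^2$ for $x\in[0,1]$. A minor caveat I would flag is that the rotation step requires $\bar\theta$ to be well defined, i.e.\ the resultant vector to be nonzero; when it vanishes, the $DMP$ itself is undefined, matching the restriction already imposed in Section~\ref{sec:dmp} that $\overline{DMP}$ is only computed at times where the Rayleigh test rejects uniformity.
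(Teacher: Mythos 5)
Your proof is correct, and it takes a genuinely different route from the paper's. The paper works directly with the resultant vectors: it rewrites $1-\overline{DMP}$ as $\frac{1}{2N}\sum_i\left|vec(\theta_i)+vec(\bar{\theta})\right|$, applies the triangle inequality to pull the sum inside the norm, and then simplifies $\left|\mathbf{R}+N\,vec(\bar{\theta})\right|$ using $vec(\bar{\theta})=\mathbf{R}/(N\bar{R})$; no trigonometry or normalization is needed. You instead normalize by a rotation so that $\bar{\theta}=0$, compute each term in closed form via $\tfrac{1}{2}|e^{j\theta_i}+1|=|\cos(\theta_i/2)|$, rewrite the right-hand side with the half-angle identity, and reduce everything to the pointwise inequality $x\ge x^2$ for $x\in[0,1]$. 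The two arguments are closely related at bottom --- your pointwise step $|\cos(\theta_i/2)|\ge\cos^2(\theta_i/2)$ is exactly $|z|\ge\mathrm{Re}(z)$ for $z=e^{j\theta_i}+1$, i.e.\ a term-by-term projection onto the mean direction, whereas the paper applies one global triangle inequality to the whole sum --- but yours is not a rephrasing: it yields a clean equality characterization ($|\cos(\theta_i/2)|\in\{0,1\}$ for every $i$, which by the way admits mixed configurations such as two phases at $\bar{\theta}$ and one at $\bar{\theta}+\pi$, not only the fully concentrated case your parenthetical highlights), while the paper's vector argument is shorter and avoids the case analysis about when the rotation is legitimate. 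Your caveat that the whole statement presupposes a nonzero resultant vector is well placed and matches the paper's own restriction on when the mean direction, and hence the \gls{DMP}, is defined.
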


\begin{proof}

We first rewrite DMP 

\begin{center}
\begin{eqnarray}
DMP(\theta_i|\{\theta_1,\ldots,\theta_N\})&=&CV(\theta_i,\bar{\theta}(\theta_1,\ldots,\theta_N))\nonumber\\
&=&1-\bar{R}(\theta_i,\bar{\theta}(\theta_1,\ldots,\theta_N))\nonumber\\
&=&1-\frac{1}{2}|\mathbf{R}(\theta_i,\bar{\theta}(\theta_1,\ldots,\theta_N))|\nonumber\\
&=&1-\frac{1}{2}|vec(\theta_i)+vec(\bar{\theta}(\theta_1,\ldots,\theta_N))|
\label{eq:dmpRewritten}
\end{eqnarray}
\end{center}

\noindent The first equality comes from Eq.~\ref{eq:dmp}, the second one from
Eq.~\ref{eq:cv}, the third one from Eq.~\ref{eq:meanResultantLength}, and the
fourth one from Eq.~\ref{eq:resultantVector}. We also rewrite
$vec(\bar{\theta}(\theta_1,\ldots,\theta_N))$:

\begin{center}
\begin{eqnarray}
vec(\bar{\theta}(\theta_1,\ldots,\theta_N))&=&vec(\arg(\mathbf{R}(\theta_1,\ldots,\theta_N)))\nonumber\\
&=&\frac{\mathbf{R}(\theta_1,\ldots,\theta_N)}{|\mathbf{R}(\theta_1,\ldots,\theta_N)|}\nonumber\\
&=&\frac{\mathbf{R}(\theta_1,\ldots,\theta_N)}{N\bar{R}(\theta_1,\ldots,\theta_N)|}
\label{eq:vecMeanPhaseRewritten}
\end{eqnarray}
\end{center}

\noindent The first equality comes from Eq.~\ref{eq:meanDirection}, the second
one from the fact that $vec$ applied to $\arg$ of any vector gives the
vector normalized to unit length, the third equality comes from
Eq.~\ref{eq:meanResultantLength}. Then,

\begin{center}
\begin{eqnarray}
1-\overline{DMP}&=&1-\frac{1}{N}\Sigma_{i=1}^NDMP(\theta_i|\{\theta_1,\ldots,\theta_N\})\nonumber\\
&=&\frac{1}{2N}\Sigma_{i=1}^N|vec(\theta_i)+vec(\bar{\theta}(\theta_1,\ldots,\theta_N))|\nonumber\\
&\ge&\frac{1}{2N}\left|\Sigma_{i=1}^Nvec(\theta_i)+N\,vec(\bar{\theta}(\theta_1,\ldots,\theta_N))\right|\nonumber\\
&=&\frac{1}{2N}\left|\mathbf{R}(\theta_1,\ldots,\theta_N)+\frac{\mathbf{R}(\theta_1,\ldots,\theta_N))}{\bar{R}(\theta_1,\ldots,\theta_N)}\right|\nonumber\\
&=&\frac{1}{2N}\left(1+\frac{1}{\bar{R}(\theta_1,\ldots,\theta_N)}\right)\left|\mathbf{R}(\theta_1,\ldots,\theta_N)\right|\nonumber\\
&=&\frac{1}{2}\left(1+\frac{1}{\bar{R}(\theta_1,\ldots,\theta_N)}\right)\bar{R}(\theta_1,\ldots,\theta_N)\nonumber\\
&=&\frac{1}{2}\left(\bar{R}(\theta_1,\ldots,\theta_N)+1\right)\nonumber\\
&=&\frac{ITC}{2}+\frac{1}{2}
\label{eq:boundAvgDMPByITC}
\end{eqnarray}
\end{center}

\noindent The first equality comes from the definition of $\overline{DMP}$, the
second one from Eq.~\ref{eq:dmp}, the third one from the triangle inequality,
the fourth one from Eq.~\ref{eq:resultantVector} and
Eq.~\ref{eq:vecMeanPhaseRewritten}, the fifth one from taking
$\mathbf{R(\theta_1,\ldots,\theta_N)}$ as common factor, the sixth one from
Eq.~\ref{eq:meanResultantLength}, the seventh one from distributing
$\bar{R}(\theta_1,\ldots,\theta_N)$, and the eight one from Eq.~\ref{eq:itc}.
The proposition follows by re-arranging Eq.~\ref{eq:boundAvgDMPByITC}.
\end{proof}

Because $0\le DMP\le 1$ it follows that $0\le \overline{DMP}\le 1$. When \gls{ITC} is
maximal (i.e., \gls{ITC}=1) Proposition~\ref{prop:upperBoundAvgDMP} shows that
$\overline{DMP}\le0$, thus $\overline{DMP}$ must achieve it minimal value
(i.e., $\overline{DMP}=0$). Also, this proposition shows that 1/2 is an upper
bound for $\overline{DMP}$, corresponding to zero \gls{ITC}.

\subsubsection{Linear regression model}
\label{sec:linearRegressionModel}

We used a linear regression model to decode the \gls{SFPD} of
trial $n$, $y[n]$, from the samples of \gls{DMP} values from this trial,
$x[n,\cdot]$.

\begin{eqnarray}
\hat{y}[n,\mathbf{w}]=\sum\limits_{k=1}^Kw[k]x[n, k]
\label{eq:linearRegressionModel}
\end{eqnarray}

\noindent where $\hat{y}[n,\mathbf{w}]$ is the decoded \gls{SFPD},
$w[k]$ is a regression coefficient, $x[n,k]$ is the \gls{DMP} value for trial
$n$ at sample time $k$, and $K$ is the number of sample points in the
500~ms-long time window following the presentation of standards for which the
distribution of phases across trials was significantly different from the
uniform distribution (p\textless 0.01; Rayleigh test).

\subsubsection{Method to estimate the coefficients in the linear regression model}
\label{sec:lmEstimationMethod}

We seek coefficients  $\mathbf{w}$ in the linear regression model
(Eq.~\ref{eq:linearRegressionModel}) such that the decodings of the
model, $\hat{y}[\cdot,\mathbf{w}]$, are as close as possible to the experimental
\glspl{SFPD}, $y[\cdot]$. Mathematically, we estimate the regression
coefficients $\mathbf{w}$ that minimize the least-squares error function

\begin{eqnarray}
\text{MSE}(\mathbf{w})=\sum\limits_{n=1}^N(y[n]-\hat{y}[n,\mathbf{w}])^2
\label{eq:mseFunction}
\end{eqnarray}

\noindent where $N$ is the number of epochs. A difficulty in this estimation is
that \gls{DMP} values at neighboring sample points are highly correlated, and
correlations increase the variance of ordinary least-squares estimates. In
addition, the number of coefficients, $K$ in (\ref{eq:linearRegressionModel}),
is comparable to, and in some cases even larger than, the number of epochs, $N$
in~(\ref{eq:mseFunction}). This further increases the variance of the ordinary
least-squares estimates, and when $K>N$ these estimates are not even defined.
To address these problems here we use ridge regression~\citep[][Section
3.4.1]{hastieEtAl09}, which adds a penalty constraint to the least-squares
error function in Eq.~\ref{eq:mseFunction}, shrinking coefficients estimates
toward zero and therefore reducing their variability:

\begin{eqnarray}
\text{RMSE}(\mathbf{w})=\text{MSE}(\mathbf{w})+\alpha||\mathbf{w}||^2
\label{eq:ridgeRegressionErrorFunction}
\end{eqnarray}

\noindent where $\alpha$ determines the strength of the constraint (i.e., for
larger values of $\alpha$ the penalty constraint more strongly bias the
coefficient estimates away from minimizing the least-squares error function in
Eq.~\ref{eq:mseFunction} and toward zero).  We call
Eq.~\ref{eq:ridgeRegressionErrorFunction} the ridge-regression error function.

To find the optimal value $\mathbf{w}$ in
Eq.~\ref{eq:ridgeRegressionErrorFunction} we took a Bayesian
approach~\citep{bishop06}. We used a Gaussian likelihood function:

\begin{eqnarray}
P(\mathbf{y}|\mathbf{w}, \tau, \Phi)=N(\mathbf{y}|\Phi\mathbf{w},\tau^{-1}I)
\label{eq:likelihoodFunction}
\end{eqnarray}

\noindent where $\Phi$ is the matrix of \gls{DMP} values (i.e.,
$\Phi[n,k]=x[n,k]$) and $\tau$ is the constant precision of $\mathbf{y}$. We
chose a normal-gamma prior for $\mathbf{w}$ and $\tau$:

\begin{eqnarray}
P(\mathbf{w},\tau|\alpha)=N(\mathbf{w}|\mathbf{0},(\tau\alpha)^{-1}I)\;\text{Gam}(\tau|a_0,b_0)
\label{eq:priorWTau}
\end{eqnarray}

\noindent a Gamma hyper-prior for the hyper-parameter $\alpha$:

\begin{eqnarray}
P(\alpha)=\text{Gam}(\alpha|c_0,d_0)
\label{eq:priorAlpha}
\end{eqnarray}

\noindent and searched for the values of $\mathbf{w}$ that
maximized the the log of the posterior distribution:

\begin{eqnarray}
\text{J}(\mathbf{w})=\log P(\mathbf{w}|\mathbf{y},\tau,\alpha,\Phi)
\label{eq:logPosterior}
\end{eqnarray}

As we prove in Proposition~\ref{prop:minRidgeEquivalentMAP}, with
this choice of likelikhood function and priors, finding the coefficients that
maximize the log of the posterior distribution in Eq.~\ref{eq:logPosterior} is
equivalent to finding the coefficients that minimize the ridge-regression
error function in Eq.~\ref{eq:ridgeRegressionErrorFunction}, as we set
to do at the beginning of this section.

Due to the large dimensionality of $\mathbf{w}$, evaluating the posterior
distribution in Eq.~\ref{eq:logPosterior} is not feasible and one needs to
resort to approximation schemes~\citep[][Chapter 10]{bishop06}. These
approximations can be stochastic or deterministic. In principle, stochastic
approximations, such as Markov Chain Monte
Carlo~\citep{metropolisAndUlam49}, can give exact evaluations of the
posterior distribution, given infinite computational resources, but are often
restricted to small-scale problems.  Deterministic approximations, are based on
approximations of the posterior distribution, and although can never generate
exact results, scale well for large problems. Here we use the Variational Bayes
deterministic approximation~\citep{bishop06}, and implement it as described
in~\citet{rapela16-techReportVBLR}.

We applied a logarithmic transformation to the dependent variable, $y[n]$, in
order to equalize the variance of the residuals~\citep[][Section
3.9]{kutnerEtAl05}, and standardized the dependent variable and regressors to
have zero mean and unit variance~\citep[][Section 7.5]{kutnerEtAl05}.  Highly
influential trials (i.e., trials with a Cook's distance larger than 4/(N-K-1),
where N and K are the number of trials and regressors) were deleted before
estimating the model parameters \citep[][Section 10.4]{kutnerEtAl05}.

\begin{minRidgeEquivalentMAP}
\label{prop:minRidgeEquivalentMAP}

Given the likelihood function in Eq.~\ref{eq:likelihoodFunction}, and the
priors in Eq.~\ref{eq:priorWTau} and Eq.~\ref{eq:priorAlpha}, the coefficients
$\mathbf{w}$ maximizing the logarithm of the posterior distribution
$J(\mathbf{w})$ in
Eq.~\ref{eq:logPosterior} minimize the ridge-regression error function
$\text{RMSE}(\mathbf{w})$ in
Eq.~\ref{eq:ridgeRegressionErrorFunction}.

\end{minRidgeEquivalentMAP}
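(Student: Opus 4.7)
The plan is to apply Bayes' rule, take logarithms, and show that the log of the posterior distribution differs from the negative of the ridge-regression error function by a constant (with respect to $\mathbf{w}$) and a positive multiplicative factor. Since maximizing a function equals minimizing its negative, and since positive scaling and additive constants do not affect the location of an optimum, this will yield the desired equivalence.

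First, I would write Bayes' rule as $P(\mathbf{w}\mid\mathbf{y},\tau,\alpha,\Phi) \propto P(\mathbf{y}\mid\mathbf{w},\tau,\Phi)\,P(\mathbf{w}\mid\tau,\alpha)$, where the $\mathbf{w}$-conditional prior $P(\mathbf{w}\mid\tau,\alpha)=N(\mathbf{w}\mid\mathbf{0},(\tau\alpha)^{-1}I)$ is read off from the normal--gamma joint prior in Eq.~\ref{eq:priorWTau} by absorbing the $\text{Gam}(\tau\mid a_0,b_0)$ factor into the proportionality constant (it does not depend on $\mathbf{w}$). Taking logs then gives $J(\mathbf{w}) = \log P(\mathbf{y}\mid\mathbf{w},\tau,\Phi) + \log P(\mathbf{w}\mid\tau,\alpha) + C$, where $C$ collects all terms independent of $\mathbf{w}$.

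Next, I would expand each Gaussian log-density, keeping only $\mathbf{w}$-dependent terms. The likelihood of Eq.~\ref{eq:likelihoodFunction} contributes $-\frac{\tau}{2}\lVert\mathbf{y}-\Phi\mathbf{w}\rVert^2 = -\frac{\tau}{2}\,\text{MSE}(\mathbf{w})$, using the definition $\Phi[n,k]=x[n,k]$ which makes $(\Phi\mathbf{w})[n]=\hat{y}[n,\mathbf{w}]$ from Eq.~\ref{eq:linearRegressionModel}. The prior on $\mathbf{w}$ contributes $-\frac{\tau\alpha}{2}\lVert\mathbf{w}\rVert^2$. Combining and factoring $\tau/2$ yields
\begin{equation*}
J(\mathbf{w}) = -\frac{\tau}{2}\bigl(\text{MSE}(\mathbf{w})+\alpha\lVert\mathbf{w}\rVert^2\bigr) + C = -\frac{\tau}{2}\,\text{RMSE}(\mathbf{w}) + C,
\end{equation*}
by comparison with Eq.~\ref{eq:ridgeRegressionErrorFunction}.

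Finally, since $\tau>0$ (being a precision) and $C$ does not depend on $\mathbf{w}$, the mapping $\mathbf{w}\mapsto J(\mathbf{w})$ is a strictly decreasing affine function of $\mathbf{w}\mapsto \text{RMSE}(\mathbf{w})$, so any maximizer of $J$ is a minimizer of $\text{RMSE}$ and conversely, which is the conclusion of the proposition. There is no real obstacle here; the only subtlety worth being explicit about is the justification for treating $\tau$ and $\alpha$ as fixed (they are conditioned upon in $J$), and for absorbing the gamma factor of the normal--gamma joint prior into the $\mathbf{w}$-independent constant when passing from the joint prior on $(\mathbf{w},\tau)$ to the conditional prior on $\mathbf{w}$ alone.
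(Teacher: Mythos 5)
Your proposal is correct and follows essentially the same route as the paper's proof: apply Bayes' rule to reduce the posterior to the likelihood times the $\mathbf{w}$-dependent part of the prior, expand the Gaussian log-densities, discard $\mathbf{w}$-independent terms, and observe that the remaining expression is $-\tfrac{\tau}{2}\,\text{RMSE}(\mathbf{w})$ up to an additive constant, so the positive factor $\tau/2$ and the constant do not move the optimum. The only cosmetic difference is that you phrase the conclusion as an affine relation $J(\mathbf{w})=-\tfrac{\tau}{2}\,\text{RMSE}(\mathbf{w})+C$ while the paper chains $\argmax$/$\argmin$ equalities; the substance is identical.
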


\begin{proof}

We first rewrite the joint pdf $P(\mathbf{y},\mathbf{w},\tau,\alpha,\Phi)$ as:

\begin{center}
\begin{eqnarray}
P(\mathbf{y},\mathbf{w},\tau,\alpha,\Phi)&=&P(\mathbf{y}|\mathbf{w},\tau,\Phi)P(\mathbf{w},\tau|\alpha)P(\alpha)\nonumber\\
&=&N(\mathbf{y}|\Phi\mathbf{w},\tau^{-1}I)N(\mathbf{w}|\mathbf{0},(\tau\alpha)^{-1}I)\text{Gam}(\tau|a_0,b_0)\text{Gam}(\alpha|c_0,d_0)\nonumber\\
&=&\prod\limits_{n=1}^NN(y[n]|\langle\mathbf{x}[n,\cdot],\mathbf{w}\rangle,\tau^{-1})\nonumber\\
& &\prod\limits_{k=1}^KN(w[k]|0,(\tau\alpha)^{-1})\nonumber\\
& &\text{Gam}(a_0,b_0)\text{Gam}(c_0,d_0)\nonumber\\
&=&\prod\limits_{n=1}^N\frac{1}{\sqrt{2\pi\tau^{-1}}}
 \exp{\left(-\frac{(y[n]-\langle\mathbf{x}[n,\cdot],\mathbf{w}\rangle)^2}{2\tau^{-1}}\right)}\nonumber\\
& &\prod\limits_{k=1}^K\frac{1}{\sqrt{2\pi(\tau\alpha)^{-1}}}
   \exp{\left(-\frac{w[k]^2}{2(\tau\alpha)^{-1}}\right)}\nonumber\\
& &\text{Gam}(\tau|a_0,b_0)\text{Gam}(\alpha|c_0,d_0)
\label{eq:jointPDFRewritten}
\end{eqnarray}
\end{center}

\noindent The first equality comes from applying Bayes rule, the second one
from substituting Eq.~\ref{eq:likelihoodFunction}, Eq.~\ref{eq:priorWTau}, and
Eq.~\ref{eq:priorAlpha} into the first equality, the third one from the fact
that the Gaussian distributions in the second equality are independent, and the
last one from the definition of a Gaussian distribution. 

\begin{center}
\begin{eqnarray}
\argmax\limits_\mathbf{w}J(\mathbf{w})&=&\argmax\limits_\mathbf{w}\log P(\mathbf{w}|\mathbf{y},\tau,\alpha,\Phi)\nonumber\\
&=&\argmax\limits_\mathbf{w}\log P(\mathbf{y},\mathbf{w},\tau,\alpha,\Phi)\nonumber\\
&=&\argmax\limits_\mathbf{w}\left[\sum\limits_{n=1}^N-\frac{(y[n]-\langle\mathbf{x}[n,\cdot],\mathbf{w}]\rangle)^2}{2\tau^{-1}}+\sum\limits_{k=1}^K-\frac{w[k]^2}{2(\tau\alpha)^{-1}}\right]\nonumber\\
&=&\argmin\limits_\mathbf{w}\left[\sum\limits_{n=1}^N\frac{(y[n]-\langle\mathbf{x}[n,\cdot],\mathbf{w}]\rangle)^2}{2\tau^{-1}}+\sum\limits_{k=1}^K\frac{w[k]^2}{2(\tau\alpha)^{-1}}\right]\nonumber\\
&=&\argmin\limits_\mathbf{w}\left[\sum\limits_{n=1}^N(y[n]-\langle\mathbf{x}[n,\cdot],\mathbf{w}]\rangle)^2+\sum\limits_{k=1}^K\frac{w[k]^2}{\alpha^{-1}}\right]\nonumber\\
&=&\argmin\limits_\mathbf{w}\left[\text{MSE}(\mathbf{w})+\alpha||\mathbf{w}||^2\right]\nonumber\\
&=&\argmin\limits_\mathbf{w}\text{RMSE}(\mathbf{w})\nonumber
\end{eqnarray}
\end{center}

\noindent The first equality comes from the definition of $J(\mathbf{w})$ in
Eq.~\ref{eq:logPosterior}, 
the second one from the fact that, by Bayes rule,
$P(\mathbf{w}|\mathbf{y},\tau,\alpha,\Phi)$ equals
$P(\mathbf{y},\mathbf{w},\tau,\alpha,\Phi)$ times a factor independent of
$\mathbf{w}$ and thus
$\argmax P(\mathbf{w}|\mathbf{y},\tau,\alpha,\Phi)=\argmax P(\mathbf{y},\mathbf{w},\tau,\alpha,\Phi)$,
the third one by discarding the terms not including $\mathbf{w}$ in
Eq.~\ref{eq:jointPDFRewritten} and taking the logarithm,
the fourth one from the fact that the maximum of a function is the minimum of
the negative of that function,
the fifth one from the fact that $\argmin$ of a function does not change by
multiplying this function by a constant,
the sixth one from the definition of $\text{MSE}(\mathbf{w})$ in
Eq.~\ref{eq:mseFunction},
and the final one from the definition of $\text{RMSE}(\mathbf{w})$ in 
Eq.~\ref{eq:ridgeRegressionErrorFunction}.
\end{proof}

\subsubsection{Adjustment for multiple comparisons in correlation tests}
\label{sec:multipleComparisons}

For a family of hypothesis $H_1$ vs.\ $H'_1$, $H_2$ vs.\ $H'_2$, \ldots, $H_k$
vs.\ $H'_k$, we aim to control the Familywise Error Rate (FWE) defined as
FWE=P(Reject at least one $H_i\vert$all $H_i$ are true). For this we define
adjusted p-values, $\tilde{p}_i,\;i=1,\ldots,k$, and we reject $H_i$ at
FWE=$\alpha$ if and only if $\tilde{p}_i\leq\alpha$. We denote observed values
with lowercase and the corresponding random variables with uppercase.  The
definition of the adjusted p-value, $\tilde{p}_i$, should guarantee that
FWE=$\alpha$. That is, $\alpha$=FWE=P(Reject at least one $H_i$ at level
$\alpha\vert$all $H_i$ are true)=P($\min_{1\leq j\leq
k}\tilde{P}_j\leq\alpha\vert$all $H_i$ are true).  We define
$\tilde{p}_i=P(\min_{1\leq j\leq k}P_j\leq p_i\vert$all $H_i$ are true), where
$p_i$ is the observed non-adjusted p-value. When the p-values $P_j$ are
independent it can be shown \citep[][p.~47]{westfallAndYoung93} that the
previous definition leads to an exact multiple comparison method (i.e.,
FWE=$\alpha$). 

In our tests of correlation, the hypothesis $H_i$ ($H'_i$) states that the
correlation coefficient between two sequences $\{a_1, \ldots, a_n\}$ and
$\{b_1, \ldots, b_n\}$ corresponding to the ith cluster, \gls{standardModality}, and
\gls{attendedModality} is zero (different from zero).
To compute the adjusted p-value $\tilde{p}_i$ we use a resampling procedure. We first
generate nResamples=5,000 samples of $\min_{1\leq j\leq k}P_j$ under the null
hypothesis that all $H_i$ are true and then estimate $\tilde{p}_i$ as the proportion of samples
smaller or equal than the observed non-adjusted p-value $p_i$.
To generate a sample of $\min_{1\leq j\leq k}P_j$ under the null hypothesis,
for each cluster, \gls{standardModality}, and \gls{attendedModality} we: (1)
obtain the sequences $\{a_1, \ldots, a_n\}$ and $\{b_1, \ldots, b_n\}$ to test
for significant correlation (e.g., $a_i$=``correlation coefficient between
models'
decodings and \glspl{SFPD} for a subject $i$'' and $b_i$=``error rate for
subject $i$'' in Figure~\ref{fig:correlationsWithBehavior}a), (2) shuffle the
sequence $\{b_1, \ldots, b_n\}$ (to be under the null hypothesis), (3) compute
the p-value of the correlation coefficient between the $\{a_1, \ldots, a_n\}$ and
the shuffled $\{b_1, \ldots, b_n\}$. The sample of $\min_{1\leq j\leq k}P_j$
under the null hypothesis is then the minimum of all p-values generated in step
(3) across all clusters, \glspl{standardModality}, and
\glspl{attendedModality}.

We adjusted for multiple comparison correlation tests between (a) models'
decodings and \glspl{SFPD} (Figure~\ref{fig:exampleSingleTrialAnalysis}f,
colored dots in Figure~\ref{fig:clusters},
Table~\ref{table:nModelsSignCorPredictionsVsSFPDs}), (b) models' decoding
powers and  subjects' behavioral measures
(Figure~\ref{fig:correlationsWithBehavior}, daggers in
Figure~\ref{fig:clusters}, Tables~\ref{table:statsErrorRates}
and~\ref{table:statsMeanRTs}), and (c) subjects' peak \gls{ITC} values and
subjects' behavioral measures (Figure~\ref{fig:clustersITC},
Tables~\ref{table:peakITCStatsErrorRates} and~\ref{table:peakITCStatsMeanRTs}).
We begun by defining the family of hypothesis $H_1$ vs. $H'_1$, \ldots, $H_k$
vs. $H'_k$ in the definition of the FWE, see above.
In (a) we used 19 families of hypothesis, one per subject. The family for
subject $s$ contained hypothesis concerning correlation coefficients between
model decodings and \glspl{SFPD}, of all models for subject $s$ across
clusters, \glspl{standardModality}, and \glspl{attendedModality}, that were
significantly different from the intercept only model (p\textless0.01;
likelihood-ratio permutation test, Section~\ref{sec:additionalStatInfo}).
The null ith hypothesis for subject $s$ was $H^s_i$=\{the correlation
coefficient between models' decodings and \glspl{SFPD} for subject $s$ and the
ith combination of cluster, \gls{standardModality}, and \gls{attendedModality}
is zero\}. The mean and standard deviation of the number of pairs of hypothesis
in a family of hypothesis for a subject was $29.37\pm 12.64$.
In (b) we used one family of hypothesis for error rates and another one for
mean reaction times. Each of these families contained hypothesis regarding the
correlation between  models' decoding powers and subjects' behavioral measures
(error rates and mean reaction times) across all clusters,
\glspl{standardModality}, and \glspl{attendedModality}.  Each of these
correlations was evaluated across all models corresponding to a cluster,
\gls{standardModality}, and \gls{attendedModality}, as in
Figure~\ref{fig:correlationsWithBehavior}.  The ith null hypothesis for
behavioral measure $b$ was $H^b_i$=\{the correlation coefficient between
models' decoding accuracies and subjects' behavioral measure b for the ith
combination of cluster, \gls{standardModality}, and \gls{attendedModality}
equals is zero\}. Each family contained 56 pairs of hypothesis (14 clusters
$\times$ 2 \glspl{standardModality} $\times$ 2 \glspl{attendedModality}).
The multiple comparisons procedure in (c) was as that in (b) but using
\gls{ITC} values at the corresponding peak \gls{ITC} frequencies
(Section~\ref{sec:itcAndPeakITCFrequency}) instead of models' decoding powers.

\subsubsection{Calculation of robust correlation coefficients and corresponding
p-values}
\label{sec:robustCorrelationCoefficientsAndPValues}

We used skipped measures of correlation~\citep{wilcox12} to characterize the
association between pairs of variables in a way that was resistant to the
presence of outliers. The calculation of these measures, and the estimation
of their bootstrap confidence intervals, followed the procedures described in
\citep{wilcox12,pernetEtAl13}.

Skipped correlations are obtained by checking for the presence of outliers,
removing them, and applying some correlation coefficient to the remaining
data~\cite[][Chapter 9]{wilcox12}. We identified and removed bivariate outliers. Bivariate outliers were found using a projection method.
The underlying idea behind projection methods is that, if a mutlivariate data
point is an outlier, then it should be an outliers along some one-dimensional
projection of all data points. For each data point $\mathbf{X}_i$, the
projection method used in this manuscript orthogonally projected all data
points onto the line joining a robust estimate of the center of the data
cloud (given by the medians of the correlated variables) and $\mathbf{X}_i$,
and then found the outliers of this one-dimensional projection using the
boxplot rule \citep[which is based on interquartiles
distances,][]{friggeEtAl89}. A data point was declared outlier if it was an
outlier in any one-dimensional projection.
Bivariate outliers
were selected using the function \texttt{outpro} freely available from the web site of Dr.\ Rand Wilcox at
\url{http://dornsife.usc.edu/labs/rwilcox/software/}.

To estimate the association between single-trial decoding errors and
\glspl{SFPD} (e.g., Figure~\ref{fig:exampleSingleTrialAnalysis}f), we
used the skipped Pearson correlation coefficient (i.e., after outlier removal,
the remaining data points were correlated using the Pearson product moment
correlation coefficient; function \texttt{pcor} from Dr.\ Wilcox's website).
The use of this correlation coefficient requires that the marginals of the
data are approximately normal, as it was the case in the previous data. 

The averaged behavioral data tended to be bimodally distributed, with a group
of subjects displaying better performance and another group showing worse
performance. Thus, the averaged behavioral data was not approximately normal,
and it was not possible to use the skipped Pearson correlation coefficient to
asses its association with models' decodings. Instead, we quantified this
association using the skipped Spearman correlation coefficient (i.e., after
outlier removal, the remaining data points were correlated using the Spearman
rank correlation coefficient; function \texttt{spear} from Dr.\ Wilcox's
website).  The use of this correlation coefficient does not assume that the
distribution of the correlated variables is bivariate normal, but requires that
their association be monotonic, as it was the case for our averaged data.

To compute p-values corresponding to hypothesis tests of robust correlation
coefficients we used permutation tests following the procedure given in
\citet{pernetEtAl13}.  Given a dataset of two sequences to correlate, we (1)
removed bivariate outliers from this dataset, (2) computed the correlation
coefficient between the sequences in the outliers-removed dataset (using
functions \texttt{pcor} and \texttt{spear} from Dr.~Wilcox's website for Pearson
and Spearman correlation coefficients, respectively), (3) computed
nResamples=5000 correlation coefficients between the first sequence and a
random permutation of the second sequence in the outliers-removed dataset
(again using functions \texttt{pcor} or \texttt{spear}), and (4) we calculated a
two-sided p-value as the proportion of correlation coefficients computed in
step (3) whose absolute value was larger than that of the correlation
coefficient computed in step (2).

\subsubsection{ANOVA procedure}
\label{sec:anova}

The ANOVA procedures described below started with a model with a large number
of independent factors. We selected a subset of these factors using a stepwise
procedure based on the Akaike Information Criterion, as implemented in the
function \texttt{stepAIC} of the package \texttt{MASS} of
\texttt{R}~\citep{r12}. We then tested for significant differences in the mean
of the dependent variable across the selected subset of independent factors. We
report the F-value and p-value of all significant effects. For the significant
main effects we also report the p-value of posthoc tests for differences
across the levels of the corresponding factors, and the sign of these
differences.

\vspace{.5\baselineskip}
\noindent\textbf{Attentional effects in ANOVAs}

\noindent Attentional effects in the \gls{SFP} effect on \gls{ITPC} could
appear in ANOVAs as significant differences between models estimated from
attended and unattended stimuli of the same or different
\glspl{standardModality} (e.g., a significant difference between models
estimated from epochs aligned to the presentation time of visual standards
with attention directed to the visual or auditory modality, or a significant
difference between models estimated with attention directed to the visual
modality from epochs aligned to the presentation of visual or auditory
standards). Here we only study attentional effects in ANOVAs by comparing
models estimated from attended and unattended stimuli of the same
\gls{standardModality}. We do so because differences in models estimated from
stimuli of different \glspl{standardModality} could be due to differences in
brain responses to stimuli of distinct modalities and unrelated to attention.

As mentioned in Section~\ref{sec:characterizedEpochs}, for any
\gls{standardModality}, the mean number of epochs aligned to the presentation
of attended stimuli was larger than the mean number of epochs aligned to the
presentation of unattended stimuli. This difference invalidates the comparison
between models estimated from attended and unattended stimuli of the same
\gls{standardModality}, since any discrepancy between these models could
simply be due to differences in the amount of data used to estimate them.
Thus, attentional effects in the \gls{SFP} effect on \gls{ITPC} were
investigated using ANOVAs on datasets where the number of epochs aligned to
attended and unattended stimuli had been equalized, as described in
Section~\ref{sec:characterizedEpochs}.

\vspace{.5\baselineskip}
\noindent\textbf{Non-repeated-measures ANOVA}

\noindent We performed non-repeated-measures ANOVAs using the function
\texttt{aov} of the package \texttt{stats} of \texttt{R}~\citep{r12}.
We removed from the analysis those trials for which the absolute value of the
standardized residuals was greater than the 95\% quantile of the normal
distribution. 
When a significant main effect was observed in the ANOVA we created a set of
confidence intervals on the differences between the means of the levels of the
corresponding
factor with 95\% family-wise probability of coverage using the Tukey's `Honest
Significance Difference' method, as implemented in the function
\texttt{TukeyHSD} of the package \texttt{stats} of \texttt{R}~\citep{r12}.
For the pairs of levels with a significant difference in the mean of the
dependent variable we reported the p-value, corrected for multiple comparison,
of the test of the hypothesis that the previous difference is distinct from
zero. We also reported the sign of this difference extracted from the sign of
the corresponding confidence interval.

\vspace{.5\baselineskip}
\noindent\textbf{Repeated-measures ANOVA}

\noindent We performed repeated-measures ANOVAs by estimating mixed-effects
models [Pinheiro and Bates, 2000] with the function \texttt{lme} of package
nlme of \texttt{R}~\citep{r12}.
Trials for which the absolute value of the normalized residuals (i.e.,
standardized residuals pre-multiplied by the inverse square-root factor of the
estimated error correlation matrix) was greater than the 95\% quantile of the
normal distribution were removed from the analysis.
When a significant main effect was observed in the ANOVA, we
performed posthoc hypothesis tests using the function \texttt{glht} from the
package \texttt{multcomp} of \texttt{R}~\citep{r12}. This function controls
for multiple comparisons by using multivariate t, or Gaussian, distributions
to compute p-values associated with multiple hypothesis~\citep{bretzEtAl10}.

\subsubsection{Further information on ANOVAs}
\label{sec:anovaFurtherInfo}

\vspace{.5\baselineskip} \noindent\textbf{Maximum
SFPD}~(Section~\ref{sec:selectionMaxSFPD}):
We performed an initial repeated-measures ANOVA (see \emph{Repeated-measures
ANOVA} in Section~\ref{sec:anova}) using data from all models
significantly different from the intercept-only model (p\textless0.01,
likelihood-ratio permutation test, Section~\ref{sec:additionalStatInfo}) with
the natural logarithm of the maximum \gls{SFPD} as dependent variable (we
selected this transformation of the dependent variable to normalize the
distribution of the ANOVA residuals). The
input to the stepwise selection procedure was a model with
\gls{standardModality}, \gls{attendedModality}, their interaction, and cluster
of \glspl{IC} as independent variables. This procedure  selected a model with
\gls{standardModality}, \gls{attendedModality}, and their interaction as
independent variables. An ANOVA on the selected model showed significant main
effects for \gls{attendedModality} (F(1,615)=15.15, p=0.0001) and for the
interaction between \gls{standardModality} and \gls{attendedModality}
(F(1,515)=0.0122, p=0.0122). Since this interaction indicates an attentional
effect, we repeated the previous analysis in a dataset with equalized number
of epochs aligned to the presentation of attended an unattended standards (see
\emph{Attentional effect in ANOVAs} in Section~\ref{sec:anova}).  With the
equalized dataset, the model selection procedure selected a model with only
\gls{attendedModality} as independent factor, and an ANOVA on this model
showed a significant main effect of this factor (F(1,579)=8.45, p=0.0038).  A
posthoc analysis indicated that the maximum \gls{SFPD} was shorter for the
visual than the auditory \gls{attendedModality} (z=2.91, p=0.0018).

\vspace{.5\baselineskip} \noindent\textbf{Absolute value of the correlation
coefficient between models' decoding accuracies and subjects averaged
behavioral measures}~(Section~\ref{sec:correlationsWithBehavior}):
We performed a non-repeated-measures ANOVA (see \emph{Non-repeated-measures
ANOVA} in Section~\ref{sec:anova}) using data from all models
significantly different from the intercept-only model (p\textless0.01,
likelihood-ratio permutation test, Section~\ref{sec:additionalStatInfo}) with
the absolute value of the correlation coefficient between models' decoding
accuracies and subjects mean reaction times or error rates as dependent
variable.  The input to the stepwise selection procedure was a model with
\gls{standardModality}, \gls{attendedModality}, their interaction, cluster of
\glspl{IC}, and type of behavioral measure (i.e., error rate or mean reaction
times) as independent variables. This procedure  selected a model with
\gls{standardModality} and type of behavioral measures as independent
variables. An ANOVA on the selected model showed significant main effects of
\gls{attendedModality} (F(1,109)=8.22, p=0.005) and of the type of behavioral
measure (F(1,109)=7.26, p=0.0082). A posthoc analysis showed that the mean of
the absolute value of the correlation coefficient was larger for the visual
than the auditory \gls{standardModality} (p=0.005; Tukey test) and larger for
error rates than mean reaction times (p=0.0082; Tukey test). 

\vspace{.5\baselineskip}
\noindent\textbf{SFPD}~(Section~\ref{sec:characterizedEpochs}):
We performed a repeated-measures ANOVA using data from all epochs with the
cubic root of the \gls{SFPD} as dependent variable (we selected this
transformation of the dependent variable to normalize the distribution of the
ANOVA residuals).  The input to the stepwise selection procedure was a model
with \gls{standardModality}, \gls{attendedModality}, and their interaction as
independent variables. This procedure  selected a model with all independent
variables.  An ANOVA on the selected model showed a significant main effect
of \gls{attendedModality} (F(1,16264)=234.31, p\textless 0.0001).  A
posthoc analysis showed that the mean \gls{SFPD} was shorter in epochs where
attention was directed to the visual than to the auditory modality (z=15.31;
p\textless 2e-16).

\vspace{.5\baselineskip} \noindent\textbf{Time of Peak
Coefficient}~(Section~\ref{sec:timing}):
We performed a first repeated-measures ANOVA using
data from all models significantly different from the intercept-only model
(p\textless0.01, likelihood-ratio permutation test,
Section~\ref{sec:additionalStatInfo}) with the time of the largest peak as
dependent variable.  The input to the stepwise selection procedure was a model
with \gls{standardModality}, \gls{attendedModality}, group of clusters (as
shown in Figure~\ref{fig:clusters}), and their triple interaction as
independent variables. This procedure  selected all independent variables and
their interactions, but in a subsequent ANOVA only the effect due to the
interaction between \gls{standardModality} and groups of clusters (F(4,
230)=2.9026, p=0.0227) and the main effect of \gls{attendedModality} (F(1,
230)=4.2590, p=0.0402) remained significant. A posthoc analysis on
\gls{attendedModality} showed that the peak occurred earlier for auditory than
visual attention (z=1.842, p=0.0327, left black asterisk in
Figure~\ref{fig:timing}a). To disentangle the interaction between
\gls{standardModality} and group of clusters we performed a second and a third
repeated-measures ANOVA using only models corresponding to visual and auditory
standards, respectively.

The input to the stepwise procedure for the second repeated-measures ANOVA
for models corresponding to visual standards was a
model with peak time as dependent variable and \gls{attendedModality}, group of
clusters, and their interaction as independent variables. This procedure
selected all independent variables and interactions, but a subsequent ANOVA
showed  that only the main effects of \gls{attendedModality} (F(1,135)=11.1939,
p=0.0011) and groups of clusters (F(4,135)=4.8073, p=0.0022) were significant.
A posthoc analysis revealed that the peak of coefficients occurred earlier when
attention was oriented to the visual than auditory modality (z=3.512,
p=0.000455) and in central than occipital brain regions (z=4.163, p=3.14e-5,
red asterisks in Figure~\ref{fig:timing}b).

The input to the stepwise procedure for the third repeated-measures ANOVA
for models corresponding to auditory standards was
the same as that for models corresponding to the visual modality. This
procedure found no significant variable.

From Figure~\ref{fig:timing} it is apparent that the strongest modulations of
the timing of the \gls{SFP} effect on \gls{ITPC} occur in the central group of
clusters. To test this observation statistically, we performed a fourth
repeated-measures ANOVA restricted to models corresponding to the central group
of clusters. The input to the stepwise procedure was a model with peak time as
dependent variable and \gls{standardModality}, \gls{attendedModality}, and
their interaction as independent variables. This procedure selected the two
independent variables and their interaction.  However, the significance of the
interaction term disappeared when we repeated this analysis with a dataset with
equal number of epochs aligned to the presentation of attended and unattended
standards (see \emph{Attentional effect in ANOVAs} in Section~\ref{sec:anova}),
so we did not used the interaction term in the following.  A repeated-measures
ANOVA found significant main effects for \gls{standardModality} (F(1,
43)=14.2173, p=0.0005) and for \gls{attendedModality} (F(1, 43)=4.5061,
p=0.0396). A posthoc analysis showed that the peak coefficient occurred earlier
in models corresponding to auditory standards (z=3.406, p=0.000659; black
asterisks in Figure~\ref{fig:timing}b), and in models estimated with attention
directed to the auditory modality (z=2.178, p=0.029360; black asterisk next to
\emph{Central} cluster group in Figure~\ref{fig:timing}a).

\subsubsection{Identification of peaks in coefficients}
\label{sec:identificationOfCoefsPeaks}

We identified positive and negative peaks in the coefficients of decoding
models significantly different from the intercept-only model (p\textless 0.01,
likelihood-ratio permutation test, Section~\ref{sec:additionalStatInfo}). We
only searched for peaks at coefficients significantly different from zero (95\%
bootstrap CI for regression coefficients,
Section~\ref{sec:additionalStatInfo}), and occurring after 200~ms (to avoid
possible effects from the \gls{warningSignal}).  We filtered continuous subsets
of the coefficients with a finite impulse response filter of type I, order two,
and cutoff frequency of three times the peak \gls{ITC} frequency. We selected
in the filtered coefficients the local maxima larger than zero (local minima
smaller than zero) as the positive (negative) peaks.  From all positive and
negative peaks, and from all continuous subsets of the coefficients, we
selected the peak with maximum absolute value.

\subsubsection{Additional statistical information}
\label{sec:additionalStatInfo}

Most statistical figures presented in this paper are based on the bootstrap
method~\citep{efronAndTibshirani93}.

\noindent\textbf{Crossvalidated decodings} All models decodings were cross
validated using the leave-one-out method with the function \texttt{crossval}
of the package \texttt{bootstrap} of \texttt{R}~\citep{r12}.

\noindent\textbf{95\% bootstrap CIs for regression coefficients}
(Figure~\ref{fig:exampleSingleTrialAnalysis}e) We performed 2,000 ordinary
bootstrap resamples of the trials (with the function \texttt{boot} of the
package \texttt{boot} of \texttt{R}~\citep{r12}), and for each resample we
estimated a set of regression coefficients, as described in
Section~\ref{sec:linearRegressionModel}. Having estimated 2,000 sets of
regression coefficients, we computed 95\% percentile confidence intervals with
the function \texttt{boot.ci} of package \texttt{boot} of
\texttt{R}~\citep{r12}.

\noindent\textbf{95\% bootstrap CI for difference in paired means} of
correlation coefficients obtained from the original versus a surrogate dataset
(Figure~\ref{fig:controls}).  For each pair of correlation coefficients
(obtained by correlating \glspl{SFPD} and their decodings in the
original and surrogate datasets) we subtracted the correlation coefficient from
the original minus the surrogate dataset. Then we performed 2,000 ordinary
resamples (with the function \texttt{boot} of the package \texttt{boot} of
\texttt{R}~\citep{r12}) and computed a 95\% percentile bootstrap confidence
interval with the function \texttt{boot.ci} of package \texttt{boot} of
\texttt{R}~\citep{r12}.

\noindent\textbf{95\% bootstrap CI for difference in averaged DMP} between
trials with the longest and shortest \gls{SFPD}
(Figures~\ref{fig:avgDMPsAndCoefs}a-c). We calculated 2,000 stratified
bootstrap resamples of the 20\% of trials with the shortest and longest
\glspl{SFPD}, using the function \texttt{boot} of the package
\texttt{boot} of \texttt{R}~\citep{r12} with the \texttt{strata} option).  For
each resample, and for each sample point, we subtracted from the mean \gls{DMP}
across the 20\% of trials with longest \gls{SFPD} the mean \gls{DMP}
across the 20\% of trials with shortest \gls{SFP}.  Having estimated 2,000
bootstrap differences in average \gls{DMP} values between trials with the
longest and shortest \gls{SFPD}, we computed a 95\% percentile
bootstrap confidence interval for these differences with the function
\texttt{boot.ci} of package \texttt{boot} of \texttt{R}~\citep{r12}.

\noindent\textbf{Likelihood-ratio permutation test for linear model} We used it
to test the alternative hypothesis, H1, that there is association between the
dependent and independent variables of a linear regression model against the
alternative one, H0, that there is no association. 
The statistic for this test is the logarithm of the ratio between the
likelihood of the data given a full model and that given a null model.
A full model associates all independent variables with the dependent one, as in~(\ref{eq:linearRegressionModel}), while a null model associates only a
constant term with the dependent variable. The likelihood of the data given a
model is derived from (\ref{eq:likelihoodFunction}).
The test was conducted as follows. We first measured the value of the test
statistic in the original dataset, $T_{obs}$. Then, we built the distribution
of the test statistic under H0, by generating R=2000 datasets where there was no
association between the dependent and independent variables (by permuting the
dependent value assigned to independent values) and measuring the test
statistic in these datasets. The one-sided p-value of the test is the
proportion of samples of the test statistic under H0 that are larger than
$T_{obs}$.

\noindent\textbf{Normalized cross correlation}
(Section~\ref{sec:directEvidence}) We normalized the cross correlation (at lag
zero) between two time series in such a way that the normalized correlation of
a time series with itself is one. We defined:

\begin{eqnarray*}
ncor(x,y)=\frac{cor(x,y)}{\sqrt{cor(x,x)}\sqrt{cor(y,y)}}
\end{eqnarray*}

\noindent where $x$ and $y$ are two time series, and $ncor(x,y)$ and $cor(x,y)$
are the normalized and unnormalized correlations, respectively, between these
time series.

\subsection{Supplementary figures}
\label{sec:supplementaryFigures}

Figure~\ref{fig:avgDMPsC04Visualswitchvision} shows average \glspl{DMP} for the
20\% trials with the shortest and longest \glspl{SFPD} in \glspl{IC} from the
left parieto-occipital cluster~04..
Figure~\ref{fig:exampleForeperiodEffectsOnRTAndDetection} plots examples of
deviant foreperiod effects on reaction times and detectability.
Figure~\ref{fig:clusters3D} plots axial, coronal, and sagital views of
all clusters.
Figure~\ref{fig:histDLatenciesAndSFPDs} plots the distribution of deviant
latencies and \glspl{SFPD}.

\begin{sidewaysfigure}
\begin{center}
\includegraphics{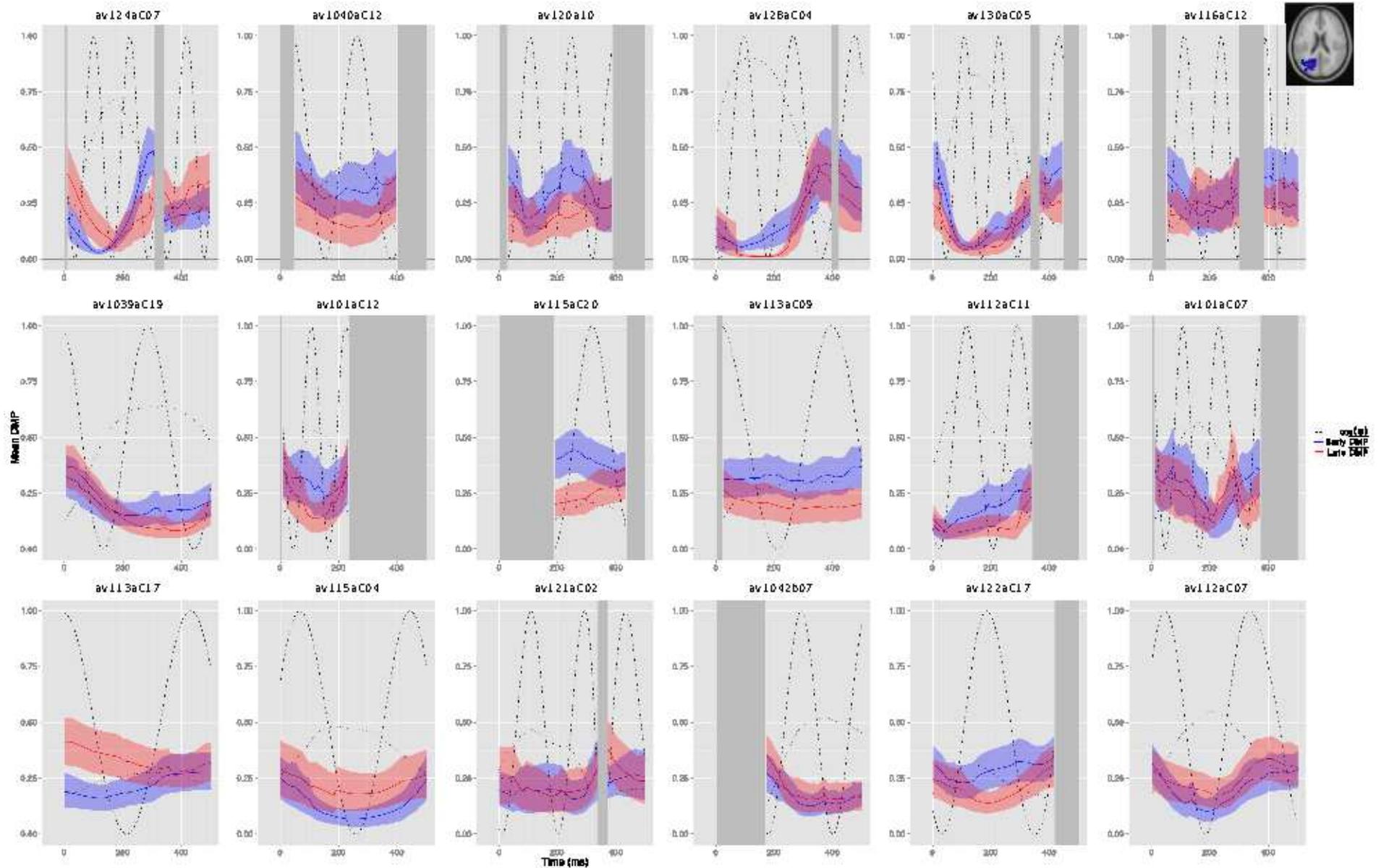}

\caption{Average \glspl{DMP} for the 20\% trials with the shortest (blue
traces) and longest (red traces) \glspl{SFPD} computed from trials from the
left-parieto-occipital cluster~04 (top-right inset) and attended visual stimuli.
Panels are sorted from left to right and top to bottom by decreasing
decoding power of the corresponding model. These averages suggest that the
averaged \gls{DMP} oscillates at a frequency around 1~Hz and that the phase
of these oscillations at time zero is different between trials closest and
furthest away from the \gls{warningSignal}.}

\label{fig:avgDMPsC04Visualswitchvision}
\end{center}
\end{sidewaysfigure}

\begin{figure}
\begin{center}
\includegraphics{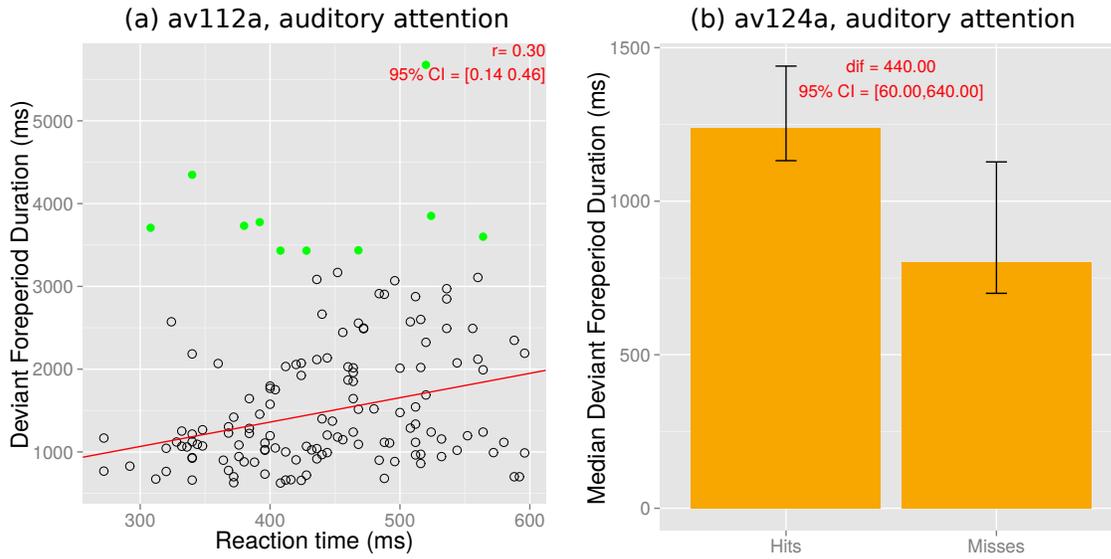}

\caption{Examples of significant deviant foreperiod effects on reaction times
(a) and detectability (b). (a) Deviant foreperiod durations as a function of
subject reaction times, for subject av112a and the auditory
\gls{attendedModality}. Points colored in green indicate outliers detected in
the computation of the robust correlation coefficient
(Section~\ref{sec:robustCorrelationCoefficientsAndPValues}). (b) Median deviant
foreperiod duration for hits and misses, for subject av124a and the auditory
\gls{attendedModality}.}

\label{fig:exampleForeperiodEffectsOnRTAndDetection}
\end{center}
\end{figure}

\begin{figure}
\begin{center}
\includegraphics{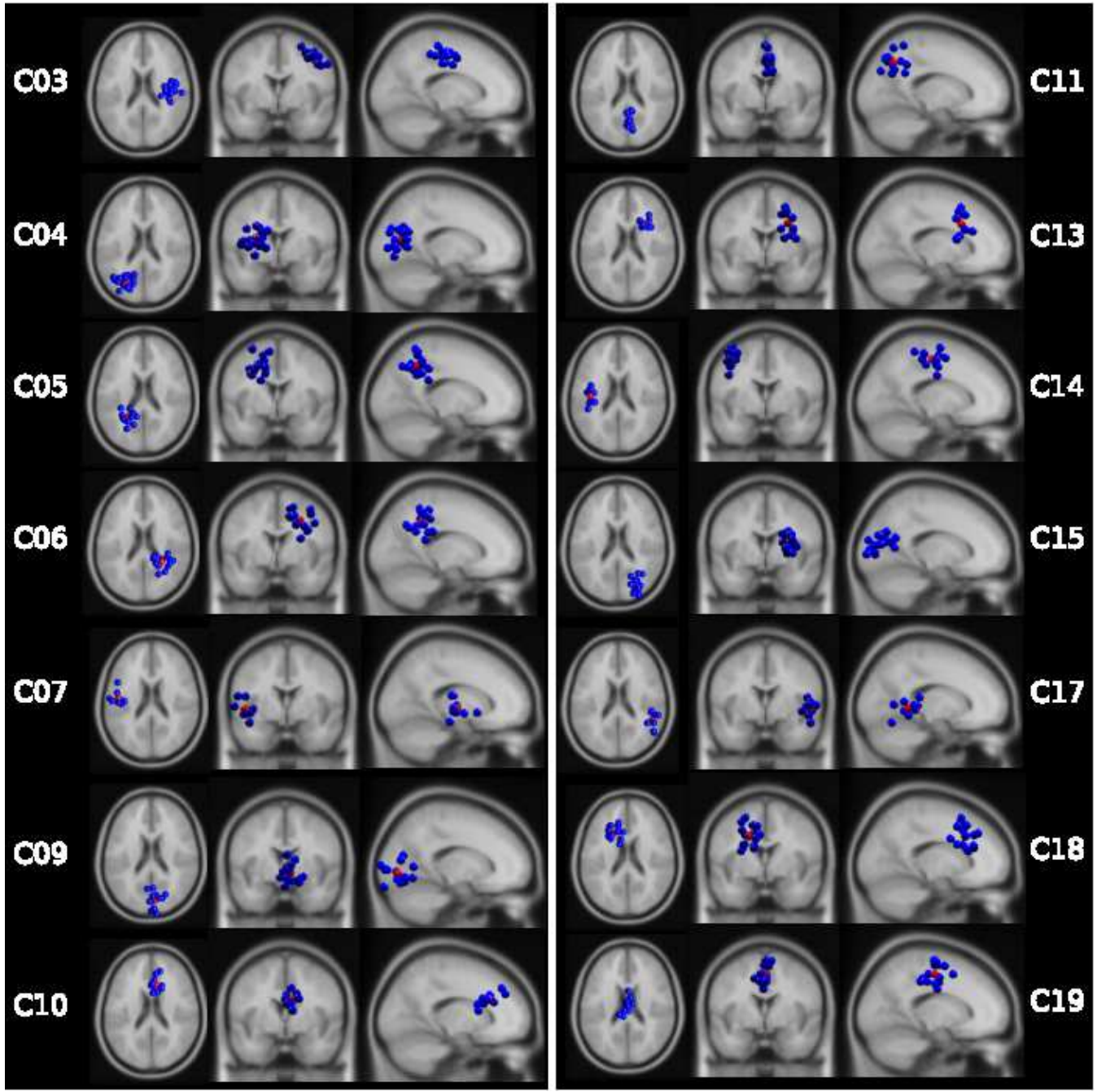}

\caption{Clusters if \glspl{IC}.  Left, center, and right columns correspond to
axial, coronal, and sagital views, respectively.}

\label{fig:clusters3D}
\end{center}
\end{figure}

\begin{figure}
\begin{center}
\includegraphics{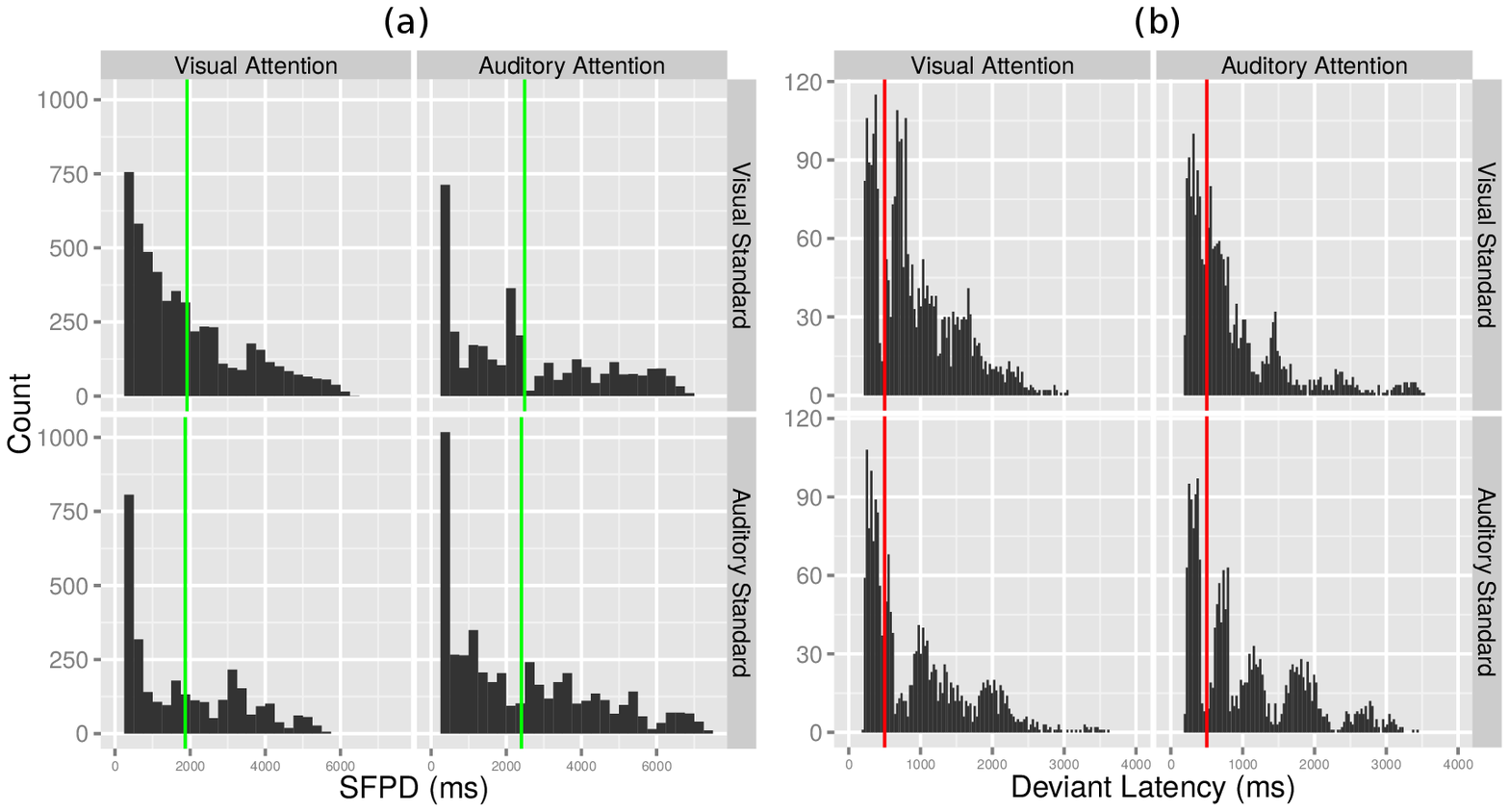}

\caption{Distributions of \glspl{SFPD} (a) and of latencies of deviants (b)
across \glspl{standardModality} (rows) and \glspl{attendedModality} (columns). 
(a) The mean \gls{SFPD} in epochs corresponding to visual attention (1.9~sec,
left panels in (a)) is significantly shorter than that in epochs corresponding
to auditory attention (2.4~sec, right panels in (a)), as for the optimal
maximum \glspl{SFPD} of models (see Section~\ref{sec:selectionMaxSFPD}).
(b) To avoid possible response-related movements artifacts in the EEG, epochs
including deviants with latencies shorter than 500~ms were excluded from the
analysis.  The number of such deviants is given by the sum of counts to the left
of the red vertical lines in (b).}

\label{fig:histDLatenciesAndSFPDs}
\end{center}
\end{figure}

\subsection{Supplementary tables}
\label{sec:supplementaryTables}

Table~\ref{table:clustersInfo} provides the Talairach coordinates and
anatomical labels of the centroids of the clusters in
Figure~\ref{fig:clusters}. For each cluster, \gls{standardModality}, and
\gls{attendedModality}, Table~\ref{table:nModelsSignCorPredictionsVsSFPDs}
gives the number of models with decodings significant correlated with
\glspl{SFPD}.

\begin{table}[ht]

\caption{Information about clusters of \glspl{IC}. The anatomical labels
associated with the Talairach coordinates of the clusters' centroids
were extracted using the Talairach
client~\citep{lancasterEtAl97,lancasterEtAl00}}

\small
\begin{center}
\begin{tabular}{|c|c|c|c|c|c|c|c|c|c|}\hline
\multicolumn{1}{|c}{Cluster} & \multicolumn{1}{|c}{No.} &
\multicolumn{1}{|c}{No.} & \multicolumn{3}{|c}{Talairach} &
\multicolumn{1}{|c}{Hemisphere} & \multicolumn{1}{|c}{Lobe} &
\multicolumn{1}{|c}{Gyrus} & \multicolumn{1}{|c|}{Brodmann}\\

\multicolumn{1}{|c}{} & \multicolumn{1}{|c}{Subjects} & 
\multicolumn{1}{|c}{ICs} & \multicolumn{1}{|c}{X} & \multicolumn{1}{|c}{Y} & \multicolumn{1}{|c}{Z} &
\multicolumn{1}{|c}{} & \multicolumn{1}{|c}{} &
\multicolumn{1}{|c}{} & \multicolumn{1}{|c|}{Area}\\ \hline \hline

3 & 12 & 17 & 45 & -13 & 43 & Right & Frontal Lobe & Precentral Gyrus & 4\\
\hline
4 & 14 & 18 & -19 & -74 & 9 & Left & Occipital Lobe & Cuneus & 17\\ \hline
5 & 11 & 14 & -14 & -42 & 33 & Left & Limbic Lobe & Cingulate Gyrus & 31\\
\hline
6 & 12 & 17 & 43 & -39 & 34 & Right & Parietal Lobe & Supramarginal Gyrus &
40\\ \hline
7 & 8 & 9 & -38 & -18 & 1 & Left & Sub-lobar & Claustrum & *\\ \hline
8 & 3 & 4 & 56 & -1 & 12 & Right & Frontal Lobe & Precentral Gyrus & 6\\ \hline
9 & 12 & 14 & 3 & -62 & 7 & Right & Limbic Lobe & Posterior Cingulate & 30\\ \hline
10 & 9 & 10 & 6 & 17 & 18 & Right & Limbic Lobe & Anterior Cingulate & 33\\ \hline
11 & 12 & 13 & 8 & -46 & 28 & Right & Parietal Lobe & Precuneus & 31\\ \hline
12 & 3 & 4 & 7 & 18 & -50 & \multicolumn{4}{|c|}{No Gray Matter found} \\ \hline
13 & 11 & 12 & 30 & 10 & 34 & Right & Frontal Lobe & Middle Frontal Gyrus & 8\\
\hline
14 & 10 & 13 & -35 & -25 & 53 & Left & Frontal Lobe & Precentral Gyrus & 4\\ \hline
15 & 16 & 20 & 23 & -87 & 10 & Right & Occipital Lobe & Middle Occipital Gyrus
& 18\\ \hline
16 & 4 & 5 & 14 & 45 & -3 & Right & Limbic Lobe & Anterior Cingulate & 32\\ \hline
17 & 7 & 9 & 48 & -33 & 4 & Right & Temporal Lobe & Superior Temporal Gyrus & 22\\ \hline
18 & 12 & 15 & -6 & 21 & 31 & Left & Limbic Lobe & Cingulate Gyrus & 32\\
\hline
19 & 12 & 14 & 2 & -7 & 42 & Right & Limbic Lobe & Cingulate Gyrus & 24\\
\hline
\end{tabular}

\end{center}
\normalsize

\label{table:clustersInfo}
\end{table}

\begin{table}

\caption{Number (n) and proportion (\%) of models with significant correlations
(adj\_p\textless0.05) between models' decodings and experimental \glspl{SFPD}. Cells highlighted
in color corresponds to clusters, standard modalities, and attended modalities
with more than 40\% of significant correlations}

\begin{center}
\begin{tabular}{|c|c|c c|c c|}\hline
\multicolumn{1}{|c}{Cluster} & \multicolumn{1}{|c|}{Standard}  & \multicolumn{2}{c|}{Visual Attention}  & \multicolumn{2}{c|}{Auditory Attention}\\
\multicolumn{1}{|c}{} & \multicolumn{1}{|c|}{Modality}  & \multicolumn{1}{c}{n} & \multicolumn{1}{c|}{\%} & \multicolumn{1}{c}{n} & \multicolumn{1}{c|}{\%}\\ \hline \hline
03 & Visual & 05 & 0.29\% & 06 & 0.35\% \\
 & Auditory & 02 & 0.12\% & 02 & 0.12\% \\
\hline
04 & Visual & \cellcolor{NavyBlue}{08} & \cellcolor{NavyBlue}{0.44\%} & \cellcolor{NavyBlue}{08} & \cellcolor{NavyBlue}{0.44\%} \\
 & Auditory & 02 & 0.11\% & 05 & 0.28\% \\
\hline
05 & Visual & 04 & 0.29\% & 02 & 0.14\% \\
 & Auditory & 01 & 0.07\% & 03 & 0.21\% \\
\hline
06 & Visual & 04 & 0.24\% & 03 & 0.18\% \\
 & Auditory & 05 & 0.29\% & 06 & 0.35\% \\
\hline
07 & Visual & 01 & 0.11\% & 01 & 0.11\% \\
 & Auditory & 00 & 0.00\% & 03 & 0.33\% \\
\hline
09 & Visual & \cellcolor{NavyBlue}{06} & \cellcolor{NavyBlue}{0.43\%} & 05 & 0.36\% \\
 & Auditory & 03 & 0.21\% & 05 & 0.36\% \\
\hline
10 & Visual & 01 & 0.10\% & 02 & 0.20\% \\
 & Auditory & 00 & 0.00\% & \cellcolor{NavyBlue}{05} & \cellcolor{NavyBlue}{0.50\%} \\
\hline
11 & Visual & \cellcolor{NavyBlue}{06} & \cellcolor{NavyBlue}{0.46\%} & 03 & 0.23\% \\
 & Auditory & 03 & 0.23\% & 04 & 0.31\% \\
\hline
13 & Visual & 03 & 0.25\% & 02 & 0.17\% \\
 & Auditory & 03 & 0.25\% & 03 & 0.25\% \\
\hline
14 & Visual & 04 & 0.31\% & 05 & 0.38\% \\
 & Auditory & 04 & 0.31\% & 02 & 0.15\% \\
\hline
15 & Visual & \cellcolor{NavyBlue}{11} & \cellcolor{NavyBlue}{0.55\%} & 07 & 0.35\% \\
 & Auditory & 01 & 0.05\% & 07 & 0.35\% \\
\hline
17 & Visual & 02 & 0.22\% & 02 & 0.22\% \\
 & Auditory & 03 & 0.33\% & 03 & 0.33\% \\
\hline
18 & Visual & 01 & 0.07\% & 03 & 0.20\% \\
 & Auditory & \cellcolor{NavyBlue}{06} & \cellcolor{NavyBlue}{0.40\%} & 03 & 0.20\% \\
\hline
19 & Visual & 04 & 0.29\% & 02 & 0.14\% \\
 & Auditory & 03 & 0.21\% & 04 & 0.29\% \\
\hline
\end{tabular}

\end{center}
\label{table:nModelsSignCorPredictionsVsSFPDs}
\end{table}

\begin{table}[ht]

\caption{Correlations between the decoding accuracy of models and subjects'
detection error rates. The decoding accuracy of a model is quantified with the
correlation coefficient between decodings of the model and experimental
\glspl{SFPD}. Each cell shows the correlation coefficient, r, and
p-values unadjusted, p, and adjusted, adj\_p, for multiple comparisons. Blue
cells highlight correlations with p\textless0.05 and the red cell corresponds
to adj\_p=0.05.}

\begin{center}
\begin{tabular}{|c|c|c c c|c c c|}\hline
\multicolumn{1}{|c}{Cluster} & \multicolumn{1}{|c|}{Standard}  & \multicolumn{3}{c|}{Visual Attention}  & \multicolumn{3}{c|}{Auditory Attention}\\
\multicolumn{1}{|c}{} & \multicolumn{1}{|c|}{Modality}  & \multicolumn{1}{c}{r} & \multicolumn{1}{c}{p} & \multicolumn{1}{c|}{adj\_p} & \multicolumn{1}{c}{r} & \multicolumn{1}{c}{p} & \multicolumn{1}{c|}{adj\_p}\\ \hline \hline
03 & Visual & 0.29 & 0.3054 & 1.00 & 0.30 & 0.2714 & 1.00 \\
 & Auditory & -0.07 & 0.8022 & 1.00 & -0.20 & 0.4766 & 1.00 \\
\hline
04 & Visual & \cellcolor{NavyBlue}{-0.64} & \cellcolor{NavyBlue}{0.0048} & \cellcolor{NavyBlue}{0.38} & 0.16 & 0.5212 & 1.00 \\
 & Auditory & 0.06 & 0.8360 & 1.00 & -0.12 & 0.6572 & 1.00 \\
\hline
05 & Visual & -0.50 & 0.0916 & 1.00 & \cellcolor{NavyBlue}{-0.70} & \cellcolor{NavyBlue}{0.0162} & \cellcolor{NavyBlue}{0.77} \\
 & Auditory & -0.22 & 0.4846 & 1.00 & -0.10 & 0.7612 & 1.00 \\
\hline
06 & Visual & \cellcolor{NavyBlue}{-0.54} & \cellcolor{NavyBlue}{0.0336} & \cellcolor{NavyBlue}{0.95} & -0.33 & 0.2068 & 1.00 \\
 & Auditory & 0.15 & 0.5650 & 1.00 & -0.23 & 0.3694 & 1.00 \\
\hline
07 & Visual & \cellcolor{NavyBlue}{-0.74} & \cellcolor{NavyBlue}{0.0444} & \cellcolor{NavyBlue}{0.98} & 0.36 & 0.3490 & 1.00 \\
 & Auditory & -0.17 & 0.6548 & 1.00 & -0.59 & 0.1016 & 1.00 \\
\hline
09 & Visual & -0.20 & 0.4902 & 1.00 & -0.43 & 0.1456 & 1.00 \\
 & Auditory & -0.34 & 0.2412 & 1.00 & -0.25 & 0.3942 & 1.00 \\
\hline
10 & Visual & -0.29 & 0.4274 & 1.00 & 0.49 & 0.1706 & 1.00 \\
 & Auditory & -0.18 & 0.6250 & 1.00 & -0.05 & 0.8794 & 1.00 \\
\hline
11 & Visual & 0.47 & 0.1218 & 1.00 & -0.29 & 0.3104 & 1.00 \\
 & Auditory & -0.43 & 0.1388 & 1.00 & -0.34 & 0.2586 & 1.00 \\
\hline
13 & Visual & -0.19 & 0.5626 & 1.00 & 0.26 & 0.4390 & 1.00 \\
 & Auditory & -0.23 & 0.5264 & 1.00 & 0.11 & 0.7402 & 1.00 \\
\hline
14 & Visual & -0.20 & 0.4964 & 1.00 & 0.34 & 0.2938 & 1.00 \\
 & Auditory & 0.36 & 0.2404 & 1.00 & -0.21 & 0.5538 & 1.00 \\
\hline
15 & Visual & 0.23 & 0.3648 & 1.00 & -0.08 & 0.7376 & 1.00 \\
 & Auditory & -0.38 & 0.1298 & 1.00 & -0.09 & 0.7108 & 1.00 \\
\hline
17 & Visual & 0.06 & 0.8684 & 1.00 & -0.29 & 0.4408 & 1.00 \\
 & Auditory & 0.11 & 0.7900 & 1.00 & -0.08 & 0.8194 & 1.00 \\
\hline
18 & Visual & -0.10 & 0.7612 & 1.00 & -0.11 & 0.7054 & 1.00 \\
 & Auditory & 0.07 & 0.8038 & 1.00 & -0.25 & 0.3922 & 1.00 \\
\hline
19 & Visual & -0.07 & 0.8098 & 1.00 & \cellcolor{Red}{-0.91} & \cellcolor{Red}{0.0004} & \cellcolor{Red}{0.05} \\
 & Auditory & \cellcolor{NavyBlue}{-0.62} & \cellcolor{NavyBlue}{0.0216} & \cellcolor{NavyBlue}{0.85} & -0.07 & 0.8002 & 1.00 \\
\hline
\end{tabular}

\end{center}
\label{table:statsErrorRates}
\end{table}

\begin{table}[ht]

\caption{Correlations between the decoding accuracy of models and subjects' mean
reaction times. Same format as Table~\ref{table:statsErrorRates}.}

\begin{center}
\begin{tabular}{|c|c|c c c|c c c|}\hline
\multicolumn{1}{|c}{Cluster} & \multicolumn{1}{|c|}{Standard}  & \multicolumn{3}{c|}{Visual Attention}  & \multicolumn{3}{c|}{Auditory Attention}\\
\multicolumn{1}{|c}{} & \multicolumn{1}{|c|}{Modality}  & \multicolumn{1}{c}{r} & \multicolumn{1}{c}{p} & \multicolumn{1}{c|}{adj\_p} & \multicolumn{1}{c}{r} & \multicolumn{1}{c}{p} & \multicolumn{1}{c|}{adj\_p}\\ \hline \hline
03 & Visual & -0.30 & 0.2574 & 1.00 & 0.02 & 0.9392 & 1.00 \\
 & Auditory & -0.25 & 0.3218 & 1.00 & 0.16 & 0.5410 & 1.00 \\
\hline
04 & Visual & -0.15 & 0.6102 & 1.00 & 0.22 & 0.3654 & 1.00 \\
 & Auditory & -0.04 & 0.8712 & 1.00 & 0.14 & 0.5598 & 1.00 \\
\hline
05 & Visual & 0.13 & 0.6412 & 1.00 & \cellcolor{NavyBlue}{-0.57} & \cellcolor{NavyBlue}{0.0318} & \cellcolor{NavyBlue}{0.84} \\
 & Auditory & 0.09 & 0.7596 & 1.00 & -0.24 & 0.4150 & 1.00 \\
\hline
06 & Visual & -0.15 & 0.5712 & 1.00 & -0.14 & 0.5892 & 1.00 \\
 & Auditory & 0.48 & 0.0534 & 0.96 & -0.16 & 0.5414 & 1.00 \\
\hline
07 & Visual & 0.20 & 0.5940 & 1.00 & -0.12 & 0.7456 & 1.00 \\
 & Auditory & 0.11 & 0.7720 & 1.00 & -0.03 & 0.9420 & 1.00 \\
\hline
09 & Visual & -0.01 & 0.9686 & 1.00 & 0.09 & 0.7458 & 1.00 \\
 & Auditory & 0.13 & 0.6486 & 1.00 & 0.11 & 0.6886 & 1.00 \\
\hline
10 & Visual & -0.40 & 0.2706 & 1.00 & 0.54 & 0.1098 & 1.00 \\
 & Auditory & 0.36 & 0.3294 & 1.00 & -0.12 & 0.7308 & 1.00 \\
\hline
11 & Visual & 0.13 & 0.6674 & 1.00 & -0.02 & 0.9430 & 1.00 \\
 & Auditory & -0.45 & 0.1306 & 1.00 & 0.14 & 0.6354 & 1.00 \\
\hline
13 & Visual & 0.12 & 0.6910 & 1.00 & 0.58 & 0.0524 & 0.95 \\
 & Auditory & -0.01 & 0.9682 & 1.00 & 0.35 & 0.2622 & 1.00 \\
\hline
14 & Visual & -0.34 & 0.2598 & 1.00 & -0.03 & 0.9210 & 1.00 \\
 & Auditory & -0.25 & 0.4040 & 1.00 & -0.07 & 0.7996 & 1.00 \\
\hline
15 & Visual & -0.04 & 0.8536 & 1.00 & -0.00 & 0.9890 & 1.00 \\
 & Auditory & -0.07 & 0.7526 & 1.00 & -0.06 & 0.8010 & 1.00 \\
\hline
17 & Visual & -0.23 & 0.5416 & 1.00 & -0.07 & 0.8538 & 1.00 \\
 & Auditory & 0.03 & 0.9172 & 1.00 & -0.10 & 0.7754 & 1.00 \\
\hline
18 & Visual & \cellcolor{NavyBlue}{-0.56} & \cellcolor{NavyBlue}{0.0270} & \cellcolor{NavyBlue}{0.80} & 0.14 & 0.6256 & 1.00 \\
 & Auditory & 0.01 & 0.9728 & 1.00 & -0.13 & 0.6502 & 1.00 \\
\hline
19 & Visual & \cellcolor{NavyBlue}{-0.57} & \cellcolor{NavyBlue}{0.0336} & \cellcolor{NavyBlue}{0.86} & -0.35 & 0.2044 & 1.00 \\
 & Auditory & -0.16 & 0.5780 & 1.00 & 0.18 & 0.5258 & 1.00 \\
\hline
\end{tabular}

\end{center}
\label{table:statsMeanRTs}
\end{table}

\subsection{Code and sample data}

To facilitate the application of the single-trial decoding methods described
in this manuscript to other EEG studies, we provide the \texttt{R}~\citep{r12}
code implementing these methods, the EEG data from the left
parieto-occipital cluster~04, and instruction on how to use this code to
analyze this data at
\url{http://sccn.ucsd.edu/~rapela/avshift/codeAndSampleData/}

\end{document}